\newcommand\numberthis{\addtocounter{equation}{1}\tag{\theequation}}
\newtheorem{lem}{Lemma}
\newtheorem{rem}{Remark}
\newtheorem{theo}{Theorem}
\newtheorem{cor}{Corollary}
\newtheorem{pro}{Proposition}
\newtheorem{ex}{Example}
\newcommand{\algmargin}{\the\ALG@thistlm}
\newlength{\forwidth}
\algnewcommand{\parState}[1]{\State
  \parbox[t]{\dimexpr\linewidth-\algmargin}{\strut #1\strut}}
\newlength{\ifwidth}
\begin{document}

\title{Asymptotic Average Multiplicity of Structures within Different Categories of Trapping Sets, Absorbing Sets and Stopping Sets in Random Regular and Irregular LDPC Code Ensembles}
\author{Ali Dehghan, and Amir H. Banihashemi,\IEEEmembership{ Senior Member, IEEE}
\thanks{Part of this work is accepted for presentation at {\em ISIT 2018}, Vail, Colorado, USA, June 2018.}}
\maketitle


\begin{abstract}
The performance of low-density parity-check (LDPC) codes in the error floor region
is closely related to some substructures of the code's Tanner graph, collectively referred to as
{\it trapping sets (TSs)}. In this paper, we study the asymptotic average number of different types of trapping sets such as {\em elementary TSs (ETS)}, {\em leafless ETSs (LETS)},
{\em absorbing sets (ABS)}, {\em elementary ABSs (EABS)}, and {\em stopping sets (SS)},
in random variable-regular and irregular LDPC code ensembles. We demonstrate that, regardless of the type of the TS, as the code's length tends to infinity, the average number of a given structure tends to infinity,
to a positive constant, or to zero, if the structure contains no cycle, only one cycle, or more than one cycle, respectively. 
For the case where the structure contains a single cycle, we derive the asymptotic expected multiplicity of the structure by counting the average number of its 
constituent cycles and all the possible ways that the structure can be constructed from the cycle. This, in general, involves computing the expected number of cycles of a certain length with a certain given combination of node degrees,
or computing the expected number of cycles of a certain length expanded to the desired structure by the connection of trees to its nodes. 
The asymptotic results obtained in this work, which are independent of the block length and only depend on the code's degree distributions, are shown to be accurate even for finite-length codes.

\begin{flushleft}
\noindent {\bf Index Terms:}
Low-density parity-check (LDPC) codes, random LDPC codes, trapping sets (TS), elementary trapping sets (ETS), leafless elementary trapping sets (LETS), absorbing sets (ABS), elementary absorbing sets (EABS), stopping sets (SS).

\end{flushleft}

\end{abstract}

\section{introduction}

The performance of low-density parity-check (LDPC) codes in the error floor region is closely related to some substructures of the code's Tanner graph, here referred to collectively as {\em trapping sets (TS)}.
A trapping set ${\cal S}$ is often identified by its number of variable nodes $a$, and the number of unsatisfied check nodes $b$ in its subgraph.\footnote{The term ``unsatisfied'' is used for check nodes that are connected to an odd number of variable nodes in the trapping set. This term (in the case of binary LDPC codes) implies that if the variable nodes in the trapping set ${\cal S}$ are set to one, and all the other variable nodes in the Tanner graph are set to zero, then 
these check nodes will be unsatisfied, i.e., the modulo-2 addition of the variables connected to the check node is non-zero.} In this case, the set ${\cal S}$ is said to belong to the $(a,b)$ class.
For a given LDPC code, the harmful trapping sets would depend on the degree distributions of the code, the channel model, the decoding algorithm, the quantization scheme, and the structure of the code's Tanner graph. Some
categorizations of TSs include {\em stopping sets (SS)}, {\em elementary TSs (ETS)}, {\em leafless ETSs (LETS)}, {\em absorbing sets (ABS)}, and {\em elementary ABSs (EABS)}.
Stopping sets are known to be the problematic structures of the belief propagation algorithm over the binary erasure channel (BEC)~\cite{Di}. Leafless ETSs are relevant for
soft-decision iterative decoding of LDPC codes over the additive white Gaussian noise (AWGN) channel and have been widely studied in the literature, 
see, e.g.,~\cite{MR2991821},~\cite{KB},~\cite{hashemi2015new}, and the references therein. In fact, LETS structures appear to form an overwhelming majority of dominant trapping sets of 
variable-regular LDPC codes over the AWGN channel~\cite{hashemi2015new}. 
The broader category of ETSs are harmful sets for irregular codes over the AWGN channel~\cite{MR2991821},~\cite{Y-Arxiv}. Absorbing sets are the fixed points of bit-flipping decoding algorithms~\cite{xiao2007estimation},~\cite{dolecek2010analysis},
and are also shown to be relevant in the context of quantized decoders over the AWGN channel~\cite{dolecek2010analysis}.

Regardless of the type of the TS, the problem of counting and/or enumerating trapping sets in a given code is a hard problem.
It was shown in \cite{mcgregor2010hardness} that for a given $a$, finding an $(a,b)$ TS with the smallest $b$ is an NP-hard problem. Also, finding an $(a,b)$ ETS with the smallest $a$ for a given $b$ is a hard problem~\cite{mcgregor2010hardness}.
Very recently, similar hardness results were proved for LETSs and EABSs~\cite{dehghan2018hardness}.
Furthermore, it was proved in \cite{krishnan2007computing} that for a given Tanner graph $G$ and a positive
integer $t$, determining whether $G$ has a stopping set of size $t$ is NP-complete.
Despite the intrinsic difficulty of such combinatorial problems, many advances are made on characterization of trapping sets and the development of efficient search algorithms to find them~\cite{MR2582870},~\cite{rosnes2009efficient},~\cite{zhang2011efficient},  \cite{rosnes2012addendum},~\cite{kyung2012finding},~\cite{MR2991821}, \cite{rosnes2014minimum}, \cite{KB}, \cite{hashemi2015characterization}, \cite{hashemi2015new},~\cite{Y-Arxiv}.

More related to the work presented in this paper, asymptotic results on the distribution of trapping sets in different ensembles of LDPC codes were established in~\cite{BM},~\cite{orlitsky2005stopping},~\cite{MR2292873},~\cite{MR2810289},~
\cite{amiri2013asymptotic}.
The distribution of stopping sets in Tanner graph ensembles was first studied in \cite{BM}. In~\cite{orlitsky2005stopping}, using tedious combinatorial methods, Orlitsky {\it et al.} studied the asymptotic average distribution of stopping sets and the stopping number (the size of the smallest nonempty stopping set in a code) in regular and irregular LDPC code ensembles, as the code's block length $n$ tends to infinity.
In particular, they proved that for almost all codes with smallest variable degree greater than  two, stopping number increases linearly with $n$. Milenkovic {\it et al.}~\cite{MR2292873}
used random matrix enumeration techniques and large deviations theory to study the
asymptotic distribution of SSs and TSs of size constant and linear in $n$, in random
regular and irregular LDPC code ensembles. Specifically, for regular ensembles with variable degree $d_v$, they demonstrated that as $n$ tends to infinity, for constant $a$ and $b$ values,
the average number of ETSs in the $(a,b)$ class tends to
$c n^{a+\frac{b-ad_v}{2}}$, where $c$  is a constant with respect to $n$, but depends on $a$, $b$, and $d_v$.
Abu-Surra {\em et al.}~\cite{MR2810289} 
studied the ensemble stopping set and trapping set enumerators for protograph-based LDPC codes both in finite-length and asymptotic regimes. The focus in \cite{MR2810289}
was to determine whether or not the typical relative smallest size of trapping or stopping sets grows linearly with the block length.  
Using a technique similar to that of~\cite{MR2292873}, Amiri {\it et al.}~\cite{amiri2013asymptotic} studied the asymptotic distribution of ABSs and fully ABSs (FABSs) of size linear in $n$, in random
regular LDPC code ensembles. They also derived simplified formulas for enumerating the asymptotic average number of EABSs, and FEABSs in regular LDPC code ensembles with variable degrees $3$ and $4$.

In this paper, we focus on the case where $a$ and $b$ are constant values (with respect to $n$). This case is of particular practical interest, as in practice, the size of the most harmful trapping sets may not necessarily
increase with the code's block length.
For the scenario of constant $a$ and $b$ values, the limited existing results on the asymptotic average number of trapping sets, as discussed in the previous paragraph, are consistent with the results of our analysis.
In addition, for this scenario, our analysis both extends and deepens the results of the literature. For example, while the results in~\cite{MR2292873} are limited to TSs and ETSs, in this work, we also cover LETSs.
As an example of how the results of this work are more accurate and provide a deeper understanding about the TS structures, we consider the only result of~\cite{MR2292873} on $(a,b)$ ETSs with constant $a$ and $b$ values. This result, which is limited to biregular ensembles, indicate that the asymptotic expected number of $(a,b)$ ETSs is equal to $c n^{a+\frac{b-ad_v}{2}}$, for some constant $c$. For the interesting case where $b=a(d_v-2)$, this result implies that, on average, there are a constant (non-zero) number of ETSs within the $(a,b)$ class under consideration. In this work, we determine the exact value of this constant as a function of the graph's degree distributions and the values of $a$ and $b$. In addition, we obtain similar results for variable-regular and irregular ensembles.  
 
Another major difference in comparison with the existing literature~\cite{orlitsky2005stopping}, \cite{MR2292873}, \cite{amiri2013asymptotic}, is that rather than focusing on a class of trapping sets, 
we focus on individual non-isomorphic structures within a class. We show that
such structures, although in the same class, may demonstrate completely different asymptotic behavior. For example, the average number of one structure may tend to a constant as $n$ tends to infinity, while that of another structure, within the same class, may tend to infinity. Or the asymptotic average number of one structure may be zero, while that of another structure may be a non-zero constant. If one only considers the behavior of the whole class, in the former case, the average number tends to infinity and in the latter, to a constant. The granularity of our analysis is important since it is known that, in general, different non-isomorphic trapping sets in the same class can differ in their harmfulness~\cite{BS}. It is thus important
to know the average number of individual non-isomorphic TS structures within a code ensemble.  Finally, although the analysis in~\cite{orlitsky2005stopping},~\cite{MR2292873},~\cite{amiri2013asymptotic}, may be conceptually easy to understand, the derivations are tedious and the final formulas are often complicated. In contrast, at the core of our analysis is a simple, yet general, asymptotic result that the expected number of an structure ${\cal S}$ is equal to
$\Theta(n^{|V(\mathcal{S})|-|E(\mathcal{S})|})$,\footnote{We use the notation $f(x)=\Theta(g(x))$, if for sufficiently large values of $x$, we have $a \times g(x) \leq f(x) \leq b \times g(x)$, for some positive $a$ and $b$ values.}
where $|V(\mathcal{S})|$ and $|E(\mathcal{S})|$ are the number of vertices (nodes) and the number of edges of ${\cal S}$, respectively. This result is then easily translated to an asymptotic result on the average multiplicity of the structure
based on the number of cycles in the structure. For the number of cycles equal to zero, one, or more than one, the average multiplicity tends to infinity, a non-zero constant, or zero, respectively.

In an ensemble ${\cal E}$ of LDPC codes, for given $a$ and $b$ values, we say that the $(a,b)$ class has a {\it consistent behavior} in ${\cal E}$, if all the structures within the class demonstrate the same asymptotic behavior in ${\cal E}$, i.e., the average number of every non-isomorphic structure in the class tends to the same value (zero, infinity or a non-zero constant) in ${\cal E}$, as $n$ tends to infinity.
Otherwise, we say that the class has an {\it inconsistent behavior} in ${\cal E}$.
As an example of our results, we prove that in random variable-regular LDPC codes, every class of ETSs and every class of LETSs has a consistent behavior. In
irregular LDPC codes, however, classes of ETSs and LETSs have inconsistent behavior, in general.

In Table \ref{my-label}, we have summarized our results on the asymptotic average number of $(a,b)$ TSs in variable-regular ensembles of LDPC codes
with variable degree $d_v$, for different types of trapping sets, and for different values of $a$, $b$ and $d_v$. As we will discuss later, the asymptotic expected multiplicity of trapping sets depends on the value of $b/a$ in relation to $d_v-2$.
This can be seen in Table~\ref{my-label} through the distinction of different columns. 
In Table \ref{my-label}, the notation $N_{2a}$ denotes the asymptotic average number of cycles of length $2a$ in the Tanner graphs of the codes. The entry ``--'' in the table is used to
indicate that it is impossible to have trapping sets of a particular type in the corresponding classes. In Table \ref{my-label}, the value $N_{2a}$ is used as an asymptotic approximation for the average number of 
TS structures that are isomorphic to a simple cycle of length $2a$. The value $N_{2a}$ was computed (approximated) recently in~\cite{dehghan2016new} for random regular (irregular) Tanner graphs.

As another contribution of this work, as related to variable-regular graphs (Table~\ref{my-label}), we focus on structures which are not simple cycles but contain only a single (simple) cycle. These are the remaining structures (in addition to simple cycles) whose asymptotic expected multiplicity, we have proved to be a non-zero constant. For such structures, we first characterize them as a simple cycle appended by a number of trees of different sizes rooted at the variable nodes of the cycle. We then use this characterization to count the number of such structures on average in the asymptotic regime. The counting problem in this case is formulated recursively, and has a solution which is a generalization of Catalan numbers \cite{MR1098222}.
These results correspond to the entry $\Theta(1)$ in Table~ \ref{my-label}. 

We also investigate the asymptotic expected multiplicity of different types of TSs in irregular graphs, in cases where such expected values are non-zero and finite. In particular, to compute the asymptotic average multiplicity of LETS, ABS and SS structures 
in irregular graphs, we generalize the results of \cite{dehghan2016new} to count the number of cycles of different lengths with different combinations of node degrees within irregular ensembles of bipartite graphs. Moreover, to compute the asymptotic average multiplicity of ETS structures, we generalize the results derived for regular graphs (recursively counting the single cycles appended by trees) to irregular graphs.


\begin{table}[]
\centering
\caption{The asymptotic average number of different trapping set structures within $(a,b)$ classes of variable-regular LDPC code ensembles with variable degree $d_v$.}
\label{my-label}
\begin{tabular}{|l||c|c|c|c|c|}
\hline
 TS Type       & $b/a< d_v -2 $ & \multicolumn{3}{c|}{$b/a= d_v -2 $}       & $b/a> d_v -2 $ \\ \hline
       &    $d_v\geq 2$ &     $d_v=2$   & $d_v=3$   & $d_v\geq 4$   &  $d_v\geq 2$   \\ \hline
  ETSs     &       0     & \multicolumn{3}{c|}{$\Theta(1)$}        &  $\infty$  \\ \hline
  LETSs    &       0     & \multicolumn{3}{c|}{$\sim N_{2a}$}                &  -- \\ \hline
  EASs     &     0      & \multicolumn{2}{c|}{$\sim N_{2a}$} &    --         &   -- \\ \hline
  SSs    &      0      & $\sim N_{2a}$ & \multicolumn{2}{c|}{--}    &   -- \\ \hline
\end{tabular}
\end{table}

The organization of the rest of the paper is as follows: In Section~\ref{S2}, we present some definitions and notations. This is followed in Section~\ref{S3} by our main result
on the asymptotic average number of an arbitrary local structure in the Tanner graph of random LDPC codes. In this section, we also apply
the general result to different types of trapping sets including ETSs, LETSs, ABSs, EABSs and SSs,
for both variable-regular and irregular LDPC code ensembles. In Section~\ref{LETS-IRREG}, we compute the expected number of 
SS, ABS and LETS structures whose asymptotic average multiplicity is constant in irregular LDPC codes. Next, In Section \ref{sec-ets}, we compute the expected number of 
ETS structures whose asymptotic average multiplicity is constant in biregular, variable-regular and irregular ensembles.
Section~\ref{S5} is devoted to numerical results. The paper is concluded with some remarks in Section~\ref{S7}.

\section{Definitions and notations}
\label{S2}

For a graph $G$, we denote the node set and the edge set of $G$ by $V(G)$ and $E(G)$,
respectively.
In this work, we consider graphs with no loop or parallel edges, where a loop is defined as an edge that connects a node to itself.
For $v \in V(G)$ and $S\subseteq V(G)$, notations $N(v)$ and $N(S)$ are used to denote
the neighbor set of $v$, and the set of nodes of $G$ which has a neighbor in $S$, respectively.
A {\it path} of length $c$ in $G$ is a sequence of distinct nodes $v_1, v_2, \ldots , v_{c+1}$ in $V(G)$, such that $\{v_i, v_{i+1}\} \in E(G)$, for $1 \leq i \leq c$.
A {\it cycle} of length $c$ is a sequence of distinct  nodes
$v_1, v_2, \ldots , v_{c}$ in $V(G)$ such that $v_1, v_2, \ldots , v_{c}$ form a path of length $c-1$, and
$\{v_c, v_{1}\} \in E(G)$. We may refer to a path or a cycle by the set of their nodes, or by the set of their edges, or by both.
A {\em chordless} or {\em simple} cycle in a graph  is a cycle such that no two nodes of the cycle are connected by an edge that does not itself belong to the cycle.

A graph $G=(V,E)$ is called {\it bipartite}, if the node set $V(G)$ can be
partitioned into two disjoint subsets $U$ and $W$ (i.e., $V(G) = U \cup W \text{ and } U \cap W =\emptyset $), such that every edge in $E$ connects a node
from $U$ to a node from $W$. The {\it Tanner graph} of a low-density parity-check (LDPC) code is a bipartite graph, in
which $U$ and $W$ are referred to as {\it variable nodes} and {\it check nodes}, respectively. Parameters $n$ and $n'$ in this case are used to denote $|U|$ and $|W|$, respectively. Parameter $n$ is the  length of the code,
and the code rate $R$ satisfies $R \geq 1- (n'/n)$.

The number of edges incident to a node $v$ is called the {\em degree} of $v$, and is denoted by $d_v$ or $d(v)$.
Also, we denote the {\it{maximum degree}} and the {\it{minimum degree}} of $G$ by $\Delta(G)$ and $\delta(G)$, respectively.
A bipartite graph $G = (U\cup W,E)$ is called {\it biregular}, if all the nodes on the same side of the given bipartition have the same degree,
i.e., if all the nodes in $U$ have the same degree $d_v$ and all the nodes in $W$ have the same degree $d_c$.
It is clear that, for a biregular graph, $|U|d_v=|W|d_c=|E(G)|$. A bipartite graph that is not biregular is called {\it irregular}.
A Tanner graph $G = (U\cup W,E)$ is called variable-regular with variable degree $d_v$,
if for each variable node $u_i\in U$, $d_{u_i} = d_v$. Also, a $(d_v, d_c)$-regular Tanner graph is a variable-regular graph with variable degree $d_v$, in which
for each check node $w_i \in W$, $d_{w_i} = d_c$. The variable and check node degree distributions of irregular LDPC codes are represented by polynomials $\lambda(x) = \sum_{i=d_{v_{\min}}}^{d_{v_{\max}}} \lambda_i x^{i-1}$ and
$\rho(x) = \sum_{i=d_{c_{\min}}}^{d_{c_{\max}}} \rho_i x^{i-1}$, where $\lambda_i$ and $\rho_i$ denote the fraction of the edges in the Tanner graph connected to degree-$i$ variable and check nodes, respectively, and
$d_{v_{\min}}$, $d_{v_{\max}}$, $d_{c_{\min}}$ and $d_{c_{\max}}$ are the minimum and maximum variable and check node degrees in the Tanner graph, respectively.

In a Tanner graph $G = (U\cup W,E)$, for a subset $S$ of $U$, the induced subgraph of $S$ in $G$, denoted by $G(S)$, is the graph with the set of nodes
$S\cup N(S)$, and the set of edges $\{\{u_i,w_j\}: \{u_i,w_j\} \in E(G) , u_i\in S , w_j \in N(S)\}$.
The set of check nodes with odd and even degrees in $G(S)$ are denoted by $N_o(S)$ and $N_e(S)$, respectively.
Also, the terms {\em unsatisfied} check nodes and {\em satisfied} check nodes are used to refer to the check nodes in $N_o(S)$ and $N_e(S)$, respectively.
Throughout this paper, the size of an induced subgraph $G(S)$ is defined to be the number of its variable nodes (i.e., $|S|$).

For a given Tanner graph $G$, a set $S \subset U$, is said to be an {\it $(a,b)$ trapping set (TS)} if $|S| = a$ and
$|N_o(S)| = b$. Alternatively, set $S$ is said to belong to the {\em class} of $(a,b)$ TSs.
A {\em stopping set (SS)} $S$ is a TS for which the degree of each check node in $G(S)$ is at least two.
An {\it elementary trapping set
(ETS)} is a TS for which all the check nodes in $G(S)$ have degree one or two.
A {\it leafless ETS (LETS)} $S$ is an ETS for which each variable node in $S$ is connected to at least two satisfied check nodes in $G(S)$.
An {\em absorbing set (ABS)} $S$ is a TS for which all the variable nodes in $S$
are connected to more nodes in $N_e(S)$ than in $N_o(S)$.
Also, an {\it elementary absorbing set (EABS)} $S$ is an ABS for which all the check nodes in $G(S)$ have degree one or two.
A {\it fully absorbing set  (FABS)} is an ABS with the extra constraint that each variable node in $U \setminus S$ is connected to strictly fewer nodes in  $N_o(S)$ than
in $W \setminus N_o(S)$. Also, an {\it elementary fully absorbing set  (EFABS)} $S$ is an FABS such that each check node  in $G(S)$ has degree one or two.

In the asymptotic analysis presented in this work, for a given ensemble of LDPC codes (Tanner graphs), identified by certain degree distributions and the block length $n$, we consider the case where $n$ tends to infinity.
In such an asymptotic scenario, we say that a structure ${\cal S}$ is {\it local} in a Tanner graph $G$, if the definition of ${\cal S}$ depends on a constant number (with respect to $n$) of nodes in $G$.
In particular, as we are interested in subgraphs ${\cal S} = G(S)$ induced in $G$ by a set of variable nodes $S$, if such subgraphs are local, we refer to them as being {\em locally induced} subgraphs.
For example, if one considers a constant positive integer $a$, and Tanner graphs whose maximum variable degree is a constant in $n$, then $(a,b)$ ETSs, $(a,b)$ LETSs, $(a,b)$ ABSs, $(a,b)$ EABSs, and $(a,b)$ SSs are all local structures. (Note that such sets can only exist for $b \leq a \times d_{v_{\max}}$, and thus $b$ is also a constant in $n$.)
On the other hand, the definition of FABSs depends on all the variable nodes in the Tanner graph, and thus FABSs are not local structures.
In this work, our focus is on $(a,b)$ TSs with $a$ and $b$ being constants in $n$. We also consider
Tanner graphs whose maximum variable and check node degrees are constant. Thus, in this work, all ETSs, LETSs, ABSs, EABSs and SSs are local structures. We note that
the locality constraint is one of the main assumptions required for the proof of our results including Theorem~\ref{T1} in the next section.

We conclude this section with the definition of $k$-permutations that will be used in the proof of our main result. The {\it $k$-permutations} of $n$ objects are the different ordered arrangements of $k$-element subsets of the $n$ objects. The number of such permutations is equal to $P(n,k)=n(n-1)\cdots(n-k+1)$.

\section{Asymptotic Average Number of an Arbitrary Local Structure in LDPC Code (Tanner Graph) Ensembles}
\label{S3}

\subsection{Main Result}

In this subsection, we find a lower and an upper bound on the average number of a locally induced subgraph of a random Tanner graph with a given
degree distribution, in the asymptotic regime where the size of the graph tends to infinity.

\begin{theo}\label{T1}
Let $\Delta$ be a fixed natural number, such that $\Delta \geq d_1 \geq d_2  \geq \ldots \geq d_n$, and $\Delta \geq d_1' \geq d_2'  \geq \ldots \geq d_{n'}'$, and that $\sum_{i=1}^n d_i =\sum_{i=1}^{n'} d_i'=\eta$.
Consider the probability space $\mathcal{G}$ of all Tanner graphs with node set $(U,W)$, where $U=\{u_1, u_2, \ldots , u_n\}, W=\{w_1,w_2, \ldots, w_{n'}\}$, and $d(u_i)=d_i$,
$d(w_i)=d'_i$. Suppose that the degree sets $\{d_i\}$ and $\{d'_i\}$ are selected according to the distributions $\lambda(x)$ and $\rho(x)$, respectively, and that
the graphs in $\mathcal{G}$ are selected uniformly at random.
For $G \in \mathcal{G}$, denote by $X_{\mathcal{S}}(G)$ the number of copies of a subgraph $\mathcal{S}$, induced by a constant number of variable nodes, in $G$.
Then, in the asymptotic regime of $n \rightarrow \infty$,
we have
\begin{equation}\label{E0}
\mathbf{E}(X_{\mathcal{S}}) = \Theta(n^{|V(\mathcal{S})|-|E(\mathcal{S})|})\:,
\end{equation}
where $\mathbf{E}(X_{\mathcal{S}})$ is the expected value of  $X_{\mathcal{S}}(G)$, and $|V(\mathcal{S})|$ and $|E(\mathcal{S})|$ are the number of nodes and edges of $\mathcal{S}$, respectively.
\end{theo}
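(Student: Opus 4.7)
The plan is to establish the two-sided $\Theta$-bound in~\eqref{E0} via a first-moment computation performed in the uniform pairing (configuration-model) description of $\mathcal{G}$, which agrees with the uniform-simple-graph model up to a multiplicative factor bounded above and below by positive constants when $\Delta$ is a constant. Writing $a$ and $b$ for the numbers of variable and check nodes of $\mathcal{S}$, I would express
\[
\mathbf{E}(X_\mathcal{S}) \;=\; \frac{1}{|\mathrm{Aut}(\mathcal{S})|}\sum_{T} \Pr\!\bigl[\,G(S_T) \cong \mathcal{S}\,\bigr],
\]
where $T$ ranges over ordered labeled placements $(u_{i_1},\dots,u_{i_a};\,w_{j_1},\dots,w_{j_b})$ of distinct nodes with the following degree compatibility: each $u_{i_\ell}$ has $G$-degree exactly equal to the prescribed $\mathcal{S}$-degree of its image (since $G(S_T)$ retains every edge incident to a variable node in $S_T$), and each $w_{j_\ell}$ has $G$-degree at least that prescribed degree. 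The expectation thus factorises into (number of admissible placements)$\,\times\,$(typical per-placement probability).

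Counting the placements is the easy half: since the degree sequences are drawn from the fixed distributions $\lambda(x)$ and $\rho(x)$, with high probability the number of nodes of each prescribed degree is $\Theta(n)$ on both sides, so the total number of ordered placements is $\Theta(n^{a})\cdot\Theta(n^{b})=\Theta(n^{|V(\mathcal{S})|})$. For the per-placement probability I would use the uniform random pairing of the $\eta=\Theta(n)$ half-edges: the $k:=|E(\mathcal{S})|$ edges of $\mathcal{S}$ correspond to $k$ prescribed pairings, and the probability that a fixed set of $k$ pairings is realised is $(\eta-k)!/\eta!=\Theta(n^{-k})$, multiplied by a bounded combinatorial factor $O(1)$ (independent of $n$ because $\Delta$ is constant) accounting for the choice of which of the $d_{i_\ell}$ (resp.\ $d'_{j_\ell}$) half-edges at each node carry the pairings. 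Dropping the induced-subgraph requirement in the probability already gives the upper bound $\mathbf{E}(X_\mathcal{S})=O(n^{|V(\mathcal{S})|-|E(\mathcal{S})|})$.

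The main obstacle is the matching lower bound, which requires handling the induced-subgraph condition: after the $k$ prescribed pairings are fixed, no further pairing may create an edge from a node in $\{u_{i_1},\dots,u_{i_a}\}$ to any check node (whether in the placement or outside it) beyond those in $\mathcal{S}$, since such an edge would either change $G(S_T)$ internally or enlarge $N(S_T)$. Locality is what saves the day: after fixing the $k$ prescribed pairings, only a constant number of half-edges at the placement remain unmatched (because $a, b$ and $\Delta$ are all constants in $n$), and each ``forbidden'' pairing has probability only $O(1/n)$. A finite union bound then shows that the induced-subgraph event has probability bounded below by a positive constant, yielding the matching $\Omega(n^{-k})$ per-placement factor. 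The same locality argument absorbs the vanishing corrections from multi-edges, self-loops, and the simple-graph conditioning, each contributing only an $O(1/n)$ perturbation. Combining these estimates produces~\eqref{E0}.
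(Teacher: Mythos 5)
Your proposal is correct and follows essentially the same route as the paper's proof: both pass to the uniform configuration (pairing) model, bound the number of admissible placements by $\Theta(n^{|V(\mathcal{S})|})$ using the fact that each required degree class has size linear in $n$, and multiply by the per-edge-set probability $(\eta-|E(\mathcal{S})|)!/\eta! = \Theta(n^{-|E(\mathcal{S})|})$ together with an $n$-independent factor for the choice of cells. The only cosmetic difference is that you handle the induced-subgraph condition as a separate conditioning step with a union bound, whereas the paper builds it in automatically by requiring every cell of each selected variable bin to be consumed by the prescribed pairings (so no further edge at $S$ can arise), which renders that extra step unnecessary.
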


\begin{proof}{
First, we introduce the method that we use to construct a random Tanner graph. The method is the same as the one used in \cite{dehghan2016new}.
Suppose that the variable nodes are labeled as $\{u_1, u_2, \ldots , u_n\}$, and check nodes as
$\{w_1,w_2, \ldots, w_{n'}\}$. For each node $z$ with degree $d(z)$, we assign a bin that contains $d(z)$ cells, and consider a
random perfect matching (bijection) to pair the cells on one side of the graph to the cells on the other side. The bipartite graphs constructed by connecting any two paired cells with an edge are called  {\em configurations}.
There are $\eta!$ configurations, where $\eta$ is the number of edges in the graph. In the rest of the proof, we assume that configurations are selected uniformly at random.
From each configuration,  we construct a Tanner graph such that if there is an edge between two cells, then we place an edge between the corresponding nodes (bins) in the Tanner graph.
The resulted Tanner graphs can thus be considered as images of the configurations that are obtained from the random perfect matchings.
Following the notation in \cite{dehghan2016new}, we denote the ensemble of bipartite graphs so constructed by $\mathcal{G}^*$, and note that the graphs in $\mathcal{G}^*$ can have parallel edges.
We also note that a uniform distribution over the configurations induces a non-uniform distribution over $\mathcal{G}^*$. The next step in the construction of $\mathcal{G}$ is to remove all the graphs with 
parallel edges from $\mathcal{G}^*$. It is now easy to see that a uniform distribution over the configurations, with the condition that images with parallel edges are rejected, induces a uniform distribution over  $\mathcal{G}$.
The reason is that corresponding to each graph in $\mathcal{G}$, we have the same number $(d_1!)(d_2!)\ldots (d_n!)(d_1'!)( d_2'!)\ldots( d_{n'}'!)$ of 
configurations.

It was proved in \cite{dehghan2016new} that rather than working in the probability space $\mathcal{G}$, one can work in $\mathcal{G}^*$, and that if a property holds true asymptotically almost surely (a.a.s.) for $\mathcal{G}^*$,
it also holds true a.a.s. for $\mathcal{G}$. (See the proof of Theorem~1 in \cite{dehghan2016new}.) We adopt the same approach here, and in addition, instead of directly working in $\mathcal{G}^*$, which has a non-uniform distribution, we
perform the calculations in the space of configurations (with uniform distribution). 


Consider a bipartite structure of interest $\mathcal{S}=(X\cup Y,E')$, such that $X=\{x_1,x_2, \ldots, x_r\} \subset U$,
$Y=\{y_1,y_2, \ldots, y_{r'}\} \subset W$, and $r+r'=|V(\mathcal{S})|$. We say there is a copy of $\mathcal{S}$ in a configuration corresponding to a Tanner graph $G=(U \cup W,E)$, if
there is a set of $|E(\mathcal{S})|$ edges in the configuration, whose image in $G$ corresponds to a subgraph which is isomorphic to  $\mathcal{S}$.
We denote the number of copies of $\mathcal{S}$ in a configuration by $C_{\mathcal{S}}$. Now considering that, given a set of  $|E(\mathcal{S})|$ edges, there are $(\eta- |E(\mathcal{S})|)!$ configurations containing those edges, and that
configurations are selected uniformly at random, we have
\begin{equation}\label{E142}
\mathbf{E}(X_{\mathcal{S}}) 
\sim C_{\mathcal{S}}\dfrac{(\eta - |E(\mathcal{S})|)!}{\eta!} \sim \dfrac{C_{\mathcal{S}}}{\eta^{|E(\mathcal{S})|}}\:,
\end{equation}
where, the last equation is valid asymptotically since $|E(\mathcal{S})|$ is a constant in $n$, and thus in the number of edges  $\eta$ of the Tanner graph. In the following,  we derive upper and lower bounds on $C_{\mathcal{S}}$, and subsequently on $\mathbf{E}(X_{\mathcal{S}})$.

\noindent
\underline{Upper bound on $C_{\mathcal{S}}$:} To form a copy of the structure $\mathcal{S}$, we choose an $ r$-permutation of the $n$ bins
from $U$ and an $ r'$-permutation of the $n'$ bins from $W$ (note that some of the permutations may result in the same copy of ${\cal S}$. By considering all possible $ r$-permutations and
$r'$-permutations, we obtain an upper bound on $C_{\mathcal{S}}$). Next, after fixing the $r$-permutation and the $ r'$-permutation, for each $i$, $1 \leq i \leq r$, for the $i^{th}$ bin in the $r$-permutation,
we select $d(x_i)$ cells in order. We note that if the number of cells in the $i^{th}$ bin is not equal to $d(x_i)$, then a copy of $\mathcal{S}$ cannot be formed (by definition, the degree
of a variable node in a subgraph induced by a set of variable nodes must be equal to the degree of that variable node in the Tanner graph). On the check side, for each $i$, $1 \leq i \leq r'$, for the $i^{th}$ bin in the $r'$-permutation,
we select $d(y_i)$ cells in order. If the number of cells in the $i^{th}$ bin is less than $d(y_i)$, then a copy of $\mathcal{S}$ cannot be formed.
(Note that by considering all possible orderings for $d(x_i)$ cells and  $d(y_i)$ cells, we find an upper bound on $ C_{\mathcal{S}}$, since some of those orderings may result in the same copy of ${\cal S}$).
The number of possible choices of the cells on the variable side is thus upper bounded by $\prod_{j=1}^{r} \prod_{i=0}^{d(x_j)-1} (d(x_j) - i)$, that can be further bounded from above by $\Delta^{|E( \mathcal{S})|}$. Similarly, the number of choices of the cells on the check side is upper bounded by $\Delta^{|E( \mathcal{S})|}$. We therefore have

\begin{align*}
C_{\mathcal{S}}
&\leq   {\Delta}^{2|E( \mathcal{S})|} \times P(n,r)\times P(n',r')\\
&\leq   {\Delta}^{2|E( \mathcal{S})|} \times n^r \times n'^{r'}\\
&\leq   {\Delta}^{2|E( \mathcal{S})|} \times n^r \times n^{r'}\\
&=      {\Delta}^{2|E( \mathcal{S})|} \times n^{|V(\mathcal{S})|}\:. \numberthis \label{AA1}
\end{align*}

\noindent
\underline{Lower bound on $C_{\mathcal{S}}$:}
If any of the variable degrees of $\mathcal{S}$ is missing in the variable node degree distribution $\lambda(x)$ of the Tanner graph $G$, then, no copy of $\mathcal{S}$ can exist in $G$. Otherwise, assume that
there are $\alpha$ different variable degrees $d_1, \ldots, d_{\alpha}$, in $\mathcal{S}$, and denote the number of variable nodes with degree $d_i$ by $r_i$. We thus have $r_1+\cdots+r_{\alpha}=r$.
Denote by $U_i$ the set of variable nodes in $G$ with degree $d_i, \:i=1,\ldots,\alpha$. By the construction of the ensemble, we have $|U_i| \geq c_i \times n$, for some constant values $c_i, \: i=1,\ldots,\alpha$.
On the check side, denote the set of all check nodes with degree at least equal to the largest check degree in $\mathcal{S}$ by $W'$. Here also, $|W'| \geq c' \times n' \geq c'' \times n$, for some constants $c'$ and $c''$.
It is then easy to establish the following lower bound on $C_{\mathcal{S}}$:

\begin{equation}
C_{\mathcal{S}} \geq {{|U_1|} \choose {r_1}} \cdots {{|U_{\alpha}|} \choose {r_{\alpha}}} {{|W'|}\choose {r'}}\:,
\label{AB0}
\end{equation}
where ${a \choose b} = a!/(b! (a-b)!)$.
Using the linear lower bounds on $|U_i|, i = 1,\ldots,\alpha$, and $|W'|$, followed by the well-known lower bound ${{a} \choose {b}} \geq (\dfrac{a}{b})^b$, we obtain

\begin{align*}
C_{\mathcal{S}}
&\geq \Big(\dfrac{c_1 n}{r_1}\Big)^{r_1} \cdots \Big(\dfrac{c_{\alpha} n}{r_{\alpha}}\Big)^{r_{\alpha}} \Big(\dfrac{c'' n}{r'}\Big)^{r'}\\
&\geq  \dfrac{\min\{c_1,\ldots,c_{\alpha}\}^r n^r}{r^r} \times \dfrac{c''^{r'} n^{r'}}{r'^{r'}}  \\
&\geq  \dfrac{{(c \times n)}^{|V(\mathcal{S})|}}{{|V(\mathcal{S})|}^{|V(\mathcal{S})|}}\:, \numberthis \label{AA2}
\end{align*}
where $c= \min\{c'',c_1,\ldots,c_{\alpha}\}$ is a constant, and we have used $r+r'=|V(\mathcal{S})|$.

The proof is then completed by combining (\ref{AA1}) and (\ref{AA2}) with (\ref{E142}), and noting that $d_{v_{\min}} \times n \leq \eta \leq \Delta \times n$, where $d_{v_{\min}}$ is the minimum variable degree in $\lambda(x)$.
}\end{proof}

Theorem~\ref{T1} shows that depending on the relative values of $|V(\mathcal{S})|$ and $|E(\mathcal{S})|$, the expected number of a structure ${\cal S}$ can tend to zero, infinity or a non-zero constant, as $n$ tends to infinity.
The following lemma establishes a connection between the number of cycles in ${\cal S}$, and the value of $|V(\mathcal{S})| - |E(\mathcal{S})|$.

\begin{lem}\label{NEWL1}
Consider a graph ${\cal S}$ with the set of nodes $V(\mathcal{S})$, and the set of edges $E(\mathcal{S})$. If $|V(\mathcal{S})|> |E(\mathcal{S})|$, then $S$ does not contain any cycle. Else, if $|V(\mathcal{S})|< |E(\mathcal{S})|$, then $S$ contains at least two cycles, and if $|V(\mathcal{S})| = |E(\mathcal{S})|$, then $S$ contains only one (simple) cycle.
\end{lem}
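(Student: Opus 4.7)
The plan is to deduce all three cases from the standard spanning-tree / circuit-rank identity: for a connected graph $\mathcal{S}$ with spanning tree $T$, one has $|E(T)|=|V(\mathcal{S})|-1$, and the circuit rank $\mu(\mathcal{S}):=|E(\mathcal{S})|-|V(\mathcal{S})|+1\geq 0$ equals the number of edges of $\mathcal{S}$ that lie outside $T$. The structures to which the lemma is applied in this paper (ETSs, LETSs, ABSs, EABSs, SSs inside a Tanner graph) are connected, so I will work under that assumption; the disconnected case can be reduced to the connected one componentwise.

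I would first dispose of the case $|V(\mathcal{S})|>|E(\mathcal{S})|$ by noting that then $\mu(\mathcal{S})\leq 0$, which together with $\mu(\mathcal{S})\geq 0$ forces $|E(\mathcal{S})|=|V(\mathcal{S})|-1$, so $\mathcal{S}=T$ is a tree and contains no cycle. Next, for $|V(\mathcal{S})|=|E(\mathcal{S})|$, I would observe that $\mu(\mathcal{S})=1$, so there is a unique chord $e\notin E(T)$; since $T$ is acyclic, every simple cycle of $\mathcal{S}$ must use some non-tree edge, and $e$ is the only option. The fundamental cycle formed by $e$ together with the unique $T$-path between its endpoints is therefore the only simple cycle.

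For $|V(\mathcal{S})|<|E(\mathcal{S})|$, I would pick any spanning tree $T$, select two distinct chords $e_1,e_2\notin E(T)$ (which exist because $\mu(\mathcal{S})\geq 2$), and form their fundamental cycles $C_1$ and $C_2$. These must be distinct, since $e_i\in C_i$ while $e_j\notin C_i$ for $i\neq j$, giving at least two simple cycles in $\mathcal{S}$.

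The only delicate step I expect is the uniqueness assertion in the middle case: a priori, $\mu(\mathcal{S})=1$ only tells us that the cycle space is one-dimensional, and it is the fundamental-cycle argument (every simple cycle must use the unique chord) that upgrades this to uniqueness of the simple cycle as a subgraph. Everything else is a routine manipulation of the identity $|E(\mathcal{S})|-|V(\mathcal{S})|+c=\mu(\mathcal{S})$ combined with the trivial inequality $c\geq 1$.
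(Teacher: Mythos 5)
The paper states Lemma~\ref{NEWL1} without any proof, treating it as a standard graph-theoretic fact, so there is no in-paper argument to compare yours against; the question is only whether your proof stands on its own, and for connected graphs it does. The circuit-rank route is the right one: $\mu(\mathcal{S})=|E(\mathcal{S})|-|V(\mathcal{S})|+1\geq 0$ counts the chords relative to any spanning tree $T$, the three sign cases of $|V(\mathcal{S})|-|E(\mathcal{S})|$ correspond exactly to $\mu=0$, $\mu=1$, and $\mu\geq 2$, and your handling of the delicate middle case is correct and complete --- every simple cycle must use at least one non-tree edge, so with a unique chord $e$ every simple cycle contains $e$, and its remaining edges form a path in $T$ between the endpoints of $e$, which is unique because tree paths are unique; hence the fundamental cycle of $e$ is the only simple cycle. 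The one inaccuracy is your closing remark that ``the disconnected case can be reduced to the connected one componentwise.'' The lemma as literally stated is \emph{false} for disconnected graphs: two vertex-disjoint triangles give $|V(\mathcal{S})|=|E(\mathcal{S})|=6$ with two simple cycles, and a triangle plus two isolated vertices gives $|V(\mathcal{S})|>|E(\mathcal{S})|$ yet contains a cycle. Componentwise, one only gets $|V(\mathcal{S})|-|E(\mathcal{S})|=\kappa-\mu(\mathcal{S})$ with $\kappa$ the number of components, which does not pin down the cycle count unless $\kappa=1$. So connectivity is an essential hypothesis rather than a convenience; you were right to impose it (and it is implicit in how the paper uses the lemma), but you should state it as a hypothesis of the lemma rather than suggest the general case follows by reduction.
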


Based on Theorem~\ref{T1} and Lemma~\ref{NEWL1}, we have the following corollary.

\begin{cor}\label{R2}
Consider a random ensemble of Tanner graphs and a given subgraph ${\cal S}$ induced by a constant number of variable nodes in such Tanner graphs. Depending on whether $\mathcal{S}$ contains at least two cycles, only one cycle, or no cycle, the average number of structure ${\cal S}$ in the ensemble tends to zero, to a positive constant, or to infinity, as the size of the graphs (length of the codes) tends to infinity.
\end{cor}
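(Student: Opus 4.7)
The plan is to prove Corollary~\ref{R2} as an immediate consequence of Theorem~\ref{T1} combined with Lemma~\ref{NEWL1}. Theorem~\ref{T1} already establishes that for any locally induced structure $\mathcal{S}$,
\[
\mathbf{E}(X_{\mathcal{S}}) = \Theta\bigl(n^{|V(\mathcal{S})|-|E(\mathcal{S})|}\bigr),
\]
so the entire behavior of $\mathbf{E}(X_{\mathcal{S}})$ as $n\to\infty$ is controlled by the sign of the exponent $|V(\mathcal{S})|-|E(\mathcal{S})|$. Lemma~\ref{NEWL1} provides exactly the dictionary needed to translate this exponent into a statement about the cycle content of $\mathcal{S}$.

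Concretely, I would split into three cases. First, if $\mathcal{S}$ contains at least two cycles, Lemma~\ref{NEWL1} yields $|V(\mathcal{S})| < |E(\mathcal{S})|$, so the exponent is a negative integer; substituting into the $\Theta$ expression gives $\mathbf{E}(X_{\mathcal{S}}) = \Theta(n^{-k})$ for some $k\geq 1$, which tends to zero. Second, if $\mathcal{S}$ contains exactly one (simple) cycle, Lemma~\ref{NEWL1} gives $|V(\mathcal{S})| = |E(\mathcal{S})|$, the exponent vanishes, and $\mathbf{E}(X_{\mathcal{S}}) = \Theta(1)$, i.e.\ the expected number is trapped between two positive constants independent of $n$, matching the statement's "positive constant" asymptotics. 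Third, if $\mathcal{S}$ contains no cycle, Lemma~\ref{NEWL1} gives $|V(\mathcal{S})| > |E(\mathcal{S})|$, the exponent is a positive integer, and $\mathbf{E}(X_{\mathcal{S}}) = \Theta(n^{k})$ with $k\geq 1$, which tends to infinity.

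There is no genuine obstacle: the corollary is essentially a bookkeeping step that packages Theorem~\ref{T1} and Lemma~\ref{NEWL1} together. The only subtlety worth flagging in the write-up is an interpretive one, namely that the middle case yields $\Theta(1)$ rather than convergence to a specific limit; the existence of, and exact formulas for, these limiting constants are established separately in later sections (e.g., Sections~\ref{LETS-IRREG} and \ref{sec-ets}) and are not needed for the corollary itself. The three cases are exhaustive and mutually exclusive by Lemma~\ref{NEWL1}, so the proof is complete after the case analysis above.
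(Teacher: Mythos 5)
Your proposal is correct and matches the paper exactly: the paper states Corollary~\ref{R2} as an immediate consequence of Theorem~\ref{T1} and Lemma~\ref{NEWL1} (with no further argument given), and your three-case translation of the exponent $|V(\mathcal{S})|-|E(\mathcal{S})|$ via the cycle count is precisely that intended derivation. Your side remark that the $\Theta(1)$ case only gives boundedness between positive constants, with the exact limiting constants deferred to Sections~\ref{LETS-IRREG} and~\ref{sec-ets}, is an accurate and appropriate reading of the paper.
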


In the following subsections, we apply the results of this subsection to different categories of trapping sets, i.e., ETSs, LETSs, ABSs, EABSs and SSs, respectively. 

\subsection{Elementary Trapping Sets (ETS)}
\label{S4}

In this subsection, based on the general result of Theorem~\ref{T1}, we study the asymptotic behavior of ETSs in both regular and irregular LDPC code ensembles.

The following results show that every $(a,b)$ class of ETSs has a consistent behavior in variable-regular LDPC codes, and that the behavior is fully determined by the ratio $b/a$ and the variable degree $d_v$.

\begin{pro}\label{NEWT1}
Consider an $(a,b)$ ETS ${\cal S}$ in a variable-regular Tanner graph with variable degree $d_v$. We then have
\begin{equation}\label{E2}
|V({\cal S})|-|E({\cal S})|=a+\frac{b-ad_v}{2}\:.
\end{equation}
\end{pro}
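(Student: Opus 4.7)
The plan is a direct double-counting argument exploiting the two structural constraints of an $(a,b)$ ETS in a variable-regular graph: (i) every variable node in $\mathcal{S}$ still has full degree $d_v$ in $\mathcal{S}$ (because $\mathcal{S}=G(S)$ is the induced subgraph on the variable side), and (ii) every check node in $\mathcal{S}$ has degree exactly $1$ or $2$.

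First I would decompose the check-node set of $\mathcal{S}$ according to its degree. Writing $N_o(S)$ and $N_e(S)$ for the unsatisfied (odd-degree) and satisfied (even-degree) check nodes, the ETS property forces every node of $N_o(S)$ to have degree $1$ and every node of $N_e(S)$ to have degree $2$. By definition of the $(a,b)$ class, $|N_o(S)|=b$; let $c:=|N_e(S)|$. Then
\begin{equation*}
|V(\mathcal{S})| \;=\; a + b + c.
\end{equation*}

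Next I would count $|E(\mathcal{S})|$ in two ways. Summing degrees on the variable side gives $|E(\mathcal{S})|=a d_v$, since the graph is variable-regular. Summing degrees on the check side gives $|E(\mathcal{S})|=1\cdot b + 2\cdot c = b+2c$. Equating the two expressions yields
\begin{equation*}
c \;=\; \frac{a d_v - b}{2}.
\end{equation*}

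Finally I would substitute back. From $|V(\mathcal{S})|=a+b+c$ and $|E(\mathcal{S})|=ad_v$,
\begin{equation*}
|V(\mathcal{S})| - |E(\mathcal{S})| \;=\; a + b + \frac{a d_v - b}{2} - a d_v \;=\; a + \frac{b - a d_v}{2},
\end{equation*}
which is the claimed identity. There is no real obstacle here; the only subtlety worth noting is that $c$ must be a nonnegative integer, which is automatic for any realizable $(a,b)$ ETS (it is equivalent to $b \le a d_v$ and $b \equiv a d_v \pmod 2$), so the formula is always well-defined on the ETS classes of interest.
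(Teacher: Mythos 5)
Your proof is correct and follows essentially the same route as the paper's: count $|E(\mathcal{S})|=ad_v$ from the variable side, use the degree-$1$/degree-$2$ structure of the check nodes to get $|N_e(S)|=(ad_v-b)/2$, and subtract. The only difference is presentational (you make the check-side double count and the parity remark explicit, which the paper leaves implicit).
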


\begin{proof}
Suppose that ${\cal S}$ is induced in a Tanner graph $G$ by the set of variable nodes $S$. By counting the number of edges in ${\cal S}=G(S)$ from the variable side, we have $|E({\cal S})| = a \times d_v$. We also have
\begin{equation}
|V({\cal S})|=|S|+ |N_o(S)|+ |N_e(S)| = a + b + \frac{ad_v-b}{2} = a + \frac{ad_v+b}{2}\:,
\label{eqwd}
\end{equation}
where the second equality follows from the fact that all the satisfied and unsatisfied check nodes have degree two and one, in the ETS, respectively.
The proof is then completed by subtracting $|E({\cal S})| = a \times d_v$ from (\ref{eqwd}).
\end{proof}

The next theorem is resulted from Theorem~\ref{T1} and Proposition~\ref{NEWT1}, and describes the asymptotic expected number of ETSs in different $(a,b)$ classes of variable-regular LDPC code ensembles.

\begin{theo} \label{C2}
Consider a random variable-regular LDPC code $C$ with variable degree $d_v$ and length $n$. Denote by $N_{(a,b)}^{ETS}$ the  number of $(a,b)$ elementary trapping sets in $C$. Then, for $a$ being a constant in $n$, in the asymptotic regime where $n \rightarrow \infty$, we have:
$$
\mathbf{E}(N_{(a,b)}^{ETS}) = \Theta(n^{a+\frac{b-ad_v}{2}})\:.
$$
Thus, depending on whether $b/a > d_v-2$, $ b/a < d_v -2$, or $b/a = d_v -2$, the expected value $\mathbf{E}(N_{(a,b)}^{ETS})$ tends to infinity, zero, or a positive constant value in $n$.
\end{theo}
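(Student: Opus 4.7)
The plan is to derive this result as a direct consequence of Proposition~\ref{NEWT1} combined with Theorem~\ref{T1}, after accounting for the fact that $N_{(a,b)}^{ETS}$ counts all $(a,b)$ ETSs in $C$ rather than the copies of a single non-isomorphic structure.

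First I would partition the $(a,b)$ ETSs in $C$ according to the isomorphism class of the locally induced subgraph each one produces. Since $a$ is constant in $n$ and $d_v$ is a constant, every such induced subgraph has at most $a + ad_v$ nodes and exactly $ad_v$ edges, so there are only finitely many non-isomorphic candidates $\mathcal{S}_1,\ldots,\mathcal{S}_t$, where $t = t(a,b,d_v)$ is a constant in $n$. Linearity of expectation then gives
\[
\mathbf{E}(N_{(a,b)}^{ETS}) \;=\; \sum_{i=1}^{t} \mathbf{E}(X_{\mathcal{S}_i}),
\]
where $X_{\mathcal{S}_i}$ counts the copies of $\mathcal{S}_i$ induced in $C$ by variable-node subsets.

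Next, Proposition~\ref{NEWT1} yields $|V(\mathcal{S}_i)| - |E(\mathcal{S}_i)| = a + (b-ad_v)/2$ for every $i$; the crucial point is that this exponent depends only on $(a,b,d_v)$ and is identical across the isomorphism classes. Applying Theorem~\ref{T1} to each $\mathcal{S}_i$ gives $\mathbf{E}(X_{\mathcal{S}_i}) = \Theta(n^{a+(b-ad_v)/2})$, and summing a constant number of $\Theta$-terms with a common exponent preserves the $\Theta$. This yields the claimed $\mathbf{E}(N_{(a,b)}^{ETS}) = \Theta(n^{a+(b-ad_v)/2})$.

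The three-way dichotomy then follows by inspecting the sign of $a + (b-ad_v)/2$: it is positive, zero, or negative according to whether $b/a$ exceeds, equals, or is less than $d_v - 2$, which under the $\Theta$-asymptotics gives $\infty$, a positive constant, or $0$ (the exponent is always an integer, so if negative it is at most $-1$, forcing decay). The main subtlety I anticipate is the boundary case $b/a = d_v - 2$, where one must confirm that the constant is \emph{strictly} positive rather than trivially zero; this requires that at least one realizable $\mathcal{S}_i$ exist in the ensemble so that the lower-bound side of Theorem~\ref{T1} contributes a positive term. Exhibiting a chordless cycle of length $2a$ with $(d_v - 2)$ pendant edges attached to each variable node (terminating in degree-one check nodes) gives one concrete $(a, a(d_v-2))$ ETS structure, which settles this final point and completes the plan.
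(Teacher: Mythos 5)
Your proposal is correct and follows exactly the route the paper intends: the paper states this theorem without a separate proof, presenting it as an immediate consequence of Theorem~\ref{T1} and Proposition~\ref{NEWT1}. Your write-up in fact supplies details the paper leaves implicit (the finite sum over non-isomorphic structures sharing a common exponent, and the existence of a realizable structure in the boundary class $b=a(d_v-2)$ needed for strict positivity of the constant), so there is nothing to object to.
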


\begin{ex}\label{X1}
Figure \ref{F1} shows three ETS structures in variable-regular Tanner graphs with $d_v=3$, each satisfying one of the conditions in Theorem~\ref{C2}. (Variable nodes, satisfied and unsatisfied
check nodes are shown by full circles, empty squares and full squares, respectively.)
\end{ex}

\begin{figure}[]
\centering
\includegraphics [width=0.43\textwidth]{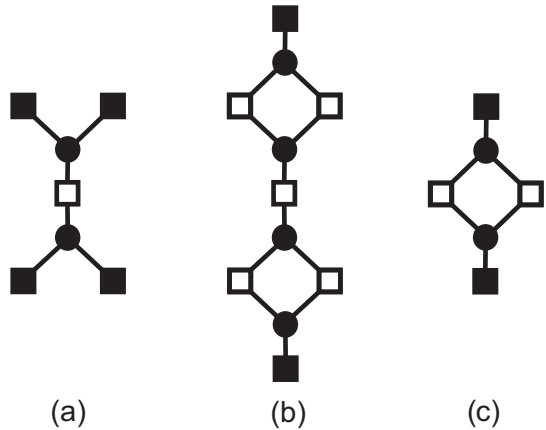}
\caption{Three ETS structures in variable-regular Tanner graphs with $d_v=3$, each satisfying one of the three conditions of Theorem \ref{C2}: (a) A $(2,4)$ ETS  satisfying $b/a > d_v-2$,
         (b) A $(4,2)$ ETS  satisfying $b/a < d_v-2$, and (c) A $(2,2)$ ETS satisfying $b/a = d_v-2$.}
\label{F1}
\end{figure}

In Section~\ref{sec-ets}, we study the case of $b/a = d_v -2$ in more details, and derive closed-form formulas to calculate the 
expected multiplicity of $(a,b)$ ETSs that satisfy $b/a = d_v -2$ in biregular and variable-regular Tanner graphs. 

Unlike variable-regular Tanner graphs, in irregular Tanner graphs, classes of ETSs demonstrate an inconsistent behavior, i.e., in general, in an $(a,b)$ class, one can find at least two structures whose expected numbers tend to different values (infinity, zero or a non-zero constant), as $n \rightarrow \infty$. This is explained in the following example for the $(4,2)$ class.

\begin{ex}
Figure \ref{F2} shows three ETS structures, all in the $(4,2)$ class, but each with a different asymptotic behavior. While the asymptotic expected values of the leftmost and the rightmost structures are infinity and zero, respectively, for the middle structure, the asymptotic expected value is a positive constant (see, Corollary~\ref{R2}).
\end{ex}

 \begin{figure}[]
 \centering
 \includegraphics [width=0.5\textwidth]{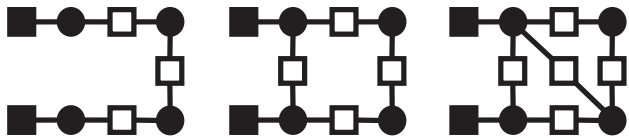}
  \caption{Three $(4,2)$ ETS structures in irregular Tanner graphs.}
 \label{F2}
 \end{figure}

In Section~\ref{sec-ets}, we further discuss the ETSs whose average multiplicity tends to a constant in irregular graphs asymptotically and derive formulas to compute such expected values. 

\subsection{Leafless Elementary Trapping Sets (LETS)}
\label{SS4-2}

A LETS is a special case of an ETS, and thus the general results presented in Proposition~\ref{NEWT1} and Theorem~\ref{C2} are also applicable to LETS structures of random variable-regular Tanner graphs. For LETSs, however,
from the three scenarios of $b/a > d_v-2$, $b/a < d_v-2$, and $b/a = d_v-2$, only the last two can happen. This is proved in the following lemma.

\begin{lem}\label{L1}
For any LETS structure in the $(a,b)$ class of a variable-regular LDPC code with variable degree $d_v$, we have $b/a \leq d_v-2$.
\end{lem}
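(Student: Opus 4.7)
The plan is to prove the bound by double-counting the edges between $S$ and the unsatisfied check nodes $N_o(S)$, exploiting the two structural constraints of a LETS: the ETS condition that every check node in $G(S)$ has degree at most two in $G(S)$, and the leafless condition that every variable node in $S$ has at least two satisfied neighbors.

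First, I would fix a LETS $\mathcal{S} = G(S)$ in the $(a,b)$ class, so $|S| = a$ and $|N_o(S)| = b$. Because $\mathcal{S}$ is an ETS, every unsatisfied check node (being of odd degree at most $2$) has degree exactly $1$ in $G(S)$. Therefore the total number of edges from $S$ to $N_o(S)$, call it $\mu$, equals $b$ when counted from the check side.

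Next I would count $\mu$ from the variable side. Each $u \in S$ has $d_v$ edges in $G(S)$, and by the leafless hypothesis at least two of these edges go to satisfied check nodes in $N_e(S)$. Hence each $u \in S$ contributes at most $d_v - 2$ edges to unsatisfied check nodes, giving
\begin{equation*}
\mu = \sum_{u \in S} \bigl|N(u) \cap N_o(S)\bigr| \leq a(d_v - 2).
\end{equation*}
Combining the two counts yields $b \leq a(d_v - 2)$, i.e., $b/a \leq d_v - 2$, as claimed.

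There is no real obstacle here: the argument is a one-line double-count once the ETS and leafless conditions are unpacked. The only point requiring a moment of care is verifying that an unsatisfied check node truly has degree exactly one in $G(S)$ (rather than potentially larger odd degree), which follows immediately from the ETS definition restricting check degrees to $\{1,2\}$ in $G(S)$.
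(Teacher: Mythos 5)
Your proof is correct and is essentially the same edge-counting argument as the paper's: both use that unsatisfied check nodes have degree one in an ETS and that the leafless condition forces at least two of each variable node's $d_v$ edges to go to satisfied check nodes, yielding $b \leq a(d_v-2)$. The only cosmetic difference is that you count just the edges into $N_o(S)$ while the paper counts all edges of $G(S)$ and subtracts.
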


\begin{proof}
Consider an $(a,b)$ LETS structure induced by the set of variable nodes $S$. Counting the number of edges in the subgraph $G(S)$ from the two sides of the graph, we have $a \times d_v = b + 2|N_e(S)|$.
Since $G(S)$ is a LETS structure, each variable node is connected to at least 2 satisfied check nodes. This implies $2 |N_e(S)| \geq 2a$, which together with the previous equation complete the proof.
\end{proof}

\begin{theo} \label{C3}
Consider a random variable-regular LDPC code $C$ with variable degree $d_v$ and length $n$. Denote by $N_{(a,b)}^{LETS}$ the  number of $(a,b)$ leafless elementary trapping sets in $C$.
Then, for $a$ being a constant in $n$, in the asymptotic regime where $n \rightarrow \infty$, we have:
$$
\mathbf{E}(N_{(a,b)}^{LETS}) = \Theta(n^{a+\frac{b-ad_v}{2}})\:.
$$
Thus, depending on whether $ b/a < d_v -2$, or $b/a = d_v -2$, the expected value $\mathbf{E}(N_{(a,b)}^{LETS})$ tends to zero, or a positive constant value in $n$.
\end{theo}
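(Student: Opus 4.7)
The plan is to derive Theorem~\ref{C3} as a direct corollary of Theorem~\ref{T1}, Proposition~\ref{NEWT1}, and Lemma~\ref{L1}, since every LETS is by definition an ETS, so the vertex/edge counting already carried out in Proposition~\ref{NEWT1} applies verbatim. First I would verify that an $(a,b)$ LETS is a locally induced subgraph in the sense required by Theorem~\ref{T1}: since $a$ is constant in $n$ and $b \le a\, d_v$ with $d_v$ fixed, the number of variable nodes, check nodes, and edges of any such structure is bounded by constants depending only on $a$ and $d_v$. Hence Theorem~\ref{T1} is directly applicable to each individual non-isomorphic LETS structure $\mathcal{S}$ in the class, yielding $\mathbf{E}(X_{\mathcal{S}}) = \Theta(n^{|V(\mathcal{S})|-|E(\mathcal{S})|})$.

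Next I would invoke Proposition~\ref{NEWT1}: its derivation only uses that $\mathcal{S}$ is an ETS in a variable-regular graph with variable degree $d_v$, and it gives $|V(\mathcal{S})|-|E(\mathcal{S})| = a + (b-a d_v)/2$. The key observation here is that this exponent depends only on $a$, $b$, and $d_v$, and not on the particular non-isomorphic LETS structure within the class. Consequently, each non-isomorphic structure in the $(a,b)$ class contributes the same order $\Theta(n^{a+(b-ad_v)/2})$ to $\mathbf{E}(N_{(a,b)}^{LETS})$. Since the number of pairwise non-isomorphic LETS structures in the $(a,b)$ class is itself bounded by a constant depending only on $a$, $b$, and $d_v$ (the total number of vertices and edges being constant), summing over all such structures preserves the asymptotic order and gives
\begin{equation*}
\mathbf{E}(N_{(a,b)}^{LETS}) = \Theta\bigl(n^{a+(b-ad_v)/2}\bigr).
\end{equation*}

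Finally, for the trichotomy, I would apply Lemma~\ref{L1}, which rules out the case $b/a > d_v-2$ for LETSs, so only two subcases remain. When $b/a < d_v - 2$, the exponent $a + (b-ad_v)/2 = (2a + b - ad_v)/2$ is strictly negative, so the expected count tends to zero; when $b/a = d_v - 2$, the exponent vanishes and the expected count tends to a positive constant. I do not anticipate a genuine obstacle in this argument, as all the combinatorial work has been done in Theorem~\ref{T1} and Proposition~\ref{NEWT1}; the only point requiring care is the bookkeeping step of asserting that summing a constant number of $\Theta(\cdot)$ terms with identical exponents preserves the $\Theta$ estimate, and the observation that Lemma~\ref{L1} is precisely what eliminates the $b/a > d_v - 2$ branch that Theorem~\ref{C2} allowed for general ETSs.
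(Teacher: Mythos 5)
Your proposal is correct and follows essentially the same route as the paper: the paper also obtains Theorem~\ref{C3} by observing that a LETS is a special case of an ETS, so Theorem~\ref{T1} and Proposition~\ref{NEWT1} apply verbatim to give the exponent $a+(b-ad_v)/2$ for every structure in the class, and then invokes Lemma~\ref{L1} to eliminate the $b/a>d_v-2$ branch. Your explicit bookkeeping step (summing a constant number of non-isomorphic structures with identical exponents) is left implicit in the paper but is exactly the intended argument.
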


Based on Theorem~\ref{C3}, it is clear that, in the asymptotic regime, only the LETS classes with $b/a=d_v-2$ are non-empty.
All the structures in such classes contain only one cycle, and we thus have the following result.

\begin{cor}\label{NEWL2}
In random variable-regular LDPC codes with variable degree $d_v$ and length $n$, as $n \rightarrow \infty$, the only non-empty classes of local $(a,b)$ LETS structures are those with $b/a=d_v-2$.
For each such class, all structures within the class correspond to a simple (chordless) cycle of length 2a.
\end{cor}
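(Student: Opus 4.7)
The first assertion is essentially a direct consequence of Theorem~\ref{C3} combined with Lemma~\ref{L1}. The plan is to argue that classes with $b/a > d_v-2$ are empty even for finite $n$ (by Lemma~\ref{L1}), while classes with $b/a < d_v-2$ have expected multiplicity $\Theta(n^{a+(b-ad_v)/2})$ with a strictly negative exponent, so by Theorem~\ref{C3} this tends to zero as $n\to\infty$. Hence only $b/a=d_v-2$ can yield a non-empty class asymptotically.

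For the second assertion, I would fix a LETS $\mathcal{S}=G(S)$ with $b/a=d_v-2$ and compute its vertex/edge balance. By Proposition~\ref{NEWT1}, $|V(\mathcal{S})|-|E(\mathcal{S})|=a+\tfrac{b-ad_v}{2}=0$, so $|V(\mathcal{S})|=|E(\mathcal{S})|$. Lemma~\ref{NEWL1} then forces $\mathcal{S}$ to contain exactly one (simple) cycle.

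The main step is identifying what that cycle looks like. The plan is to consider the subgraph $\mathcal{S}'$ of $\mathcal{S}$ obtained by deleting all unsatisfied check nodes (and their incident edges), and to show it is a single simple cycle of length $2a$. First, by the derivation in Proposition~\ref{NEWT1}, $|N_e(S)|=(ad_v-b)/2=a$, so $\mathcal{S}'$ has $a$ variable nodes and $a$ (satisfied) check nodes. Each check node in $\mathcal{S}'$ has degree exactly $2$ by the ETS property, and each variable node has degree at least $2$ by the leafless property. Counting edges of $\mathcal{S}'$ from the check side gives $2a$ edges, while the sum of variable-side degrees is also $2a$; together with the lower bound of $2$ per variable node, this forces every variable node to have degree exactly $2$ in $\mathcal{S}'$. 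Hence $\mathcal{S}'$ is a $2$-regular bipartite graph, which decomposes into a disjoint union of cycles.

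The final step is to observe that $\mathcal{S}'$ is a subgraph of $\mathcal{S}$, so the cycles of $\mathcal{S}'$ are also cycles of $\mathcal{S}$; since $\mathcal{S}$ has only one cycle, $\mathcal{S}'$ must consist of a single cycle, which necessarily has length $2a$ and is chordless (any chord would create a second cycle in $\mathcal{S}$). The $b=a(d_v-2)$ unsatisfied check nodes contribute only pendant edges and do not change the underlying cycle structure. I expect the only subtle point to be the careful bookkeeping of degrees to force the $2$-regularity of $\mathcal{S}'$, but once that is in place, the conclusion is immediate.
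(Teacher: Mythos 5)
Your proof is correct and follows essentially the same route as the paper: Theorem~\ref{C3} (with Lemma~\ref{L1}) for the first assertion, and Proposition~\ref{NEWT1} plus Lemma~\ref{NEWL1} to get the single-cycle property for the second. The paper states the second assertion without detail, whereas you supply the missing degree-counting argument (the satisfied-check subgraph is $2$-regular on $2a$ nodes, hence a disjoint union of cycles, hence a single chordless $2a$-cycle since $\mathcal{S}$ has only one cycle); this is exactly the right way to fill that gap.
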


For irregular LDPC codes, the following result is in parallel with Lemma~\ref{L1}.

\begin{lem}\label{L3}
For any LETS structure ${\cal S}$ in an irregular LDPC code, we have $|V({\cal S})| \leq |E({\cal S})|$.
\end{lem}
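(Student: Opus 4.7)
The plan is to establish the inequality by a direct double-counting argument, mirroring the proof of Proposition~\ref{NEWT1} but using the leafless condition in place of the variable-regularity assumption. Let $\mathcal{S} = G(S)$ be the subgraph induced by a LETS with $|S|=a$, and write $b = |N_o(S)|$ and $e = |N_e(S)|$ for the number of unsatisfied and satisfied check nodes in $\mathcal{S}$. Since $\mathcal{S}$ is an ETS, every check node in $N_o(S)$ has degree $1$ and every check node in $N_e(S)$ has degree $2$ in $\mathcal{S}$.

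First I would express $|V(\mathcal{S})|$ and $|E(\mathcal{S})|$ in terms of $a$, $b$, and $e$. Clearly $|V(\mathcal{S})| = a + b + e$, and counting edges from the check side gives
\begin{equation*}
|E(\mathcal{S})| = 1 \cdot b + 2 \cdot e = b + 2e.
\end{equation*}
Consequently, the desired inequality $|V(\mathcal{S})| \leq |E(\mathcal{S})|$ reduces to $a \leq e$.

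Next I would invoke the leafless property: by definition, each variable node of $S$ is incident to at least two check nodes in $N_e(S)$. Counting the edges between $S$ and $N_e(S)$ from the variable side therefore yields at least $2a$, while counting the same edges from the check side yields exactly $2e$ (each satisfied check has degree $2$ in $\mathcal{S}$). Hence $2e \geq 2a$, i.e.\ $e \geq a$, which is exactly what is needed.

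There is no real obstacle here — the statement follows from two edge-counting identities together with the leafless hypothesis, and importantly, the argument is independent of the variable node degrees, so no regularity assumption on $G$ is required. This is why the result applies to arbitrary irregular LDPC codes and complements Lemma~\ref{L1}, which handled the variable-regular case.
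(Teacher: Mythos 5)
Your proof is correct, but it takes a different route from the paper. The paper does not count anything explicitly: it observes that the subgraph formed by the variable nodes, the satisfied check nodes, and the edges in between has minimum degree two (each variable node has at least two satisfied check neighbors by the leafless property, and each satisfied check node has degree two in an ETS), hence contains a cycle, and then the inequality $|V({\cal S})| \leq |E({\cal S})|$ follows from Lemma~\ref{NEWL1}. Your argument instead computes both sides directly: $|V({\cal S})| = a + b + e$ and $|E({\cal S})| = b + 2e$, reducing the claim to $e \geq a$, which you obtain by double-counting the edges between $S$ and $N_e(S)$. Both rest on the same two structural facts, but yours is more self-contained (it needs neither Lemma~\ref{NEWL1} nor the min-degree-two-implies-cycle fact) and gives the sharper identity $|E({\cal S})| - |V({\cal S})| = |N_e(S)| - |S|$, with equality precisely when every variable node meets exactly two satisfied checks. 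The paper's phrasing, on the other hand, directly delivers the statement ``${\cal S}$ contains a cycle,'' which is the form actually used in the surrounding discussion (via Corollary~\ref{R2}) to conclude that the asymptotic multiplicity is zero or a non-zero constant; with your version one recovers that fact only by passing back through Lemma~\ref{NEWL1}. Either way the result stands, and your observation that no regularity assumption is used is exactly right.
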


\begin{proof}
To prove the lemma, we show that any LETS structure ${\cal S}$ has a cycle. In ${\cal S}$, each variable node is connected to at least two satisfied check nodes. On the other hand, the degree of each satisfied check node is two. Now, if we consider the subgraph formed by variable nodes, satisfied check nodes and the edges in between, we have a graph with minimum degree two. It is well-known that a graph with the minimum degree two has at least one cycle.
\end{proof}

Based on Lemma~\ref{L3}, for irregular LDPC codes also, the asymptotic expected number of LETS structures is either zero or a constant non-zero value. The classes of LETS structures in irregular codes also demonstrate an inconsistent behavior,
and among all the structures within an $(a,b)$ class, only those that correspond to simple cycles of length $2a$ have an asymptotically non-zero expected value for their multiplicity. We thus have the following result.

\begin{pro}\label{NT1}
Consider random irregular LDPC codes with $d_{v_{min}} \geq 2$, and length $n$. As $n \rightarrow \infty$, for a constant value of $a$, the sum of the expected number of LETS structures in all the $(a,b)$ classes, for different values of $b$, tends to the expected value of the number of simple cycles of length $2a$ in the code.
\end{pro}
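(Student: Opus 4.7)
The plan is to combine Lemma~\ref{L3}, Theorem~\ref{T1}, and Lemma~\ref{NEWL1} to reduce the sum over LETS structures to those with $|V(\mathcal{S})|=|E(\mathcal{S})|$, and then to identify these surviving structures with simple cycles of length $2a$ in the Tanner graph. First, Lemma~\ref{L3} gives $|V(\mathcal{S})| \leq |E(\mathcal{S})|$ for every LETS $\mathcal{S}$, so the regime $|V(\mathcal{S})| > |E(\mathcal{S})|$ is vacuous. By Theorem~\ref{T1} (equivalently Corollary~\ref{R2}), any LETS with $|V(\mathcal{S})| < |E(\mathcal{S})|$ has expected multiplicity $\Theta(n^{|V(\mathcal{S})|-|E(\mathcal{S})|})\to 0$, and since the maximum variable and check degrees are constants in $n$ and $a$ is fixed, only finitely many isomorphism classes of local LETS on $a$ variable nodes exist; hence the cumulative contribution of these classes to $\sum_b \mathbf{E}(N_{(a,b)}^{LETS})$ vanishes as $n\to\infty$.

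Second, I characterize the LETS structures with $|V(\mathcal{S})|=|E(\mathcal{S})|$. Revisiting the edge count in the proof of Lemma~\ref{L3}, the identity $\sum_{v\in S} d_v = b + 2|N_e(S)|$ together with the leafless inequality $2|N_e(S)|\geq 2a$ must hold with equality, so each variable in $S$ is incident to exactly two satisfied check nodes and $|N_e(S)|=a$. The subgraph on $S\cup N_e(S)$ is therefore a $2$-regular bipartite graph; by Lemma~\ref{NEWL1} a connected such structure contains exactly one cycle, and since it has $a$ variables and $a$ satisfied checks this cycle must be a simple cycle of length $2a$ passing through all $a$ variables, with the $b$ unsatisfied checks appearing as degree-$1$ leaves attached to the cycle's variables.

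Third, I establish an asymptotic one-to-one correspondence between such LETS structures and simple cycles of length $2a$ in the Tanner graph. For any simple cycle of length $2a$ with variable set $S$, the induced subgraph $G(S)$ realizes the structure described above \emph{unless} two variables of $S$ share an additional check neighbor or the cycle carries a chord; each such coincidence produces a local substructure with $|V|<|E|$, whose expected count is $O(1/n)$ by Theorem~\ref{T1} and is thus asymptotically negligible. Conversely, every connected LETS with $|V|=|E|$ arises in this fashion from a unique simple cycle. Summing over admissible $b$ (equivalently, over the possible degree profiles of the cycle variables) then yields that $\sum_b \mathbf{E}(N_{(a,b)}^{LETS})$ tends to the expected number of simple cycles of length $2a$ in the Tanner graph, as claimed. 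The main obstacle is this last step: one must control the error coming from ``non-generic'' simple cycles (with chords or incidental shared checks) and from LETS structures that are not a single cycle plus leaves, both of which are handled by a uniform application of Theorem~\ref{T1} to every competing local structure on at most $a$ variable nodes.
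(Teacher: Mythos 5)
Your proposal is correct and follows essentially the same route as the paper, which justifies Proposition~\ref{NT1} by combining Lemma~\ref{L3} with Corollary~\ref{R2} to discard every LETS containing more than one cycle and identifying the survivors with simple cycles of length $2a$. You merely make explicit two steps the paper leaves implicit --- that equality in the leafless edge count forces each variable node onto exactly two degree-$2$ satisfied checks, so the surviving structure is a length-$2a$ cycle with degree-$1$ leaves, and that the correspondence with simple $2a$-cycles of the Tanner graph fails only on chorded or check-sharing configurations whose expected count is $O(1/n)$ by Theorem~\ref{T1} --- both of which are sound.
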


The cycle structure of random regular and irregular Tanner graphs was studied in~\cite{dehghan2016new}. In particular, it was shown in~\cite{dehghan2016new} that the asymptotic
expected value of the number of cycles of length $c$ in irregular graphs with variable degrees $\{d_i\}_{i=1}^n$ and check node degrees $\{d'_i\}_{i=1}^{n'}$, as the size of the graph tends to infinity,
is approximated by
\begin{equation}
\mathbf{E}(N_c) \approx \dfrac{\displaystyle\Big((\frac{2}{|E|}\displaystyle\sum_{i=1}^{n}{{d_i}\choose {2}}) (\frac{2}{|E|}\displaystyle\sum_{i=1}^{n'}{{d'_i}\choose {2}})\Big)^{c/2}}{c}\:,
\label{eq12}
\end{equation}
where $|E| =\sum_{i=1}^n d_i =\sum_{i=1}^{n'} d'_i$ is the number of edges in the graph. This result for variable-regular graphs with variable degree $d_v$ reduces to
\begin{equation}
\mathbf{E}(N_c) \approx \dfrac{\displaystyle\Big( (d_v-1)\displaystyle(\frac{2}{|E|}\displaystyle\sum_{i=1}^{n'}{{d'_i}\choose {2}})\Big)^{c/2}}{c}\:,
\label{eq23}
\end{equation}
and for $(d_v,d_c)$ biregular graphs to 
\begin{equation}
\mathbf{E}(N_c) \sim \dfrac{\Big((d_v-1)(d_c-1)\Big)^{c/2}}{c}\:.
\label{eqsd}
\end{equation}

Considering that in the asymptotic regime of $n \rightarrow \infty$, by Corollary \ref{R2}, the expected number of cycles with chords tends to zero, one can use the above approximations for chordless cycles, and use them along with Corollary~\ref{NEWL2}, to obtain asymptotic estimates on the average number of LETS structures in different classes of regular graphs.

For irregular graphs, the result of Proposition~\ref{NT1} together with the approximation (\ref{eq12}) can be used to estimate the sum of $(a,b)$ LETS structures for a given $a$ and different $b$ values. In Section~\ref{LETS-IRREG}, we fine tune our analysis for irregular graphs and compute the expected number of $(a,b)$ LETS structures for any given values of $a$ and $b$ for which the expected multiplicity tends to a constant.
 
\subsection{Absorbing Sets (ABS)}
\label{ABS}

The following result relates the number of nodes and edges of an ABS.

\begin{lem}
For any ABS structure ${\cal S}$ within a Tanner graph with $d_{v_{\min}} \geq 2$, we have $|V({\cal S})| \leq |E({\cal S})|$.
\label{lemxf}
\end{lem}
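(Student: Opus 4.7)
The plan is to mimic the strategy used for \textbf{Lemma~\ref{L3}} (the LETS case): show directly that every ABS contains at least one cycle, and then invoke Lemma~\ref{NEWL1} to conclude $|V(\mathcal{S})|\leq|E(\mathcal{S})|$. The key observation is that, although an ABS allows satisfied check nodes of degree greater than two (so the ABS case is not simply subsumed by the LETS argument), one can still exhibit a subgraph of $\mathcal{S}$ in which every node has degree at least two, which is enough to guarantee a cycle.

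First I would record the degree consequence of the ABS condition. Let $\mathcal{S}=G(S)$ be an $(a,b)$ ABS, and for each variable node $v\in S$, let $e_v$ and $o_v$ denote the number of its neighbors in $N_e(S)$ and $N_o(S)$, respectively. The ABS definition gives $e_v>o_v$, and together with $e_v+o_v=d_v\geq 2$, this forces $e_v\geq \lceil (d_v+1)/2\rceil \geq 2$ in every case (checking $d_v=2,3,\dots$ separately is a one-line verification). Hence every variable node of $\mathcal{S}$ is incident to at least two satisfied check nodes.

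Next I would build the auxiliary subgraph $H$ of $\mathcal{S}$ consisting of the variable nodes $S$, the satisfied check nodes $N_e(S)$, and the edges of $\mathcal{S}$ that go between them. On the variable side, each node has degree $\geq 2$ in $H$ by the bound just established. On the check side, each node of $N_e(S)$ has, by the very definition of a satisfied check, an even positive degree in $G(S)$, hence degree $\geq 2$ in $H$ as well. Thus $\delta(H)\geq 2$, and invoking the standard fact that any finite simple graph with minimum degree at least two contains a cycle, $H$ contains a cycle. Since $H$ is a subgraph of $\mathcal{S}$, $\mathcal{S}$ itself contains a cycle, and Lemma~\ref{NEWL1} yields $|V(\mathcal{S})|\leq|E(\mathcal{S})|$.

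I do not foresee a real obstacle here: the only subtlety is making sure the argument still works when satisfied checks in the ABS have degree larger than two (unlike the elementary LETS case), but the minimum-degree bookkeeping above handles this uniformly, and the $d_{v_{\min}}\geq 2$ hypothesis is exactly what is needed to guarantee $e_v\geq 2$ for every $v\in S$.
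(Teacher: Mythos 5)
Your proof is correct and follows essentially the same route as the paper's: both establish that every variable node has at least two neighbors in $N_e(S)$ (your ceiling computation just makes the paper's one-line claim explicit), pass to the subgraph on $S\cup N_e(S)$ with the edges in between, observe it has minimum degree two and hence a cycle, and conclude via Lemma~\ref{NEWL1}. No gaps.
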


\begin{proof}
Let ${\cal S} = G(S)$ be the ABS induced in the Tanner graph $G$ by the set of variable nodes $S$. By the definition of an ABS, each variable node in $S$ is connected to more nodes in $N_e(S)$ than in $N_o(S)$.
Each node in $S$ has thus at least two neighbors in  $N_e(S)$. On the other hand, the degree of each node in $N_e(S)$ within $G(S)$ is at least two. Thus, if we consider the subgraph of $G(S)$ containing the
nodes in $S$ and $N_e(S)$, and the edges in between, we obtain a graph with minimum degree two. Such a graph, thus, has a cycle. This completes the proof.
\end{proof}

Based on Theorem~\ref{T1} and Lemma~\ref{lemxf}, it is clear that the average number of any ABS structure tends to either zero or a positive constant (not to infinity) as the block length tends to infinity. The following theorem distinguishes between the two cases depending on the variable degrees of the Tanner graph.

\begin{theo}
Consider random Tanner graphs with variable node degree distribution $\lambda(x)$. If $d_{v_{\min}} \geq 4$, then all the classes of local ABSs have zero multiplicity asymptotically. Otherwise,
if $d_{v_{\min}} = 3$,  then all the local $(a,b)$ classes of ABSs with $a \geq 2$ and $b \neq a$, have zero multiplicity asymptotically. In this case ($d_{v_{\min}} = 3$), the structures in the local $(a,a)$ class with asymptotically non-zero multiplicity
all correspond to simple cycles of length $2a$ consisting only of degree-3 variable nodes.
Finally, if $d_{v_{\min}} = 2$, then only the local ABS structures whose variable degrees are only $2$ or $3$ can have non-zero multiplicity asymptotically. In the $(a,b)$ classes
with a given $a$ and different values of $b \leq a$, such structures are all simple cycles of length $2a$.
\label{proABS}
\end{theo}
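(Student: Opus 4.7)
The plan is to reduce the statement to characterizing ABSs ${\cal S}$ with $|V({\cal S})|=|E({\cal S})|$. By Theorem~\ref{T1} and Corollary~\ref{R2}, a local structure ${\cal S}$ has non-vanishing asymptotic expected multiplicity only when $|V({\cal S})|\ge|E({\cal S})|$, while Lemma~\ref{lemxf} supplies the reverse inequality for every ABS whenever $d_{v_{\min}}\ge 2$. So the structures I must describe are precisely those satisfying $|V({\cal S})|=|E({\cal S})|$, which by Lemma~\ref{NEWL1} means ${\cal S}$ contains exactly one simple cycle.

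The core of the argument is a double edge-counting step on $G(S)$. For each $v\in S$, let $\alpha_v$ and $\beta_v$ be the numbers of neighbors of $v$ in $N_o(S)$ and $N_e(S)$, so $\alpha_v+\beta_v=d(v)$, and the ABS condition $\beta_v>\alpha_v$ together with $d(v)\ge 2$ forces $\beta_v\ge 2$. Counting from the check side gives $\sum_v\alpha_v\ge b$, with equality iff every check in $N_o(S)$ has degree $1$, and $\sum_v\beta_v\ge 2|N_e(S)|$, with equality iff every check in $N_e(S)$ has degree $2$. Rewriting $|V({\cal S})|=|E({\cal S})|$ as $a+b+|N_e(S)|=\sum_v\alpha_v+\sum_v\beta_v$ and combining it with the three lower bounds $\sum_v\alpha_v\ge b$, $\sum_v\beta_v\ge 2|N_e(S)|$, and $\sum_v\beta_v\ge 2a$ (the last from $\beta_v\ge 2$), I expect every slack to be forced to zero. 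This would yield $|N_e(S)|=a$, $\sum_v\alpha_v=b$, and $\sum_v\beta_v=2a$, so every check in $N_o(S)$ has degree $1$, every check in $N_e(S)$ has degree $2$, and $\beta_v=2$ for every $v\in S$. In turn, $\beta_v=2$ with $\beta_v>\alpha_v$ gives $d(v)\in\{2,3\}$.

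The three conclusions of the theorem then fall out directly. If $d_{v_{\min}}\ge 4$, the requirement $d(v)\le 3$ cannot be met, so no ABS satisfies $|V({\cal S})|=|E({\cal S})|$, and every local class vanishes asymptotically. If $d_{v_{\min}}=3$, then $d(v)=3$ for every $v\in S$, hence $\alpha_v=1$ and $b=\sum_v\alpha_v=a$, which kills every $(a,b)$ class with $b\ne a$. If $d_{v_{\min}}=2$, then $d(v)\in\{2,3\}$ and $b$ equals the number of degree-$3$ variables in $S$, so $b\in\{0,1,\ldots,a\}$. To obtain the cycle shape I would look at the bipartite subgraph $H\subseteq G(S)$ induced by $S\cup N_e(S)$: in $H$ every variable has degree $\beta_v=2$ and every check has degree $2$, so $H$ is $2$-regular and hence a disjoint union of even bipartite cycles. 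Since the degree-$1$ checks in $N_o(S)$ are leaves of ${\cal S}$ and lie on no cycle, the unique simple cycle of ${\cal S}$ must lie entirely in $H$, forcing $H$ to be a single cycle of length $|S|+|N_e(S)|=2a$. In the $d_{v_{\min}}=3$ case this cycle uses only degree-$3$ variables; in the $d_{v_{\min}}=2$ case it may use variables of degrees $2$ or $3$, with exactly $b$ of them of degree $3$.

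The main technical obstacle is the joint use of the three lower bounds in the edge-counting step. None of them in isolation pins down the check degrees, the size of $N_e(S)$, or the admissible variable degrees; the argument only works because the equality $|V({\cal S})|=|E({\cal S})|$ lets one combine them and force every slack to vanish simultaneously. Once that tightness is in hand, the three-way case split and the identification of ${\cal S}$ as a chordless $2a$-cycle with $b$ attached degree-$1$ unsatisfied checks are routine bookkeeping.
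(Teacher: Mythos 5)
Your proposal is correct, and it reaches the conclusions by a genuinely different route than the paper. The paper argues case by case and locally: for each excluded configuration (a variable of degree at least $4$, a variable with three satisfied neighbours, an unsatisfied check of degree at least $3$) it exhibits a subgraph of $G(S)$ with minimum degree two and a vertex of degree at least three, concludes that the structure contains more than one cycle, and invokes Corollary~\ref{R2} to kill its multiplicity. You instead pass through the characterization $|V({\cal S})|=|E({\cal S})|$ (via Theorem~\ref{T1}, Lemma~\ref{NEWL1} and Lemma~\ref{lemxf}) and then run a single tightness argument: writing $|V({\cal S})|=a+b+|N_e(S)|$ and $|E({\cal S})|=\sum_v\alpha_v+\sum_v\beta_v$, the three inequalities $\sum_v\alpha_v\ge b$, $\sum_v\beta_v\ge 2|N_e(S)|$ and $\sum_v\beta_v\ge 2a$ combine (the latter two give $\sum_v\beta_v\ge |N_e(S)|+a$) to match the identity exactly, so all slack vanishes at once. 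This simultaneously forces every unsatisfied check to have degree one, every satisfied check to have degree two, $|N_e(S)|=a$, $\beta_v=2$ and hence $d(v)\in\{2,3\}$ for every $v$, and the $2$-regularity of the subgraph on $S\cup N_e(S)$ that yields the single chordless $2a$-cycle. Your version buys a unified derivation of all structural consequences (including $b=\sum_v\alpha_v$ being the count of degree-$3$ variables) from one identity and avoids the paper's sub-case split on the degrees of unsatisfied checks; the paper's version is more self-contained for the exclusions, since it directly exhibits two cycles without needing the $|V|=|E|$ arithmetic. Both ultimately rest on Corollary~\ref{R2}, and your counting step is verifiably tight, so there is no gap.
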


\begin{proof}
If $d_{v_{\min}} \geq 4$, then each variable node in the ABS must be connected to at least three satisfied check nodes. Now, consider the subgraph that consists of the variable nodes of the ABS, its satisfied check nodes and the edges in between. This subgraph has minimum degree two, and a node of degree at least three, and thus contains at least two cycles. This means that the ABS itself contains at least two cycles and thus, based on Corollary~\ref{R2}, its multiplicity is zero asymptotically. The same argument applies to any ABS structure that has at least one node with degree greater than or equal to four (for the cases with $d_{v_{\min}} = 3$ or $2$).

If $d_{v_{\min}} = 3$, thus, all ABS structures have asymptotically zero multiplicity, except those whose variable nodes, all, have degree three. In this case, by the definition of an ABS, we must have $b \leq a$, as each degree-3 variable node must be connected to at least two satisfied check nodes. Now consider the case where $a \geq 2$ and $b < a$. For this case, we consider two scenarios and show that for both scenarios the ABS structures will have more than one cycle and thus their multiplicity is zero asymptotically: (1) all unsatisfied check nodes have degree one, (2) at least one unsatisfied check node $c$ has degree $3$ or larger. In the first scenario, since $b < a$, there must exist a variable node $v$ in the ABS that has no connection to unsatisfied check nodes. Node $v$ is thus connected to three satisfied check nodes, and with the same argument presented before, the ABS will have more than one cycle. In the second scenario, consider the subgraph of the ABS consisting of variable nodes, satisfied check nodes and the unsatisfied check node $c$ plus all the edges in between. This subgraph has minimum degree $2$ and a node with degree at least $3$, and thus has more than one cycle. Therefore, for the case of $d_{v_{\min}} = 3$, the only classes of ABSs with asymptotically non-zero multiplicity are $(a,a)$ classes. The only elements within the $(a,a)$ class whose multiplicity is non-zero asymptotically are simple cycles consisting of only degree-3 variable nodes, where each variable node is connected to one unsatisfied check node of degree one, and the degree of all satisfied check nodes are two.

For graphs with $d_{v_{\min}} = 2$, the proof of the statement of the proposition is similar. In this case, the only structures whose average multiplicity tends to a constant are those with variable degrees only $2$ or $3$, that contain only a simple cycle.
\end{proof}

 \begin{figure}[]
 \centering
 \includegraphics [width=0.5\textwidth]{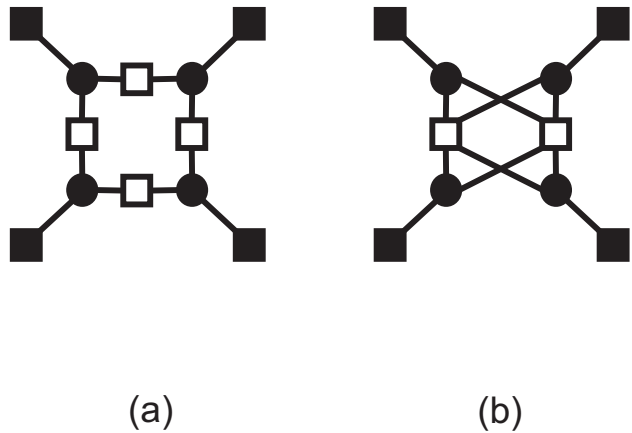}
  \caption{Two $(4,4)$ ABS structures with different asymptotic behaviors in Tanner graphs with $d_{v_{min}} = 3$.}
 \label{figxx}
 \end{figure}

From Theorem~\ref{proABS}, it can be seen that any class of ABSs has a consistent behavior in any ensemble of variable-regular or irregular LDPC codes with $d_{v_{\min}} \geq 4$.
The same applies to any class of $(a,b)$ ABSs with $a \neq b$ in any ensemble of LDPC codes with $d_{v_{\min}} = 3$. One can, however, easily provide examples where two ABS structures within the same
$(a,a)$ class have different asymptotic behavior in an ensemble with $d_{v_{\min}} = 3$. This is demonstrated in Fig.~\ref{figxx} for the $(4,4)$ class. While the structure in Fig.~\ref{figxx}(a) has only one cycle and thus has an average multiplicity tending to a non-zero constant, the structure in Fig.~\ref{figxx}(b) contains more than one cycle and thus its multiplicity is zero asymptotically.  For ensembles with $d_{v_{\min}} = 2$, it can be seen that, in general, an $(a,b)$ class of ABSs
with $a \geq 2$ and $b \leq a$, has an inconsistent behavior (see Fig.~\ref{figgg}, for two structures with different asymptotic behaviors in the $(4,1)$ class).

 \begin{figure}[]
 \centering
 \includegraphics [width=0.5\textwidth]{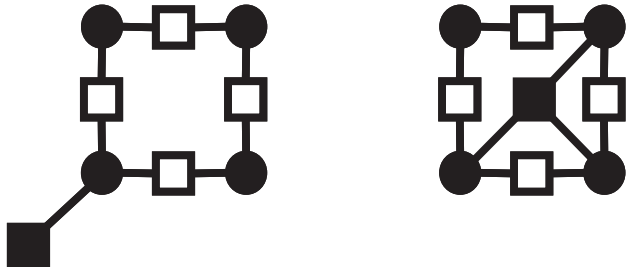}
  \caption{Two $(4,1)$ ABS structures with different asymptotic behaviors in Tanner graphs with $d_{v_{min}} = 2$.}
 \label{figgg}
 \end{figure}

In addition to the asymptotic results of Theorem~\ref{proABS}, the following finite-length result applies to ABSs of variable-regular LDPC codes.

\begin{lem}\label{Ob1}
There is no $(a,b)$ ABS structure with $b > a(\lceil \frac{d_v}{2} \rceil - 1)$, in a variable-regular Tanner graph with variable degree $d_v$.
\end{lem}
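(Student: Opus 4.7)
The plan is to prove the contrapositive style bound by double-counting the edges between $S$ and the unsatisfied check set $N_o(S)$, exploiting the ABS defining inequality at each variable node.

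First, I would fix an $(a,b)$ ABS induced by a set $S$ of variable nodes in a variable-regular Tanner graph with variable degree $d_v$. For each $v \in S$, let $o_v$ denote the number of neighbors of $v$ in $N_o(S)$ and $e_v$ the number of neighbors in $N_e(S)$, so that $o_v + e_v = d_v$. The ABS definition requires $e_v > o_v$, hence $o_v < d_v/2$. Since $o_v$ is an integer, this yields the per-node bound $o_v \leq \lceil d_v/2 \rceil - 1$ (this works uniformly for even and odd $d_v$: for $d_v$ even one gets $o_v \leq d_v/2 - 1$, and for $d_v$ odd one gets $o_v \leq (d_v-1)/2$, both equal to $\lceil d_v/2 \rceil - 1$).

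Next I would count the edges between $S$ and $N_o(S)$ in two ways. Summing from the variable side gives
\begin{equation*}
|E(S,N_o(S))| \;=\; \sum_{v\in S} o_v \;\leq\; a\bigl(\lceil d_v/2 \rceil - 1\bigr).
\end{equation*}
On the check side, every node in $N_o(S)$ is, by definition of the neighbor set, connected to $S$ by at least one edge (in fact by an odd number of edges), so
\begin{equation*}
|E(S,N_o(S))| \;\geq\; |N_o(S)| \;=\; b.
\end{equation*}
Combining these two inequalities gives $b \leq a(\lceil d_v/2 \rceil - 1)$, which is exactly the negation of the hypothesis of the lemma, completing the proof.

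There is no real obstacle here; the only subtlety is handling the ceiling correctly by treating the parities of $d_v$ uniformly, and remembering that $|N_o(S)|$ is bounded above by the total number of $S$-to-$N_o(S)$ edges because each unsatisfied check contributes at least one such edge.
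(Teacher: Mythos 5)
Your proof is correct and follows essentially the same argument as the paper's: bounding the number of unsatisfied-check neighbors of each variable node by $\lceil d_v/2 \rceil - 1$ via the ABS inequality, summing over the $a$ variable nodes, and using that each unsatisfied check node accounts for at least one edge. You are merely more explicit than the paper about the final step $b \leq |E(S,N_o(S))|$, which the paper leaves implicit.
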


\begin{proof}
By the definition of an ABS, each variable node is connected to at least $\lfloor \frac{d_v}{2} \rfloor +1$ satisfied check nodes. The number of edges in the induced subgraph connected to unsatisfied check nodes is thus at most 
$a \times d_v - a (\lfloor \frac{d_v}{2} \rfloor +1)$ or $a (\lceil \frac{d_v}{2} \rceil -1)$.
\end{proof}

\begin{cor}
There is no $(a,b)$ ABS structure with $b/a=d_v - 2$, in a variable-regular Tanner graph with variable degree $d_v \geq 4$.
\end{cor}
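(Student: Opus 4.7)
The plan is to derive this corollary as a direct arithmetic consequence of Lemma~\ref{Ob1}, which already constrains the parameters of any $(a,b)$ ABS in a variable-regular Tanner graph with variable degree $d_v$ via the inequality $b/a \leq \lceil d_v/2 \rceil - 1$. So the strategy is simply to compare the two quantities $d_v - 2$ and $\lceil d_v/2 \rceil - 1$ and verify that the former strictly exceeds the latter whenever $d_v \geq 4$.

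Concretely, first I would invoke Lemma~\ref{Ob1} to assert that any $(a,b)$ ABS must satisfy $b/a \leq \lceil d_v/2 \rceil - 1$. Then I would show that for every integer $d_v \geq 4$, we have the strict inequality $\lceil d_v/2 \rceil - 1 < d_v - 2$, equivalently $\lceil d_v/2 \rceil < d_v - 1$. Splitting into the two parity cases, for even $d_v$ this reduces to $d_v/2 < d_v - 1$, i.e.\ $d_v > 2$; for odd $d_v$ it reduces to $(d_v + 1)/2 < d_v - 1$, i.e.\ $d_v > 3$. Both conditions are satisfied by the hypothesis $d_v \geq 4$, so the inequality holds in all cases.

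Combining these two facts yields $b/a \leq \lceil d_v/2 \rceil - 1 < d_v - 2$, which is incompatible with $b/a = d_v - 2$. Hence no such ABS structure can exist, completing the proof.

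There is no real obstacle here; the corollary is a one-line arithmetic check on top of Lemma~\ref{Ob1}. The only minor subtlety is remembering to split by parity of $d_v$ to handle the ceiling function cleanly, but this is entirely routine.
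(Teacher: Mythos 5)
Your proof is correct and matches the paper's intent exactly: the corollary is stated immediately after Lemma~\ref{Ob1} with no separate proof, precisely because it follows from the arithmetic comparison $b/a \leq \lceil d_v/2 \rceil - 1 < d_v - 2$ for $d_v \geq 4$, which is what you verify. The parity split for the ceiling is handled correctly, so nothing is missing.
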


In Section~\ref{LETS-IRREG}, we complement the results of Theorem~\ref{proABS}, by calculating the asymptotic expected multiplicity of ABSs for irregular codes with $d_{v_{\min}}=3$ and $d_{v_{\min}}=2$, in $(a,b)$ classes with $b \leq a$ (see Corollaries~\ref{cor329} and \ref{cor330}). 

\subsection{Elementary Absorbing Sets (EABS)}
\label{SS4-3}

Elementary ABSs are a special case of ABSs. All the results presented in the previous subsection are thus applicable to EABSs as well. On the other hand, EABSs, for codes with $d_{v_{\min}} \geq 2$, are a special case of LETSs.
Therefore, the results presented in Subsection~\ref{SS4-2} are also applicable to EABSs. In fact, for the variable-regular graphs with $d_v=2$ or $d_v=3$, or irregular
graphs with variable degrees only $2$ and $3$, the sets of EABSs and LETSs are identical.

In the asymptotic regime of $n \rightarrow \infty$, for variable-regular Tanner graphs with $d_v \geq 4$, using Theorem~\ref{proABS},
one can see that all $(a,b)$ EABS classes are empty. For the cases of $d_v=2$ and $d_v=3$, based of Corollary~\ref{NEWL2}, the only non-empty classes (asymptotically) are those
with $b/a = d_v-2$, whose members are simple cycles of length $2a$. The average multiplicity of such classes can then be approximated by (\ref{eq23}) or (\ref{eqsd}).

In general, for an irregular Tanner graph, the following result shows that among ABS structures, only those that are elementary can possibly have non--zero multiplicity in the asymptotic regime of $n \rightarrow \infty$. This implies that for
graphs with $d_{v_{\min}} \geq 2$, the asymptotic results presented in Theorem~\ref{proABS}, related to ABS structures with constant average multiplicity, applies directly to EABSs.

\begin{pro}
Any non-elementary local ABS structure ${\cal S}$ in a Tanner graph with $d_{v_{\min}} \geq 2$ contains more than one cycle. The multiplicity of ${\cal S}$ in a random Tanner graph thus tends to zero as the size of the graph tends to infinity.
\label{profd}
\end{pro}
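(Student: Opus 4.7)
The second sentence of the proposition is immediate from the first: once we know a non-elementary local ABS $\mathcal{S}$ contains at least two cycles, Lemma~\ref{NEWL1} gives $|V(\mathcal{S})|<|E(\mathcal{S})|$, and Theorem~\ref{T1} (equivalently Corollary~\ref{R2}) then forces $\mathbf{E}(X_{\mathcal{S}})=\Theta(n^{|V(\mathcal{S})|-|E(\mathcal{S})|})\to 0$. So the real work is to show that $\mathcal{S}$ contains more than one cycle. The approach is the same template used in Lemma~\ref{lemxf} and in the proof of Theorem~\ref{proABS}: exhibit a subgraph of $\mathcal{S}$ with minimum degree $2$ that additionally carries a vertex of higher degree, then invoke the handshake lemma to conclude $|E|\geq |V|+1$ on that subgraph, and finally appeal to Lemma~\ref{NEWL1}.

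\noindent First I would establish the key local fact that every $v\in S$ has $e_v\geq 2$ satisfied-check neighbors in $\mathcal{S}$. Writing $e_v$ and $o_v$ for the number of neighbors of $v$ in $N_e(S)$ and $N_o(S)$ respectively, the ABS inequality $e_v>o_v$ combined with $d_v=e_v+o_v\geq 2$ forces $e_v\geq 2$: the only delicate case is $d_v=2$, in which the strict inequality pins down $e_v=2$, $o_v=0$, and for $d_v\geq 3$ one has $e_v>d_v/2\geq 3/2$. Now let $H$ be the bipartite subgraph of $\mathcal{S}$ on $S\cup N_e(S)$ together with all the edges of $\mathcal{S}$ between them. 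Every $v\in S$ then has degree $e_v\geq 2$ in $H$, and every $c\in N_e(S)$ has degree $\geq 2$ in $H$ by the definition of $N_e(S)$, so $\delta(H)\geq 2$.

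\noindent Next, I would split on where the witness to non-elementarity sits. By hypothesis there is some $c_0\in N_e(S)\cup N_o(S)$ whose degree in $\mathcal{S}$ is at least $3$. In Case A, $c_0\in N_e(S)$: its degree is even and at least $3$, hence at least $4$, and summing degrees in $H$ yields $2|E(H)|\geq 2|V(H)|+2$, so $|E(H)|\geq |V(H)|+1$. In Case B, $c_0\in N_o(S)$; augment $H$ to $H':=H\cup\{c_0\}$ together with all edges from $c_0$ to its (at least three) neighbors in $S$. All vertex degrees in $H'$ remain $\geq 2$ and $c_0$ has degree $\geq 3$, so $2|E(H')|\geq 2|V(H')|+1$; a parity argument (the left side is even, the right side is odd) upgrades this to $|E(H')|\geq |V(H')|+1$. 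In either case Lemma~\ref{NEWL1}, applied to the identified subgraph, produces at least two cycles, and those cycles are cycles of $\mathcal{S}$ as well.

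\noindent The main obstacle I foresee is bookkeeping rather than depth: one has to be careful that a high-degree check node on the \emph{odd} side is not captured by $H$ and must be added in explicitly, and one must use parity (not just the raw handshake inequality) in Case~B to squeeze $|E(H')|\geq |V(H')|+1$ out of the sum-of-degrees lower bound $2|V(H')|+1$. Once these two points are handled, the conclusion $|V(\mathcal{S})|<|E(\mathcal{S})|$ follows from Lemma~\ref{NEWL1}, and the vanishing of $\mathbf{E}(X_{\mathcal{S}})$ follows from Theorem~\ref{T1} with no further effort.
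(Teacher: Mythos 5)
Your proposal is correct and follows essentially the same route as the paper's proof: the identical case split (a satisfied check node of degree at least~$4$ versus an unsatisfied check node of degree at least~$3$), the same choice of subgraph ($S\cup N_e(S)$, augmented by the odd-degree check node in the second case), and the same conclusion via Corollary~\ref{R2}. The only difference is that you make explicit, via the handshake lemma and a parity step, the standard fact that minimum degree~$2$ together with a vertex of degree at least~$3$ forces $|E|\geq |V|+1$ and hence more than one cycle, which the paper invokes without elaboration.
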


\begin{proof}
The non-elementary ABS structure ${\cal S}$ has either ($a$) a satisfied check node with degree $4$ or larger, or ($b$) an unsatisfied check node of degree $3$ or larger. In addition, each variable node in ${\cal S}$ is connected to at least two satisfied check nodes. For Case ($a$), consider the subgraph of ${\cal S}$ consisting of all the variable nodes in ${\cal S}$, all the satisfied check nodes and the edges in between. This subgraph has minimum degree $2$ and has a node with degree $4$ or larger. It thus contains more than one cycle, and so does the ABS structure ${\cal S}$, itself. For Case ($b$), consider the subgraph of ${\cal S}$ containing all the variable nodes, all the satisfied check nodes and the unsatisfied check node with degree $3$ or larger, as well as all the edges in between. This subgraph has minimum degree $2$ and has a node with degree $3$ or larger, and thus contains more than one cycle.
\end{proof}

\subsection{Stopping sets (SS)}
\label{SS4-5}

In stopping sets, each check node is connected to at least two variable nodes. For Tanner graphs with $d_{v_{\min}} \geq 2$, therefore, any SS structure has a minimum degree at least $2$, and as a result, contains at least one cycle.
If we, however, further limit the degree distribution of Tanner graphs to $d_{v_{\min}} \geq 3$, any SS structure will have a minimum degree of two and at least one node with degree $3$ or larger. This implies that the SS structure will contain more than one cycle.

\begin{lem}\label{NEWL6}
Any SS structure in a Tanner graph with $d_{v_{\min}} \geq 2$ contains at least one cycle, i.e., $|V({\cal S})|\leq|E({\cal S})|$. For Tanner graphs with $d_{v_{\min}} \geq 3$, however,
any SS structure contains more than one cycle, i.e.,  $|V({\cal S})| < |E({\cal S})|$.
\end{lem}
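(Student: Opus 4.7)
The plan is to prove both parts by examining the minimum degree of $G({\cal S})$ and then applying a handshake-lemma edge count. The key preliminary observation is that in the induced subgraph $G({\cal S})$, every variable node has exactly the same degree as it does in $G$, since by definition $G({\cal S})$ includes the entire neighborhood $N(S)$ along with $S$. Hence the hypothesis on $d_{v_{\min}}$ transfers directly to degrees inside $G({\cal S})$, and by the SS definition every check node of $G({\cal S})$ has degree at least $2$.

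For the first claim ($d_{v_{\min}} \geq 2$), the observation above immediately yields that $G({\cal S})$ has minimum degree at least $2$. A standard fact (obtained by following any non-backtracking walk until a vertex is revisited) implies such a graph contains at least one cycle. Then applying the handshake lemma,
\[
2|E({\cal S})| \;=\; \sum_{v \in V({\cal S})} \deg(v) \;\geq\; 2|V({\cal S})|,
\]
which gives $|V({\cal S})| \leq |E({\cal S})|$, as desired.

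For the second claim ($d_{v_{\min}} \geq 3$), I would split the degree sum across the two sides of the bipartition. Every variable node now contributes at least $3$, and every check node still contributes at least $2$, so
\[
2|E({\cal S})| \;=\; \sum_{v \in S}\deg(v) + \sum_{c \in N(S)}\deg(c) \;\geq\; 3|S| + 2|N(S)| \;=\; 2|V({\cal S})| + |S|.
\]
Since any nonempty SS satisfies $|S|\geq 1$, this strict inequality gives $|V({\cal S})| < |E({\cal S})|$, and invoking Lemma~\ref{NEWL1} then guarantees that $G({\cal S})$ contains at least two cycles.

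There is no significant obstacle here; the argument is a direct degree count. The only point that requires care is the first observation, namely that a variable node's degree in $G({\cal S})$ equals its degree in $G$, which is what lets the graph-wide hypothesis $d_{v_{\min}} \geq 2$ or $3$ carry into the induced subgraph and combine cleanly with the check-side lower bound coming from the SS definition.
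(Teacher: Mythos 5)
Your proof is correct and matches the paper's argument in substance: both rest on the observation that the SS condition together with $d_{v_{\min}}\geq 2$ (resp.\ $d_{v_{\min}}\geq 3$) forces minimum degree two in $G(\mathcal{S})$ (with extra degree on the variable side in the second case), noting that variable-node degrees are preserved in the induced subgraph. The only cosmetic difference is in the second part, where the paper argues structurally that minimum degree two plus a node of degree at least three forces more than one cycle and then reads off $|V(\mathcal{S})|<|E(\mathcal{S})|$, whereas you obtain the strict inequality directly from the handshake count $2|E(\mathcal{S})|\geq 2|V(\mathcal{S})|+|S|$ and then invoke Lemma~\ref{NEWL1} for the cycle count; both directions are valid.
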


It is easy to see that, in general, the only SS structures that contain only one cycle are those with all variable nodes having degree $2$. For such stopping sets, we have the following result.

\begin{lem}
Consider a stopping set that belongs to the $(a,b)$ class, and in which all the variable nodes have degree $2$. We then have $b < a$.  Moreover, among all such SS structures, only those with $b=0$
can contain only one cycle. Those in other classes all have more than one cycle.
\label{lem56}
\end{lem}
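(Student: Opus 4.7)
The plan is to exploit a double edge count on $\mathcal{S}=G(S)$ in combination with the defining property of a stopping set, and then translate the resulting inequalities into a statement about cycle count via Lemma~\ref{NEWL1}.

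First, I would count $|E(\mathcal{S})|$ from the variable side: since every node in $S$ has degree $2$, this immediately gives $|E(\mathcal{S})| = 2a$. Then I would count the same quantity from the check side, using two facts: (i) each check in $N_e(S)$ has degree at least $2$ in $G(S)$ because $\mathcal{S}$ is a stopping set, and (ii) each check in $N_o(S)$ has degree at least $3$, since in a stopping set an odd degree must also be at least $2$. This produces an inequality of the form $2a \geq 2|N_e(S)| + 3b$, from which $3b \leq 2a$, i.e.\ $b \leq 2a/3 < a$ for $a \geq 1$, settling the first claim.

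For the second claim, I would link edges and vertices to the cycle count. Writing $|V(\mathcal{S})| = a + |N_e(S)| + b$ and substituting the bound $|N_e(S)| \leq a - 3b/2$ obtained above, I would get $|V(\mathcal{S})| - |E(\mathcal{S})| = |N_e(S)| + b - a \leq -b/2$. Whenever $b \geq 1$, this forces $|E(\mathcal{S})| > |V(\mathcal{S})|$, so by Lemma~\ref{NEWL1} the structure contains at least two cycles. The contrapositive is precisely the statement that only $b=0$ is compatible with containing exactly one cycle. I would finish by noting that the $b=0$ case is genuinely realizable with a single cycle, e.g.\ a chordless cycle of length $2a$ in which all $a$ check nodes in $N(S)$ have degree exactly $2$, confirming that only $b=0$ admits such SS structures and that all others lie strictly in the two-or-more-cycles regime.

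The main obstacle is essentially bookkeeping rather than combinatorial depth. The subtlety worth flagging is that both strict inequalities, $b<a$ and $|E(\mathcal{S})| > |V(\mathcal{S})|$ when $b\geq 1$, rely crucially on the fact that in a stopping set an odd-degree check node has degree at least $3$; if one used only $d\geq 1$ (as one might for a generic TS), the edge count would read $2a \geq 2|N_e(S)| + b$ and neither conclusion would follow. So the key observation to get right from the outset is to combine the oddness of check degrees in $N_o(S)$ with the stopping-set lower bound of $2$ before performing the edge count.
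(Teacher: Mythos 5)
Your proof is correct and follows essentially the same route as the paper's: both arguments rest on double-counting the $2a$ edges against the check-side degrees, using that even-degree checks contribute at least $2$ and odd-degree checks at least $3$ (odd and $\geq 2$), and both reduce the cycle claim to $|E(\mathcal{S})|>|V(\mathcal{S})|$ via Lemma~\ref{NEWL1}. Your version is marginally sharper in that it yields $b\leq 2a/3$ directly rather than first proving $b\leq a$ and then excluding $b=a$, but the underlying idea is identical.
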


\begin{proof}
Since in an SS, each check node has degree at least $2$, the number of check nodes in the SS (with all variable nodes having degree $2$)  must be less than or equal to the number of variable nodes and thus $b \leq a$. If $b=a$, however, since the degree of an odd-degree check node in the SS is at least $3$, then there will be $3b$ edges connected to odd-degree check nodes. This, alone, will be larger than the total number of edges connected to variable nodes. Thus, we must have $b < a$. For the second part of the lemma, we notice that any SS with $b > 0$ has at least one check node of degree $3$ or larger, and thus contains more than one cycle.
\end{proof}

Based on Lemma~\ref{lem56}, the only SS structures with only one cycle are codewords ($b=0$). In such structures, the number of variable and check nodes is the same, and all variable nodes and check nodes have degree $2$. In fact, these structures are all simple cycles of length $2a$ formed by degree-2 variable nodes.

\begin{theo}
Any $(a,b)$ class of local stopping sets has a consistent behavior in any ensemble of LDPC codes with $d_{v_{\min}} \geq 3$, with the asymptotic multiplicity equal to zero.
Moreover, the same result applies to LDPC codes with $d_{v_{\min}} = 2$ and stopping set classes with $b > 0$.  In ensembles with $d_{v_{\min}} = 2$, $(a,0)$ classes, in general,
however, do not demonstrate a consistent behavior. The only stopping set structures in these classes whose average multiplicity tends to a non-zero constant by increasing the block length are simple cycles of length $2a$ consisting of
$a$ degree-$2$ variable nodes. All the remaining structures have an asymptotic multiplicity of zero.
\label{thm54}
\end{theo}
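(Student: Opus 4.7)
The plan is to invoke Corollary~\ref{R2} (which links asymptotic multiplicity to the number of cycles in the induced structure) and split the argument into three mutually exclusive regimes: $d_{v_{\min}} \geq 3$; $d_{v_{\min}} = 2$ with $b>0$; and $d_{v_{\min}} = 2$ with $b=0$. In each regime I will show that every SS structure in the $(a,b)$ class contains more than one cycle, except for the distinguished simple-cycle structures of length $2a$ in the last regime. The asymptotic count then follows mechanically from Theorem~\ref{T1}/Corollary~\ref{R2}.

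For the regime $d_{v_{\min}} \geq 3$, the proof is immediate: Lemma~\ref{NEWL6} already states that any SS structure in a Tanner graph with $d_{v_{\min}} \geq 3$ satisfies $|V(\mathcal{S})| < |E(\mathcal{S})|$ and thus contains more than one cycle. Corollary~\ref{R2} yields asymptotic multiplicity zero for every structure in every $(a,b)$ class, which is the definition of consistent behavior with value zero.

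For $d_{v_{\min}} = 2$ with $b>0$, I split on the variable degrees that appear in $\mathcal{S}$. If some variable node has degree $\geq 3$, then the induced subgraph has minimum degree at least two (all checks in an SS have degree $\geq 2$, and all variables have degree $\geq 2$ since $d_{v_{\min}}=2$) together with a vertex of degree $\geq 3$, so it contains more than one cycle by the standard fact used repeatedly in the excerpt. If on the other hand every variable in $\mathcal{S}$ has degree exactly $2$, then Lemma~\ref{lem56} directly gives $>1$ cycles whenever $b>0$. In both sub-cases Corollary~\ref{R2} forces multiplicity zero, and consistency follows.

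The only interesting regime is $d_{v_{\min}}=2$ with $b=0$. Here the simple cycle of length $2a$ formed by $a$ degree-$2$ variable nodes and $a$ degree-$2$ check nodes satisfies $|V(\mathcal{S})|=|E(\mathcal{S})|=2a$ and contains exactly one (chordless) cycle, so Corollary~\ref{R2} gives a non-zero constant expected count. I then need to show that every other SS structure in the $(a,0)$ class contains more than one cycle. I dispatch this by the same degree argument in two sub-cases: (i) if $\mathcal{S}$ has any variable node of degree $\geq 3$, the minimum-degree-$2$ subgraph with a degree-$\geq 3$ node has $>1$ cycle; (ii) if every variable node has degree exactly $2$ but some check node has degree $\geq 4$ (only even check degrees are possible because $b=0$), the same minimum-degree argument applies. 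The residual case is an SS in which all variables and all checks have degree exactly $2$ but which is not the single simple cycle of length $2a$; such a structure is a $2$-regular bipartite graph on $2a$ nodes that is not connected, hence a disjoint union of at least two chordless cycles, giving $>1$ cycles. The main conceptual obstacle is this last sub-case, and I would handle it by the straightforward observation that a $2$-regular bipartite graph decomposes into vertex-disjoint even cycles, so failing to be a single $2a$-cycle forces at least two cycles, and Corollary~\ref{R2} again delivers asymptotic multiplicity zero. Since the $(a,0)$ class contains both the simple-cycle structure (constant expectation) and these alternative structures (zero expectation), the class behaves inconsistently, completing the theorem.
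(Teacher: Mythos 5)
Your overall decomposition is exactly the paper's (implicit) route: Lemma~\ref{NEWL6} disposes of $d_{v_{\min}} \geq 3$, the ``minimum degree two plus a node of degree at least three'' argument together with Lemma~\ref{lem56} disposes of $d_{v_{\min}}=2$ with $b>0$, and the $b=0$ case reduces to identifying which structures carry exactly one cycle. Up to and including sub-case (ii) of the last regime, the argument is correct and matches the paper.

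The problem is the residual sub-case you single out as the main obstacle: a $2$-regular bipartite $(a,0)$ structure that is a disjoint union of $k\geq 2$ chordless cycles. You correctly observe that it contains more than one cycle, but the conclusion ``Corollary~\ref{R2} again delivers asymptotic multiplicity zero'' does not follow. For such a structure $|V(\mathcal{S})| = |E(\mathcal{S})|$ (every node has degree $2$), so Theorem~\ref{T1} gives $\mathbf{E}(X_{\mathcal{S}}) = \Theta(n^{0}) = \Theta(1)$, a non-zero constant --- and this is also what one gets directly, since the numbers of short cycles of fixed lengths are asymptotically independent Poisson variables with constant means, so the expected number of vertex-disjoint pairs is a positive constant. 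The implication ``at least two cycles $\Rightarrow$ zero multiplicity'' in Corollary~\ref{R2} rests on Lemma~\ref{NEWL1}, which equates ``at least two cycles'' with $|V(\mathcal{S})| < |E(\mathcal{S})|$; that equivalence (indeed all three claims of Lemma~\ref{NEWL1}) holds only for \emph{connected} graphs, since in general $|V(\mathcal{S})|-|E(\mathcal{S})|$ equals the number of components minus the cycle rank. You must therefore either (a) restrict attention to connected structures, as the paper implicitly does --- in which case your residual sub-case is vacuous, because a connected $2$-regular bipartite graph on $2a$ nodes \emph{is} the single cycle of length $2a$, and no extra argument is needed --- or (b) admit disconnected induced subgraphs, in which case your conclusion in this sub-case is false and the statement of Theorem~\ref{thm54} itself would need amending to include disjoint unions of degree-$2$ cycles among the structures with constant non-zero multiplicity. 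Option (a) is the intended reading; state the connectivity assumption explicitly and delete the decomposition-into-cycles argument, which as written proves the wrong thing.
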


Fig.~\ref{figee} shows two stopping sets, both having only variable nodes with degree $2$, and both in the $(4,0)$ class, but with different asymptotic behavior.

In the following section, to complement the result of Theorem~\ref{thm54}, in Corollary~\ref{cor543}, we compute the asymptotic expected multiplicity of stopping sets in the $(a,0)$ class.
  
 \begin{figure}[]
 \centering
 \includegraphics [width=0.5\textwidth]{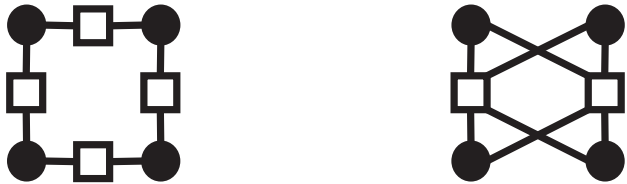}
  \caption{Two $(4,0)$ SS structures, consisting of only degree-$2$ variable nodes, with different asymptotic behavior.}
 \label{figee}
 \end{figure}

\section{Asymptotic Expected Multiplicity of Stopping Set, ABS and LETS Structures in Irregular Bipartite Graphs}
\label{LETS-IRREG}

The key results that enable us to find the asymptotic expected value of stopping set, ABS and LETS structures in irregular graphs are the following theorem, and its generalization in Theorem~\ref{Th57}, that compute the 
expected number of cycles of a given length with a given combination of variable and check node degrees.

\begin{theo}\label{Th45}
Consider the probability space $\mathcal{G}$ of all bipartite graphs with variable and check node degree distributions $\lambda(x)= \sum_{i=z'}^{z} \lambda_i x^{i-1}$ and $\rho (x)= \sum_{i=w'}^{w} \rho_i x^{i-1}$, respectively,
where the graphs in $\mathcal{G}$ are selected uniformly at random.  
Also, assume that $z'=d_{v_{\min}} \geq 2$, $w'= d_{c_{\min}} \geq 2$, and that $z=d_{v_{\max}}$ and $w=d_{c_{\max}}$ are fixed natural numbers, and let the number of variable nodes $n$ and the number of check nodes $n'$ tend to infinity.
Let $\vec{\alpha}_c=(\alpha_{z'},\ldots, \alpha_{z})$ and  $\vec{\beta}_c=(\beta_{w'},\ldots, \beta_{w} )$ be sequences of non-negative integers such that $\sum_{i=z'}^{z} \alpha_i=c/2$ and $\sum_{j=w'}^{w} \beta_j=c/2$.
For given vectors $\vec{\alpha}_c$, $\vec{\beta}_c$ and  $G \in \mathcal{G}$, denote by $N_{\vec{\alpha}_c,\vec{\beta}_c}(G)$ the number of cycles of length $c$ in $G$ such that each 
cycle has $\alpha_i$ variable nodes with degree $i$, for each $i=z',\ldots, z$, and $\beta_j$ check nodes with degree $j$, for each $j=w',\ldots, w$. 
Then, for any fixed even value of $c \geq 4$, we have 
\begin{equation}
\mathbf{E}(N_{\vec{\alpha}_c,\vec{\beta}_c})\sim \dfrac{\displaystyle {{c/2}\choose{\alpha_{z'},\ldots, \alpha_{z}}}\prod_{i=z'}^{z}(\lambda_i(i-1))^{\alpha_i} {{c/2}\choose{\beta_{w'},\ldots, \beta_{w}}}\prod_{j=w'}^{w}(\rho_j(j-1))^{\beta_j} }{c}\:.
\label{eq09}
\end{equation}
\end{theo}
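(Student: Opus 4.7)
The plan is to adapt the configuration-model argument used in the proof of Theorem~\ref{T1} together with the cycle-counting technique from \cite{dehghan2016new}, refining the bookkeeping to track the full degree profile of the nodes on each cycle. As in the earlier proofs, I would work in the uniform probability space of configurations, since a property that holds a.a.s.~for the configuration ensemble also holds a.a.s.~for $\mathcal{G}$. Writing $\widetilde{N}_{\vec{\alpha}_c,\vec{\beta}_c}$ for the number of distinct $c$-edge sets in the configuration whose images form a simple cycle of length $c$ with the prescribed degree profile, the expectation satisfies
\begin{equation*}
\mathbf{E}(N_{\vec{\alpha}_c,\vec{\beta}_c}) \;\sim\; \widetilde{N}_{\vec{\alpha}_c,\vec{\beta}_c}\cdot\frac{(\eta-c)!}{\eta!} \;\sim\; \frac{\widetilde{N}_{\vec{\alpha}_c,\vec{\beta}_c}}{\eta^c}\,,
\end{equation*}
because a specified $c$-element set of cell-pairings occurs in a uniformly random perfect matching with probability $(\eta-c)!/\eta!$, and $c$ is constant in $n$.

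To count $\widetilde{N}_{\vec{\alpha}_c,\vec{\beta}_c}$, I would first enumerate rooted oriented representations of a cycle, namely alternating sequences $(v_1,c_1,v_2,c_2,\ldots,v_{c/2},c_{c/2})$ beginning at a variable node and moving in a fixed direction, and then divide by the $c$ equivalent representations of each unrooted unoriented cycle ($c/2$ choices of root among the variable nodes, times $2$ orientations). The rooted oriented count then factors as a product of: the multinomial coefficients $\binom{c/2}{\alpha_{z'},\ldots,\alpha_{z}}$ and $\binom{c/2}{\beta_{w'},\ldots,\beta_{w}}$ selecting which positions along the cycle receive which degrees; the falling factorials $\prod_i n_i(n_i-1)\cdots(n_i-\alpha_i+1)\sim\prod_i n_i^{\alpha_i}$ and analogously $\prod_j (n_j')^{\beta_j}$, counting injective assignments of specific nodes to positions (where $n_i\sim(\lambda_i/i)\,\eta$ and $n_j'\sim(\rho_j/j)\,\eta$ follow from the edge-perspective definitions of $\lambda$ and $\rho$); and the cell-choice factors $\prod_i(i(i-1))^{\alpha_i}$ and $\prod_j(j(j-1))^{\beta_j}$, which select an ordered pair of cells at each participating bin to carry the two incident cycle-edges.

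Assembling these pieces and simplifying, the two $\eta^{c/2}$ factors coming from $\prod_i n_i^{\alpha_i}$ and $\prod_j (n_j')^{\beta_j}$ together cancel the $1/\eta^c$, while the identity $(\lambda_i/i)\cdot i(i-1)=\lambda_i(i-1)$ and its check-side analogue collapse the remaining products into $\prod_i(\lambda_i(i-1))^{\alpha_i}\prod_j(\rho_j(j-1))^{\beta_j}$; dividing by the automorphism factor $c$ then yields (\ref{eq09}). The main obstacle is the combinatorial bookkeeping of equivalent descriptions, in particular the $c$-fold automorphism factor of an unrooted unoriented bipartite cycle and the interplay between the multinomial choice of degree-per-position and the injective choice of specific nodes of the same degree, which must not be double-counted. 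One must also verify that the usual lower-order corrections---falling factorials versus plain powers, non-simple walks in which the $c$ chosen edges fail to induce a chordless cycle, and the passage between the configuration ensemble $\mathcal{G}^*$ and the simple-graph ensemble $\mathcal{G}$---are negligible as $n\to\infty$. These corrections are handled by the same arguments as in \cite{dehghan2016new}, and since $c$, $\vec{\alpha}_c$ and $\vec{\beta}_c$ are held fixed as $n$ grows, those arguments transfer verbatim.
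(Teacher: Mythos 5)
Your proposal is correct and follows essentially the same route as the paper's proof: both work in the uniform configuration space, write $\mathbf{E}(N_{\vec{\alpha}_c,\vec{\beta}_c})\sim {\cal C}_{\vec{\alpha}_c,\vec{\beta}_c}/\eta^{c}$, count cycle representations via the cell factors $i(i-1)$ and $j(j-1)$ with a division by the $c$-fold overcounting, and simplify using $|U_i|\sim(\lambda_i/i)|E(G)|$ and its check-side analogue. Your bookkeeping via rooted oriented sequences with falling factorials is algebraically identical to the paper's choice of unordered degree-classes followed by the $(\tfrac{c}{2})!(\tfrac{c}{2})!$ ordering factor, so the two derivations coincide term by term.
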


\begin{proof}{



We use the same model and notations as described in the proof of Theorem~\ref{T1}, where bipartite graphs are images of configurations. Rather than working in the probability space $\mathcal{G}$ to find the expected value, we then work in the probability space corresponding to the configurations. Using an approach similar to the one used in~\cite{dehghan2016new}, one can show that the expected values in the two spaces are asymptotically equal. 

For a configuration, we define an ($\vec{\alpha}_c,\vec{\beta}_c$)-cycle  to be a set of $c$ edges, like $\{e_1, e_2, \ldots , e_c\}$, connecting $c$
distinct bins, like $D_{i_1}, \ldots, D_{i_c}$. The connections are 
such that for each $j \in \{1,\ldots,c\}$, the edge $e_j$, connects a cell in bin $D_{i_j}$ to a cell in bin $D_{i_{j+1}}$, where $D_{i_{c+1}}=D_{i_{1}}$,
and the two cells in each bin $D_{i_{j}}$, connected to the two edges $e_{j}$ and $e_{j-1}$, are distinct ($e_0 = e_c$).
Also, for the bins corresponding to variable nodes, for each $i$, there are $\alpha_i$ bins, each with $i$ cells, 
and for the bins corresponding to check nodes, for each $j$, there are $\beta_j$ bins, each with $j$ cells.
We now compute the number of ($\vec{\alpha}_c,\vec{\beta}_c$)-cycles, ${\cal C}_{\vec{\alpha}_c,\vec{\beta}_c}$, in a configuration.

For a bin with $i$ cells, 
we have $i(i-1)$ choices for the two cells that are at the end points of the edges connected to the bin. Hence, in order to choose all the cells on both sides of the graph, we have
$$\Big( \displaystyle\sum_{\sigma_{z'} \subset U_{z'}, |\sigma_{z'}|=\alpha_{z'} \atop  {\vdots \atop \sigma_z \subset U_z, |\sigma_z|=\alpha_z } }\prod_{i=z'}^{z} (i(i-1))^{\alpha_i} \Big)\Big( \displaystyle\sum_{\sigma_{w'} \subset W_{w'}, |\sigma_{w'}|=\beta_{w'} \atop  {\vdots \atop \sigma_w \subset W_w, |\sigma_w|=\beta_w } }\prod_{j=w'}^{w} (j(j-1))^{\beta_j} \Big)$$
choices, where $U_i$ and $W_i$ are the sets of variable and check nodes with degree $i$, respectively. To count the number of ($\vec{\alpha}_c,\vec{\beta}_c$)-cycles in a configuration, 
we also need to consider different orderings of the $c/2$ bins on each side of the graph.
This results in
\begin{equation} \label{E1}
 {\cal C}_{\vec{\alpha}_c,\vec{\beta}_c} =\Big( \displaystyle\sum_{\sigma_{z'} \subset U_{z'}, |\sigma_{z'}|=\alpha_{z'} \atop  {\vdots \atop \sigma_z \subset U_z, |\sigma_z|=\alpha_z } }\prod_{i=z'}^{z} (i(i-1))^{\alpha_i} \Big)\Big( \displaystyle\sum_{\sigma_{w'} \subset W_{w'}, |\sigma_{w'}|=\beta_{w'} \atop  {\vdots \atop \sigma_w \subset W_w, |\sigma_w|=\beta_w } }\prod_{j=w'}^{w} (j(j-1))^{\beta_j} \Big)\Big( \dfrac{(\frac{c}{2})!(\frac{c}{2})!}{c} \Big)\:,
\end{equation}
where the division by $c$ is for counting each cycle in the above process $c$ times.
We thus have

\begin{align*}
\mathbf{E}(N_{\vec{\alpha}_c,\vec{\beta}_c}) & \sim \dfrac{{\cal C}_{\vec{\alpha}_c,\vec{\beta}_c}\times(|E(G)|-c)!}{|E(G)|!}\\
       &\sim \Big( \displaystyle\sum_{\sigma_{z'} \subset U_{z'}, |\sigma_{z'}|=\alpha_{z'} \atop  {\vdots \atop \sigma_z \subset U_z, |\sigma_z|=\alpha_z } }\prod_{i=z'}^{z} (i(i-1))^{\alpha_i} \Big)\Big( \displaystyle\sum_{\sigma_{w'} \subset W_{w'}, |\sigma_{w'}|=\beta_{w'} \atop  {\vdots \atop \sigma_w \subset W_w, |\sigma_w|=\beta_w } }\prod_{j=w'}^{w} (j(j-1))^{\beta_j} \Big)\Big( \dfrac{(\frac{c}{2})!(\frac{c}{2})!}{c} \Big) \dfrac{1}{|E(G)|^c}.  \numberthis \label{E222}
\end{align*}

In the following, we simplify (\ref{E222}) in a number of steps. First,
\begin{equation}\label{EN2}
\Big( \displaystyle\sum_{\sigma_{z'} \subset U_{z'}, |\sigma_{z'}|=\alpha_{z'} \atop  {\vdots \atop \sigma_z \subset U_z, |\sigma_z|=\alpha_z } }\prod_{i=z'}^{z} (i(i-1))^{\alpha_i} \Big)
        = \prod_{i=z'}^{z} {{|U_i|}\choose {\alpha_i}}\times\prod_{i=z'}^{z}(i(i-1))^{\alpha_i}\:.
\end{equation}
Also, for any two integers  $k, s$, where $k$ tends to infinity and $s$ is a constant number, we have
\begin{equation}\label{EN10}
\displaystyle{k \choose s }\sim\dfrac{k^s}{s!}\:.
\end{equation}
By (\ref{EN2}) and (\ref{EN10}), we obtain
\begin{equation}\label{E3}
\Big( \displaystyle\sum_{\sigma_{z'} \subset U_{z'}, |\sigma_{z'}|=\alpha_{z'} \atop  {\vdots \atop \sigma_z \subset U_z, |\sigma_z|=\alpha_z } }\prod_{i=z'}^{z} (i(i-1))^{\alpha_i} \Big) \sim
 \prod_{i=z'}^{z} \dfrac{|U_i|^{\alpha_i}}{(\alpha_i)!}(i(i-1))^{\alpha_i}\:.
\end{equation}
Moreover, by the definition of degree distributions, 
for each $i$, we have:
\begin{equation}\label{EN1}
\dfrac{|U_i|}{n} = \dfrac{|U_i|}{\sum_{j}|U_j|}= \dfrac{\lambda_i/i}{\sum_{j}\lambda_j/j}\, , \,\,\,\,\,\, \dfrac{|W_i|}{n'} = \dfrac{|W_i|}{\sum_{j}|W_j|}= \dfrac{\rho_i/i}{\sum_{j}\rho_j/j}\:.
\end{equation}

Consequently, by (\ref{EN1}) and (\ref{E3}), we have
\begin{equation}\label{E4}
\Big( \displaystyle\sum_{\sigma_{z'} \subset U_{z'}, |\sigma_{z'}|=\alpha_{z'} \atop  {\vdots \atop \sigma_z \subset U_z, |\sigma_z|=\alpha_z } }\prod_{i=z'}^{z} (i(i-1))^{\alpha_i} \Big) \sim \prod_{i=z'}^{z} \Big [\Big (\dfrac{n}{\sum_{j}\lambda_j/j}\Big )^{{\alpha_i}} \dfrac{(\lambda_i(i-1))^{\alpha_i}}{ (\alpha_i)!}\Big ]\:.
\end{equation}
Using ${{c/2}\choose{\alpha_{z'},\ldots, \alpha_{z}}}=\frac{(c/2)!}{(\alpha_{z'})!\cdots (\alpha_{z})!} $ in (\ref{E4}), we further obtain
\begin{align*}
\Big( \displaystyle\sum_{\sigma_{z'} \subset U_{z'}, |\sigma_{z'}|=\alpha_{z'} \atop  {\vdots \atop \sigma_z \subset U_z, |\sigma_z|=\alpha_z } }\prod_{i=z'}^{z} (i(i-1))^{\alpha_i} \Big) &\sim\displaystyle \dfrac{ {{c/2}\choose{\alpha_{z'},\ldots, \alpha_{z}}}}{(c/2)!}\prod_{i=z'}^{z} \Big [\Big (\dfrac{n}{\sum_{j}\lambda_j/j}\Big )^{{\alpha_i}} (\lambda_i(i-1))^{\alpha_i}\Big ]\\
&= \displaystyle \dfrac{ {{c/2}\choose{\alpha_{z'},\ldots, \alpha_{z}}}}{(c/2)!}\Big(\dfrac{n}{\sum_{j}\lambda_j/j}\Big )^{{c/2}}\prod_{i=z'}^{z} (\lambda_i(i-1))^{\alpha_i}\:. \numberthis \label{EN3}
\end{align*}

By counting the number of edges from the variable node side of the graph, and by (\ref{EN1}), we have
\begin{align*}
|E(G)|^{c/2}&=\Big ( \sum_{i=z'}^{z} i|U_i|  \Big)^{c/2}\\
         &=\Big ( \sum_{i=z'}^{z} \dfrac{n\lambda_i}{\sum_{j}\lambda_j/j}  \Big)^{c/2}\\
         &=\Big ( \dfrac{n}{\sum_{j}\lambda_j/j} \Big)^{c/2}\:.  \numberthis \label{EN4}
\end{align*}

By substituting (\ref{EN3}) and (\ref{EN4}), for both sides of the graph, in (\ref{E222}), we obtain (\ref{eq09}).

}\end{proof}  

\begin{rem}\label{R46}
For a $(d_v,d_c)$-regular Tanner graph, the result of Theorem~\ref{Th45}, obtained by replacing $\alpha_{d_v}=c/2$ and $\beta_{d_c}=c/2$ in (\ref{eq09}), reduces to (\ref{eqsd}) obtained in \cite{dehghan2016new}.
\end{rem}

In the context of finding the asymptotic multiplicity of SS, ABS and LETS structures with only one simple cycle, we are particularly concerned about the degrees of variable nodes involved in the cycle, as these degrees determine the number of unsatisfied check nodes, $b$, in the trapping set. The degrees of the check nodes in the cycle however, are of no consequence to the structure of the trapping set. In the following, we address this new counting problem by generalizing the result of Theorem~\ref{Th45} to the cases where some of the variable or check nodes can have degrees that are freely chosen from the degree distributions. 

\begin{theo}\label{Th57}
Consider the random ensemble of irregular bipartite graphs $\mathcal{G}$, described in Theorem~\ref{Th45}, and let $\{d_i\}_{i=1}^n$ and $\{d_i'\}_{i=1}^{n'}$ be 
the set of variable and check node degrees satisfying $\sum_{i=1}^n d_i =\sum_{i=1}^{n'} d_i'= |E(G)|$, for $G \in \mathcal{G}$.
Let $\vec{\alpha'_c}=(\alpha_{z'},\ldots, \alpha_{z}, \alpha')$ and  $\vec{\beta'_c}=(\beta_{w'},\ldots, \beta_{w}, \beta' )$ be sequences of nonnegative integers such that $\alpha' + \sum_{i=z'}^z \alpha_i=c/2$ and $\beta' + \sum_{j=w'}^w \beta_j=c/2$. For vectors $\vec{\alpha_c'}$, $\vec{\beta_c'}$, and $G \in \mathcal{G}$, denote by $N_{\vec{\alpha_c'},\vec{\beta_c'}}(G)$ the number of cycles of length $c$ in $G$ such that each cycle has at least $\alpha_i$ variable nodes with degree $i$, for each $i=z',\ldots, z$, and at least $\beta_j$ check nodes with degree $j$, for each $j=w',\ldots, w$.\footnote{The cycle thus has $\alpha'$ variable nodes and $\beta'$ check nodes, whose degrees are freely selected from the degree distributions.} Then, as the size of the graph tends to infinity, for any fixed even value of $c \geq 4$, we have 
\begin{align*}\label{eq217}
\mathbf{E}(N_{\vec{\alpha_c'},\vec{\beta_c'}})& \approx \displaystyle {{c/2}\choose{\alpha_{z'},\ldots, \alpha_{z}, \alpha'}}\prod_{i=z'}^{z}(\lambda_i(i-1))^{\alpha_i} {{c/2}\choose{\beta_{w'},\ldots, \beta_{w},\beta'}}\prod_{j=w'}^{w}(\rho_i(j-1))^{\beta_j} \\
& \times \Big(\frac{2}{|E(G)|}\displaystyle\sum_{i=1}^{n}{{d_i}\choose {2}}\Big)^{\alpha'}\Big(\frac{2}{|E(G)|}\displaystyle\sum_{i=1}^{n'}{{d_i'}\choose {2}}\Big)^{\beta'}\dfrac{1}{c}\:,\numberthis
\end{align*}
where the approximation is an asymptotic upper bound within the fixed multiplicative factor of $S(h_u)^{-\alpha'} \times S(h_w)^{-\beta'}$ from the exact value, with Specht's ratio $S(h)$ defined by $S(h) = \frac{(h-1)h^{\frac{1}{h-1}}}{e\log h}$,
for $h \neq 1$, and $S(1) = 1$ ($e$ is Euler's constant), and $h_u = \frac{z(z-1)}{z'(z'-1)},  h_w = \frac{w(w-1)}{w'(w'-1)}$.
\end{theo}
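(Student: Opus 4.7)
The plan is to adapt the configuration-model counting of Theorem~\ref{Th45} so as to allow $\alpha'$ variable-side and $\beta'$ check-side positions of the cycle to carry degrees drawn freely from $\lambda(x)$ and $\rho(x)$. As in earlier proofs, I would work in the uniform configuration space instead of $\mathcal{G}$ directly, since the two expected values agree asymptotically. The typed positions around the cycle contribute the multinomial coefficients $\binom{c/2}{\alpha_{z'},\ldots,\alpha_z,\alpha'}$ and $\binom{c/2}{\beta_{w'},\ldots,\beta_w,\beta'}$, which account for where in the cyclic ordering each kind of slot appears.

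For each constrained variable slot of type $i$, the number of cell-pair choices is $|U_i|\cdot i(i-1)$ exactly as in Theorem~\ref{Th45}, while each free variable slot admits $\sum_{k=1}^{n} d_k(d_k-1)=2\sum_{k}\binom{d_k}{2}$ cell-pair choices, namely any bin together with an ordered pair of its cells; the check side is symmetric. Dividing by $c$ for cyclic symmetry and by $|E(G)|^c$ (using $(|E|-c)!/|E|!\sim |E|^{-c}$) converts the count to an expected value. Invoking the identities (\ref{EN1})--(\ref{EN4}) from the proof of Theorem~\ref{Th45} to re-express $|U_i|$, $|W_j|$, and $|E(G)|$ in terms of $\lambda$, $\rho$ and $n$, and collecting factors, one recovers the constrained product $\prod_i(\lambda_i(i-1))^{\alpha_i}\prod_j(\rho_j(j-1))^{\beta_j}$ of (\ref{eq09}), while each free variable (resp.\ check) slot contributes the edge-weighted mean $\frac{2}{|E|}\sum_k\binom{d_k}{2}$ (resp.\ $\frac{2}{|E|}\sum_k\binom{d_k'}{2}$), yielding the right-hand side of (\ref{eq217}).

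The main obstacle, and the reason for ``$\approx$'' rather than ``$\sim$'', is the error created by this free-slot factorization. The exact expectation can be written as a sum of the Theorem~\ref{Th45} formula over all partitions $(k_i)$ of $\alpha'$ and $(l_j)$ of $\beta'$ carrying multinomial coefficients $\binom{c/2}{(\alpha_i+k_i)_i}$, whereas the stated formula effectively replaces them by $\binom{c/2}{(\alpha_i)_i,\alpha'}\binom{\alpha'}{(k_i)}$; this substitution amounts to replacing a weighted geometric mean of $\{i(i-1)\}$ across the free slots by a weighted arithmetic mean, and so is an overestimate. To quantify the gap I would invoke the classical Specht inequality: for positive reals with $a_{\max}/a_{\min}\le h$ and any probability weights, the AM-to-GM ratio is at most $S(h)$. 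Applying it slot-by-slot with $h=h_u=z(z-1)/(z'(z'-1))$ on the variable side and $h=h_w=w(w-1)/(w'(w'-1))$ on the check side yields the stated per-slot multiplicative gap factors $S(h_u)$ and $S(h_w)$, raised to $\alpha'$ and $\beta'$ respectively. The hardest technical step is lining up this AM-vs-GM reduction with the exact multinomial overcount so that the Specht bound is applied to the correct probability vector, while simultaneously absorbing the $O(1/|E|)$ overcount arising from allowing two free bins (or a free and a constrained bin) to coincide, which is negligible in the asymptotic regime.
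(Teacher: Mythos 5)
Your overall strategy---working in the configuration space, splitting the $c/2$ positions on each side by a multinomial coefficient, charging each constrained slot $|U_i|\,i(i-1)$ and each free slot $\sum_k d_k(d_k-1)$, and dividing by $c$ and $|E(G)|^{c}$---is the same as the paper's and does produce the right-hand side of (\ref{eq217}). The gap is in your justification of the accuracy factor $S(h_u)^{-\alpha'}S(h_w)^{-\beta'}$, which is the only content of the theorem beyond Theorem~\ref{Th45}. In the paper, the quantity requiring approximation is the sum over $\alpha'$-element subsets of free bins, $\sum_{\sigma'\subset U,\,|\sigma'|=\alpha'}\prod_{u_k\in\sigma'}d_k(d_k-1)=\binom{n}{\alpha'}S_{\alpha'}$, where $S_{\alpha'}$ is the $\alpha'$-th Maclaurin mean of the numbers $a_k=d_k(d_k-1)$; the paper sandwiches it using Maclaurin's inequality ($S_{\alpha'}\le S_1^{\alpha'}$, giving the stated formula as an upper bound) combined with Specht's inequality on the geometric mean ($\sqrt[\alpha']{S_{\alpha'}}\ge\sqrt[n]{S_n}\ge S_1/S(h_u)$, giving the matching lower bound). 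You instead locate the error in replacing $\binom{c/2}{\alpha_{z'}+k_{z'},\ldots,\alpha_z+k_z}$ by $\binom{c/2}{\alpha_{z'},\ldots,\alpha_z,\alpha'}\binom{\alpha'}{k_{z'},\ldots,k_z}$. But once the designation multiplicities $\prod_i\binom{\alpha_i+k_i}{k_i}$ are included, both expressions equal $(c/2)!/(\prod_i\alpha_i!\prod_i k_i!)$ exactly, and summing the exact-profile formula of Theorem~\ref{Th45} over all $(k_i)$ with these weights collapses, by the multinomial theorem and $\sum_i\lambda_i(i-1)=\frac{2}{|E(G)|}\sum_k\binom{d_k}{2}$, to precisely the stated expression: no error arises at that step. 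Moreover, the exact per-slot quantity is not a weighted geometric mean of $\{i(i-1)\}$; it is $S_{\alpha'}^{1/\alpha'}$, which sits \emph{between} the GM and the AM, so applying Specht's AM/GM bound ``slot-by-slot to the correct probability vector'' does not by itself bound the quantity you need---Maclaurin's inequality (or an equivalent) is the missing ingredient, and you yourself flag this alignment as unresolved. As written, the quantitative accuracy claim is therefore not established.

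A separate remark: your observation that letting two free bins coincide costs only $O(1/|E(G)|)$ is correct, and if pursued it shows $\alpha'!\,e_{\alpha'}(a)\sim\bigl(\sum_k a_k\bigr)^{\alpha'}$, i.e.\ the stated formula is asymptotically exact and the Specht factor is merely a safe, non-tight guarantee. That would be a legitimate---indeed stronger---way to close the gap, but it is neither the route you actually describe nor the paper's.
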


\begin{proof}
Similar to the proofs of Theorems~\ref{T1} and \ref{Th45}, here also, we work in the random ensemble of configurations.
For each $i$, $z' \leq i \leq z$, let $U_i$ be the set of variable nodes with degree $i$. Also, for each $j$, $w' \leq j \leq w$, let $W_j$ be the set of check nodes with degree $j$.
To form an ($\vec{\alpha}_c',\vec{\beta}_c'$)-cycle, for each $i$, $z' \leq i \leq z$, one needs to choose $\alpha_i$ bins from $U_i$. 
Next, from the remaining unchosen nodes in $U$, one needs to choose $\alpha'$ bins. Similar selection process will need to be performed on the check node side of the graph.
Since the size of the graph tends to infinity and since $c$ and thus all $\alpha_i$ and $\beta_j$ values are constants, in the selection of $\alpha'$ variable nodes and $\beta'$ check nodes, we can 
assume that the number of choices is asymptotically equal to the case where these nodes are selected from the whole set of variable and check nodes, respectively.
Therefore, by taking similar steps as those taken to derive (\ref{E1}), we have the following equation for the number of ($\vec{\alpha}_c',\vec{\beta}_c'$)-cycles in a configuration:

\begin{align*}
 {\cal C}_{\vec{\alpha_c'},\vec{\beta_c'}} &\sim \Big( \displaystyle\sum_{\sigma_{z'} \subset U_{z'}, |\sigma_{z'}|=\alpha_{z'} \atop {\vdots \atop {\sigma_{z} \subset U_z, |\sigma_{z}|=\alpha_{z} \atop \sigma' \subset U, |\sigma'|=\alpha'} } }\Big[\prod_{i=z'}^{z} (i(i-1))^{\alpha_i} \prod_{u_i \in \sigma'} (d_i)(d_i-1)\Big] \Big)\\
 &\times \Big( \displaystyle\sum_{\sigma_{w'} \subset W_{w'}, |\sigma_{w'}|=\beta_{w'} \atop  {\vdots \atop {\sigma_{w} \subset W_w, |\sigma_{w}|=\alpha_{w} \atop \sigma' \subset W, |\sigma'|=\beta' } } }\Big[\prod_{j=w'}^{w} (j(j-1))^{\beta_j}\prod_{w_i\in \sigma}(d_i')(d_i'-1) \Big]\Big)\Big( \dfrac{(\frac{c}{2})!(\frac{c}{2})!}{c} \Big)\:, \numberthis \label{EN7}
\end{align*}
where $d_i = d(u_i)$ and $d_i' = d(w_i)$. By substituting (\ref{E4}) in (\ref{EN7}), we obtain

\begin{align*}
 {\cal C}_{\vec{\alpha_c'},\vec{\beta_c'}} & \sim \Big( (\dfrac{n}{\sum_{j}\lambda_j/j})^{{c/2-\alpha'}}\prod_{i=z'}^{z} \dfrac{(\lambda_i(i-1))^{\alpha_i}}{(\alpha_i)!}\displaystyle\sum_{\sigma'\subset U \atop |\sigma'|=\alpha'}\prod_{u_i\in \sigma'} (d_i)(d_i-1) \Big) \\
 & \times \Big((\dfrac{n}{\sum_{j}\rho_j/j})^{c/2-\beta'}\prod_{i=w'}^{w} \dfrac{(\rho_i(i-1))^{\beta_i}}{(\beta_i)!}
 \displaystyle\sum_{\sigma'\subset W \atop |\sigma'|=\beta'}\prod_{w_i\in \sigma'}(d_i')(d_i'-1) \Big)\Big(\dfrac{(\frac{c}{2})!(\frac{c}{2})!}{c} \Big)\:. \numberthis \label{EN8}
\end{align*}

In the following, we derive asymptotic upper and lower bounds on (\ref{EN8}) that differ only in a multiplicative factor. For this, we first use Maclaurin's inequality (see \cite{MR3015124}, pp 117-119), as described below.
Let $a_1, a_2, \ldots, a_n$ be positive real numbers, and for $k = 1, 2, \ldots, n$, define the averages $S_k$ as follows:

$$S_k = \frac{\displaystyle \sum_{ 1\leq i_1 < \cdots < i_k \leq n}a_{i_1} a_{i_2} \cdots a_{i_k}}{\displaystyle {n \choose k}},$$
where the summation is over all distinct sets of $k$ indices. Maclaurin's inequality then states:
$$S_1 \geq \sqrt{S_2} \geq \sqrt[3]{S_3} \geq \cdots \geq \sqrt[n]{S_n}\:.$$
Using Maclaurin's inequality, we thus have
\begin{equation}
(\prod_{j=1}^n a_j)^{\frac{1}{n}} \leq \sqrt[\alpha']{S_{\alpha'}} \leq \frac{\sum_{j=1}^n a_j}{n}\:.
\label{eq30}
\end{equation}
On the other hand~\cite{specht1960theorie},
\begin{equation}
\frac{1}{S(h)} \times \frac{\sum_{j=1}^n a_j}{n} \leq (\prod_{j=1}^n a_j)^{\frac{1}{n}}\:,
\label{eq31}
\end{equation}
where $h = \frac{M}{m} (\geq 1)$, with $M$ and $m$ equal to the maximum and minimum values of numbers $a_1, a_2, \ldots, a_n$, and Specht's ratio $S(h)$ is defined by
\begin{equation}\label{FNE3}
S(h)=\dfrac{(h-1)h^{\frac{1}{h-1}}}{e \log h} \:\:\text{for}\:\: h \neq 1, \text{ and } S(1)=1\:,
\end{equation}
in which $e$ is Euler's number.

Combining (\ref{eq30}) and (\ref{eq31}), we have
\begin{equation}\label{ineq1}
{\displaystyle S(h)^{-\alpha'} {n \choose {\alpha'}}}  \Big(\dfrac{\sum_{j=1}^{n}a_j}{n}\Big)^{\alpha'} \leq \displaystyle \sum_{ 1\leq i_1 < \cdots < i_{\alpha'} \leq n}a_{i_1} a_{i_2} \cdots a_{i_{\alpha'}} \leq {\displaystyle {n \choose {\alpha'}}}  \Big(\dfrac{\sum_{j=1}^{n}a_j}{n}\Big)^{\alpha'}\:.
\end{equation}
Now, let $a_{i_k}=d_k (d_k-1)$. Focusing on the upper bound in (\ref{ineq1}), we then have

\begin{align*}
\displaystyle\sum_{\sigma'\subset U \atop |\sigma'|=\alpha'}\prod_{u_i\in \sigma'} (d_i)(d_i-1)
& \leq \displaystyle{{n}\choose {\alpha'}} \Big(\dfrac{\sum_{u_i\in U} (d_i)(d_i-1)}{n}\Big)^{\alpha'}  \\
&\sim \dfrac{n^{\alpha'}}{\alpha'!} \Big(\dfrac{\sum_{u_i\in U} (d_i)(d_i-1)}{n}\Big)^{\alpha'}\\
&= \dfrac{1}{\alpha'!} \Big(2\sum_{u_i\in U}  {{d_i}\choose {2}}\Big)^{\alpha'}. \numberthis \label{E92}
\end{align*}
Similarly, we can establish the following asymptotic lower bound: 
\begin{equation}\label{eq2w}
\displaystyle\dfrac{S(h_u)^{-\alpha'}}{\alpha'!} \Big(2\sum_{u_i\in U}  {{d_i}\choose {2}}\Big)^{\alpha'} \leq \displaystyle\sum_{\sigma'\subset U \atop |\sigma'|=\alpha'}\prod_{u_i\in \sigma'} (d_i)(d_i-1)\:.
\end{equation}
Now, by applying (\ref{E92}) and (\ref{eq2w}) to (\ref{EN8}) for both sides of the graph, we obtain the asymptotic upper and lower bounds on $\mathbf{E}(N_{\vec{\alpha_c'},\vec{\beta_c'}}) = {\cal C}_{\vec{\alpha_c'},\vec{\beta_c'}} (|E(G)|-c)! / |E(G)|!  \sim {\cal C}_{\vec{\alpha_c'},\vec{\beta_c'}} / |E(G)|^c$. In particular, the asymptotic value of the upper bound for (\ref{EN8}), being used to obtain the approximate value of $\mathbf{E}(N_{\vec{\alpha_c'},\vec{\beta_c'}})$, is calculated as follows:
\begin{align*}
{\cal C}_{\vec{\alpha_c'},\vec{\beta_c'}} & \approx \Big[ (\dfrac{n}{\sum_{j}\lambda_j/j})^{{c/2-\alpha'}}\prod_{i=z'}^{z} \dfrac{(\lambda_i(i-1))^{\alpha_i}}{(\alpha_i)!}
\times \frac{1}{\alpha'!} \Big(2\sum_{u_i\in U}  {{d_i}\choose {2}}\Big)^{\alpha'}
\Big] \\
& \times \Big[(\dfrac{n}{\sum_{j}\rho_j/j})^{c/2-\beta'}\prod_{i=w'}^{w} \dfrac{(\rho_i(i-1))^{\beta_i}}{(\beta_i)!}
\times \frac{1}{\beta'!} \Big(2\sum_{w_i\in W}  {{d_i'}\choose {2}}\Big)^{\beta'}
\Big]\Big(\dfrac{(\frac{c}{2})!(\frac{c}{2})!}{c} \Big)\\
& \sim \Big[ {{c/2}\choose{\alpha_{z'},\ldots, \alpha_{z}, \alpha'}}(\dfrac{n}{\sum_{j}\lambda_j/j})^{{c/2-\alpha'}}\prod_{i=z'}^{z} (\lambda_i(i-1))^{\alpha_i} \displaystyle
\Big( 2\sum_{u_i\in U}  {{d_i}\choose {2}}\Big)^{\alpha'}
\Big] \\
& \times \Big[ {{c/2}\choose{\beta_{w'},\ldots, \beta_{w},\beta'}}(\dfrac{n}{\sum_{j}\rho_j/j})^{c/2-\beta'}\prod_{i=w'}^{w} (\rho_i(i-1))^{\beta_i}
\displaystyle
\Big( 2\sum_{w_i\in W}  {{d_i'}\choose {2}}\Big)^{\beta'}
\Big]\Big(\dfrac{1}{c} \Big)\:. \numberthis \label{EN11}
\end{align*}

On the other hand, by (\ref{EN4}), we have
\begin{align*}
|E(G)|^{c/2 - \alpha'} = \Big( \dfrac{n}{\sum_{j}\lambda_j/j} \Big) ^{c/2-\alpha'}\:.  \numberthis \label{EN12}
\end{align*}

This together with (\ref{EN11}) and $\mathbf{E}(N_{\vec{\alpha_c'},\vec{\beta_c'}}) \sim {\cal C}_{\vec{\alpha_c'},\vec{\beta_c'}} / |E(G)|^c$ result in (\ref{eq217}).


The asymptotic lower bound on $\mathbf{E}(N_{\vec{\alpha_c'},\vec{\beta_c'}})$ is equal to the upper bound of (\ref{eq217}) multiplied by $S(h_u)^{\alpha'} \times S(h_w)^{\beta'}$, proving the claim about the accuracy of the approximation.
\end{proof}

\begin{rem}
In Theorem \ref{Th57}, consider the case where $\alpha'=\beta'=c/2$. In this case, we are interested in cycles of length $c$ without any constraint on the node degrees. One can see that in this case, Equation (\ref{eq217}) reduces to (\ref{eq12}), obtained in \cite{dehghan2016new}.
\end{rem}

\begin{cor}
For the ensemble of irregular bipartite graphs discussed in Theorem~\ref{Th57}, let $d_{v_{\min}} =  2$. Then, the expected multiplicity of local $(a,0)$ stopping sets
is given by 
$$
\mathbf{E}(N_{(a,0)}^{SS}) \approx \Big(\frac{2 \lambda_2}{|E(G)|}\displaystyle\sum_{i=1}^{n'}{{d_i'}\choose {2}}\Big)^{a} \times \dfrac{1}{2a}\:.
$$
\label{cor543}
\end{cor}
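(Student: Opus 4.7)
The plan is to combine Theorem~\ref{thm54} with the general cycle-counting formula of Theorem~\ref{Th57}. By Theorem~\ref{thm54} (building on Lemma~\ref{lem56}), the only local $(a,0)$ stopping set structures whose average multiplicity does not vanish as $n\to\infty$ are simple cycles of length $2a$ consisting of $a$ degree-$2$ variable nodes; the check-node degrees in the cycle are arbitrary, subject only to the graph's degree distribution, because on a cycle every check node automatically has degree $\geq 2$ in the induced subgraph. Furthermore, by Corollary~\ref{R2}, cycles of length $2a$ that carry any chord contain more than one cycle and so have vanishing asymptotic expected multiplicity, hence the expected number of $(a,0)$ stopping sets coincides asymptotically with the expected number of (not-necessarily-chordless) cycles of length $2a$ whose every variable node has degree exactly $2$.

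I would then specialize Theorem~\ref{Th57} by setting $c=2a$, $\alpha_2=a$ and $\alpha_i=0$ for $i>2$, $\alpha'=0$ on the variable side, and $\beta_j=0$ for all $j$ with $\beta'=a$ on the check side. With these choices both multinomial coefficients $\binom{c/2}{\alpha_{z'},\ldots,\alpha_z,\alpha'}$ and $\binom{c/2}{\beta_{w'},\ldots,\beta_w,\beta'}$ equal $1$; the variable-degree product collapses to $(\lambda_2(2-1))^{a}=\lambda_2^{a}$; the check-degree product is an empty product equal to $1$; and the free-variable factor, being raised to the exponent $\alpha'=0$, is $1$. The only surviving ``free degree'' contribution is the check-side factor
$$
\Bigl(\tfrac{2}{|E(G)|}\sum_{i=1}^{n'}\tbinom{d_i'}{2}\Bigr)^{a},
$$
and the overall $1/c$ becomes $1/(2a)$. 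Multiplying these ingredients yields exactly the stated expression.

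This is essentially a direct specialization, so the main small obstacle is a bookkeeping one: one must verify that the parameter choice $\alpha_2=a$, $\alpha'=0$, $\beta'=a$ captures precisely the family of structures identified by Theorem~\ref{thm54}, and that the approximation error in Theorem~\ref{Th57}—which is controlled by the Specht-ratio factor $S(h_w)^{-\beta'}$ only—is acceptable here because $\alpha'=0$ eliminates the Specht factor on the variable side altogether. Once these points are noted, plugging in the parameters and simplifying gives the corollary.
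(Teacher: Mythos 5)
Your proposal is correct and follows essentially the same route as the paper: invoke Theorem~\ref{thm54} to reduce the count to simple cycles of length $2a$ with $a$ degree-$2$ variable nodes, then apply Theorem~\ref{Th57} with $c=2a$, $\alpha_2=a$, $\beta'=a$ (the paper likewise notes that in the asymptotic regime all such cycles are simple, which covers your chord remark). Your added bookkeeping — checking that the multinomial coefficients collapse to $1$ and that the Specht-ratio error reduces to $S(h_w)^{-a}$ since $\alpha'=0$ — is a correct and slightly more explicit rendering of what the paper leaves implicit.
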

\begin{proof}
Based on Theorem~\ref{thm54}, we need to find the asymptotic expected number of simple cycles of length $2a$ that consist of $a$ variable nodes of degree $2$. This can be estimated from Theorem~\ref{Th57} by counting the number of cycles of length $c=2a$ that consist of $a$ degree-$2$ variable nodes and $a$ check nodes with free degrees, i.e., $\alpha_2 = a$ and $\beta'=a$, and noting that in the asymptotic regime, all such cycles are simple.
\end{proof}

\begin{cor}\label{cor329}
For an ensemble of irregular graphs with $d_{v_{\min}}=3$, the asymptotic average multiplicity of $(a,a)$ ABSs is given by
$$
\mathbf{E}(N_{(a,a)}^{ABS}) \approx \Big(\frac{4 \lambda_3}{|E(G)|}\displaystyle\sum_{i=1}^{n'}{{d_i'}\choose {2}}\Big)^{a} \times \dfrac{1}{2a}\:.
$$
\end{cor}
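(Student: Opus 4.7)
My plan is to combine the structural characterization already given in Theorem~\ref{proABS} with the cycle-counting formula in Theorem~\ref{Th57}. By Theorem~\ref{proABS}, in an ensemble with $d_{v_{\min}}=3$, the only local $(a,a)$ ABS structures whose asymptotic multiplicity is non-zero consist of $a$ degree-$3$ variable nodes forming a simple cycle of length $2a$ together with $a$ satisfied check nodes of degree $2$ in $G(S)$, plus $a$ unsatisfied check nodes of degree $1$ in $G(S)$ arising from the ``third edge'' of each variable node. The first step is therefore to reduce the ABS count to the count of a well-defined cycle structure in the Tanner graph.

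Next, I would argue that, asymptotically, counting such ABS structures is equivalent to counting cycles of length $2a$ in which all $a$ variable nodes have degree $3$. Given such a cycle, each of the $a$ variable nodes has exactly one unused incident edge; for the cycle to lift to an $(a,a)$ ABS of the stated form, three events must hold: (i) the cycle is chordless, (ii) no ``third edge'' lands on a check node already in the cycle, and (iii) no two ``third edges'' land on the same check node outside the cycle. Each of these failure events has probability $O(1/n)$ in the configuration model (this is the same mechanism behind Corollary~\ref{R2} and underlies the Theorem~\ref{T1} bounds), and a union bound over the $O(1)$ events gives that the fraction of qualifying cycles tends to $1$ as $n\to\infty$. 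Conversely, each qualifying cycle uniquely determines one ABS (the variable nodes on the cycle), so the expected ABS count is asymptotic to the expected count of the corresponding cycles.

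Finally, I would instantiate Theorem~\ref{Th57} with $c=2a$, $z'=3$, $\alpha_3=a$ and $\alpha_i=0$ for $i\neq 3$, $\alpha'=0$, together with $\beta_j=0$ for all $j$ and $\beta'=a$ (the check-node degrees on the cycle being free, drawn from $\rho(x)$). Substituting these values into (\ref{eq217}): the two multinomial coefficients both collapse to $\binom{a}{a}=1$; the variable-side factor $\prod_i(\lambda_i(i-1))^{\alpha_i}$ becomes $(2\lambda_3)^{a}$; the $\alpha'=0$ variable-side free factor is trivial; the check-side free factor is $\bigl(\tfrac{2}{|E(G)|}\sum_{i=1}^{n'}\binom{d_i'}{2}\bigr)^{a}$; and the cycle symmetry factor is $1/c = 1/(2a)$. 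Multiplying and collecting $2^a \lambda_3^a \cdot 2^a = 4^a\lambda_3^a$ yields the claimed expression.

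The only nontrivial step is the reduction in the second paragraph; the rest is a direct specialization of Theorem~\ref{Th57}. I would therefore expect the main (minor) obstacle to be a careful justification that the chordless-cycle condition and the two ``distinctness'' conditions on the third edges each contribute only a $1+o(1)$ factor, so that the ABS count and the Theorem~\ref{Th57} cycle count agree to leading order.
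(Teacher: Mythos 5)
Your proposal is correct and follows essentially the same route as the paper: reduce via Theorem~\ref{proABS} to simple cycles of length $2a$ whose variable nodes all have degree $3$, then specialize Theorem~\ref{Th57} with $\alpha_3=a$, $\beta'=a$, and the substitution into (\ref{eq217}) is carried out correctly. Your second paragraph merely spells out in more detail the paper's brief remark that ``in the asymptotic regime, all such cycles are simple,'' which is a welcome but not substantively different elaboration.
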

\begin{proof}
Based on Theorem~\ref{proABS}, the only ABS structures in the $(a,a)$ class whose asymptotic expected multiplicity is non-zero, are simple cycles consisting only of degree-$3$ variable nodes. We can thus estimate the number of such cycles using Theorem~\ref{Th57} by counting cycles of length $c=2a$ that consist of $a$ degree-$3$ variable nodes, and $a$ check nodes of free degree, i.e., $\alpha_3=a$ and $\beta'=a$, and the fact that in the asymptotic regime, all such cycles are simple.
\end{proof}

\begin{cor}\label{cor330}
For an ensemble of irregular graphs with $d_{v_{\min}}=2$, the asymptotic average multiplicity of local $(a,b)$ ABSs for an arbitrary value of $a$ and different values of $b \leq a$, is given by
$$
\mathbf{E}(N_{(a,b)}^{ABS}) \approx \frac{(a-1)!}{2 \times b! \times (a-b)!} \Big(\frac{2\lambda_3}{\lambda_2}\Big)^b \Big(\frac{2 \lambda_2}{|E(G)|}\displaystyle\sum_{i=1}^{n'}{{d_i'}\choose {2}}\Big)^{a}\:. 
$$
\end{cor}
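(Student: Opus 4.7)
The approach is to combine Theorem~\ref{proABS}, which identifies the only $(a,b)$ ABS structures with non-zero asymptotic multiplicity when $d_{v_{\min}}=2$, with Theorem~\ref{Th57}, which estimates the expected number of cycles with specified node degrees.

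First, I would identify the structures to be counted. By Theorem~\ref{proABS}, the only $(a,b)$ ABS structures with non-zero asymptotic multiplicity are those whose variable nodes all have degree $2$ or $3$ and which contain exactly one simple cycle. Applying the ABS condition (each variable node must have more satisfied than unsatisfied neighbors), a degree-$2$ variable node must have both of its edges on the cycle (contributing no unsatisfied neighbor), while a degree-$3$ variable node must have two edges on the cycle and one edge going to a pendant unsatisfied check node of degree $1$. Thus, every $(a,b)$ ABS of this form is a simple cycle of length $2a$ consisting of $a-b$ degree-$2$ and $b$ degree-$3$ variable nodes, with $b$ pendant unsatisfied check nodes attached to the degree-$3$ variable nodes; in particular, one automatically has $b \leq a$.

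Next, I would apply Theorem~\ref{Th57} with $c = 2a$, $\alpha_2 = a - b$, $\alpha_3 = b$, $\alpha' = 0$, and $\beta' = a$ (check-node degrees free) to obtain the asymptotic expected number of cycles of length $2a$ in $G$ whose variable nodes consist of $a-b$ of degree $2$ and $b$ of degree $3$:
\[
N_{\mathrm{cyc}} \approx \binom{a}{a-b,\, b}\, \lambda_2^{a-b}\, (2\lambda_3)^b \left(\frac{2}{|E(G)|}\sum_{i=1}^{n'}\binom{d'_i}{2}\right)^{a} \frac{1}{2a}.
\]
A standard configuration-model argument then shows that, asymptotically, each such cycle extends with probability tending to $1$ to exactly one $(a,b)$ ABS of the desired form: the $b$ remaining half-edges at the degree-$3$ variable nodes a.a.s.\ match to $b$ distinct check nodes outside the cycle (giving valid pendants), and the cycle check nodes a.a.s.\ have no additional edge to another cycle variable node (keeping their induced degree equal to $2$). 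Hence $\mathbf{E}(N_{(a,b)}^{ABS}) \sim N_{\mathrm{cyc}}$. Simplifying via $\binom{a}{a-b,\,b} = \binom{a}{b}$ and the identity $\frac{1}{2a}\binom{a}{b} = \frac{(a-1)!}{2\, b!\, (a-b)!}$, and rearranging powers of $2$ and $\lambda_2$ into the form of the statement, yields the claimed formula. As sanity checks, setting $b=0$ recovers Corollary~\ref{cor543}, and the analogous derivation with $d_{v_{\min}}=3$ and $\alpha_3=a$ recovers Corollary~\ref{cor329}.

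The main obstacle is making the ``with probability tending to $1$'' step precise within the configuration model, i.e.\ showing that shared pendant check nodes, chords across the cycle, or extra edges from cycle check nodes back into $S$ all contribute vanishing expected counts. Since the structure has constant size while $n,n'\to\infty$, these events are of lower order in $n$ and can be neglected by the same kind of reasoning used in the $\mathcal{G}^* \to \mathcal{G}$ reduction in the proof of Theorem~\ref{T1} and in the derivation of Theorem~\ref{Th57}.
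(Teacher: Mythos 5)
Your proposal is correct and follows essentially the same route as the paper: invoke Theorem~\ref{proABS} to reduce the count to simple cycles of length $2a$ with $a-b$ degree-$2$ and $b$ degree-$3$ variable nodes, then apply Theorem~\ref{Th57} with $\alpha_2=a-b$, $\alpha_3=b$, $\beta'=a$ and simplify via $\frac{1}{2a}\binom{a}{b}=\frac{(a-1)!}{2\,b!\,(a-b)!}$. Your added justification of the pendant-check structure and the a.a.s.\ simplicity of the counted cycles is a more explicit version of what the paper leaves implicit.
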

\begin{proof}
Based on Theorem~\ref{proABS}, the only ABS structures in the $(a,b)$ class whose asymptotic expected multiplicity is non-zero, are simple cycles consisting only of degree-$2$ and degree-$3$ variable nodes. We can thus estimate the number of such cycles using Theorem~\ref{Th57} by counting cycles of length $c=2a$ that consist of $\alpha_2=a-b$ degree-$2$ and $\alpha_3 =b$ degree-$3$ variable nodes, and $a$ check nodes of free degree, i.e., $\beta'=a$, and the fact that in the asymptotic regime, all such cycles are simple. The result of the corollary then follows from Equation~(\ref{eq217}) after some simplifications. 
\end{proof}

Based on the discussions in Subsection~\ref{SS4-2}, we know that the only LETS structures that have finite non-zero multiplicity asymptotically are those with only a simple cycle. In Theorem~\ref{Th57}, we derived an asymptotic approximation for the expected number of such cycles with different lengths and with different combination of variable degrees. The following proposition, whose proof is straight forward,  along with the results of Theorem~\ref{Th57} can be used to estimate the asymptotic multiplicity of LETS structures in different classes for irregular codes.

\begin{pro}\label{Pro37}
Consider an ensemble ${\cal G}$ of irregular LDPC codes, and suppose that constants $a$ and $b$ are such that the asymptotic expected multiplicity of $(a,b)$ LETSs in ${\cal G}$ is a finite non-zero value.
Then, the asymptotic average multiplicity of $(a,b)$ LETSs in ${\cal G}$ is equal to the asymptotic expected number of simple cycles of length $2a$ in ${\cal G}$ whose variable node degrees add up to $2a+b$. 
\end{pro}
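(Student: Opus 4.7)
The plan is to combine the structural characterization from Subsection~\ref{SS4-2} with Corollary~\ref{R2} to reduce the counting of $(a,b)$ LETS structures of finite non-zero asymptotic multiplicity to the counting of simple $2a$-cycles satisfying a prescribed total variable-degree condition.

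First, I would pin down the structure of such a LETS. By Lemma~\ref{L3} every LETS contains a cycle, and by Corollary~\ref{R2} a finite non-zero asymptotic multiplicity forces the LETS to contain exactly one simple cycle. Consider the subgraph spanned by the $a$ variable nodes of $S$ together with $N_e(S)$. Each satisfied check node has degree $2$ (elementary property) and, by the LETS property, each variable node is adjacent to at least $2$ satisfied check nodes, so this subgraph has minimum degree $2$. Having exactly one cycle then forces $|V|=|E|$ in this subgraph, which gives $|N_e(S)|=a$ and pins it down as a single simple cycle of length $2a$. Consequently, the $b$ unsatisfied check nodes must be $b$ distinct degree-$1$ leaves attached to the variable nodes of this cycle, and summing variable-side edges yields $\sum_{v\in S} d_v = 2|N_e(S)| + |N_o(S)| = 2a + b$, matching the degree-sum condition in the statement.

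For the converse direction, given a simple $2a$-cycle whose $a$ variable nodes have total Tanner-degree $2a+b$, there are $\sum_{v\in S}(d_v-2)=b$ remaining half-edges at the cycle's variable nodes. These can close into an $(a,b)$ LETS, or fail to do so in one of a finite number of ways: two remaining half-edges may share a check node, or a remaining half-edge may land on a satisfied check node already belonging to the cycle. I would check case by case that each failure mode enlarges the induced subgraph to one with $|V|-|E|\leq -1$, so by Theorem~\ref{T1} its expected contribution is $\Theta(n^{-1})\to 0$. Thus asymptotically almost every such cycle corresponds to an $(a,b)$ LETS, and conversely every $(a,b)$ LETS of non-zero asymptotic multiplicity corresponds to exactly one such cycle (the unique cycle it contains).

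Combining these two directions delivers the claimed asymptotic equality of expected counts. The main obstacle is the converse direction: one must enumerate the (finitely many) ways in which the remaining half-edges could close up badly, verify that each such completion strictly decreases $|V|-|E|$, and then invoke Theorem~\ref{T1} to discard them asymptotically. Once this has been dispatched, the remaining counting step is routine and is consistent with the paper's remark that the proof is straightforward.
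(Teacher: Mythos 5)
Your proof is correct and follows exactly the route the paper intends for this proposition (which it leaves unproved as ``straightforward''): Lemma~\ref{L3} plus Corollary~\ref{R2} force a surviving LETS to be a single simple $2a$-cycle with $b$ pendant degree-$1$ checks, giving the degree sum $2|N_e(S)|+|N_o(S)|=2a+b$, and the converse correspondence holds because every defective completion of such a cycle has $|V|-|E|\leq -1$ and is discarded by Theorem~\ref{T1}. No gaps.
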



The following example explains how the combination of Theorem~\ref{Th57} and Proposition~\ref{Pro37} can be used to estimate the multiplicity of LETS structures in different classes for irregular codes.
\begin{ex}
Consider the ensemble of irregular LDPC codes with degree distributions $\lambda(x)=\lambda_i x^{i-1} + \lambda_j x^{j-1}$ and $\rho(x)=x^{k-1}$, where $i \geq 2$, $j > i$ and $k > 2$ are constant integers.
For any fixed value of $a \geq 2$, any class of $(a,b)$ LETSs with the value of $b$ in the range $a \times (i-2) \leq b \leq a \times (j-2)$ has a non-zero finite expected value asymptotically. For each such class, with given 
$a$ and $b$ values, we can uniquely determine the number of variable nodes of each degree, $i$ and $j$, that exist in simple cycles belonging to the $(a,b)$ LETS class. For this, one needs to solve the following system of linear equations:
$i \alpha_i + j \alpha_j = 2a+b$ and $\alpha_i + \alpha_j = a$, to find the number of variable nodes of degrees $i$ and $j$ in the cycle. Solving these equations results in $\alpha_j = (2a+b-ia)/(j-i)$ and $\alpha_i = a - \alpha_j$. Using Theorem~\ref{Th57}, we can then estimate the asymptotic expected multiplicity
of $(a,b)$ LETSs by
\begin{equation}
\mathbf{E}(N_{(a,b)}^{LETS}) \approx \displaystyle {{a}\choose{\alpha_i,\alpha_j}}((i-1) \lambda_i)^{\alpha_i} ((j-1) \lambda_j)^{\alpha_j} (k-1)^{a} \times \dfrac{1}{2a}\:.
\label{eqex}
\end{equation}
\label{Ex29}
\end{ex}

\section{Asymptotic Expected Multiplicity of ETS Structures in Regular and Irregular Bipartite Graphs}
\label{sec-ets}

The only trapping sets with finite non-zero asymptotic expected multiplicity are those which contain a single (simple) cycle. One can thus see that in the asymptotic regime, ETS structures with finite non-zero expected multiplicity consist of a simple cycle and a collection of trees stemming from the variable nodes of that cycle. Each such tree is rooted at the corresponding variable node and contains check nodes of degrees two and one, where degree-$1$ check nodes form all the leaves of the tree. The problem of computing the multiplicity of ETSs with $a$ variable nodes can thus be broken into two subproblems: (i) calculating the multiplicity of simple cycles of different length $\ell$ ($\leq 2a$) with different variable degrees (if the graph is irregular), and (ii) calculation of the number of possible ways that a collection of trees with $a - \ell/2$ variable nodes (in addition to the roots) can be attached to the variable nodes of the cycle. To solve the first subproblem, we use the results of~\cite{dehghan2016new} and those obtained in Section~\ref{LETS-IRREG}. To tackle the second subproblem, we formulate the problem as a recursive counting problem whose solution is a generalization of Catalan numbers~\cite{MR1098222}. 

To formulate Subproblem (ii), we consider a simple cycle $C$ of length $\ell = 2(a-i)$, where $0 \leq i \leq a-g/2$. In this case, the number of variable nodes within the ETS but outside $C$ is $i$. For a given value of $i$, we partition $i$
into $a-i$ non-negative integers $k_1, \ldots, k_{a-i}$, where each $k_j, 1 \leq j \leq a-i$, satisfies $k_j \leq i$, and corresponds to one of the variable nodes of $C$.  The integer $k_j$, in fact, is used to denote the number of nodes outside $C$ that are in the tree rooted at the $j$-th variable node of $C$. We thus have the constraint $\sum_{j=1}^{a-i} k_j =i$. We further use the notation $t_p, 0 \leq p \leq i$, to denote the multiplicity of $k_j$'s with $|k_j|=p$. Thus, we have 
$\sum_{p=1}^i p \times t_p = i$, and $t_0 = a - i - \sum_{p=1}^i t_p$. We use the notation  $(t_0, t_1,  \ldots, t_{i} \bigtriangledown a)$ for any sequence on non-negative integers $t_0, \ldots, t_i$ that satisfy these two equations. For example, 
$(8,1,0,0,1,0 \bigtriangledown 15)$.  

 
In the following, we first consider the case of biregular graphs and then generalize the results to variable-regular and irregular cases.

\subsection{Biregular Bipartite Graphs}

\begin{theo} \label{Th95}
Consider the ensemble of $(d_v,d_c)$-regular graphs in which the number of nodes tends to infinity. Given fixed values of $a$, $d_c$ and $g$, for $d_v=3$, we have
\begin{align*}\label{eq47}
\mathbf{E}(N_{(a,a)}^{ETS})&\sim\sum_{i=0}^{a-g/2} \sum_{{t_0, t_1,\ldots, t_{i} }\atop {(t_0, t_1,  \ldots, t_{i} \bigtriangledown a)}}\Big[ \dfrac{ 2^{a-i}(d_c-1)^{a } }{2(a-i) }{{a-i}\choose{t_0, t_1,t_2, \ldots, t_{i} }}   \prod_{j=1}^{i} (\frac{1}{j}{{2j }\choose{j-1}})^{t_j}\Big],\numberthis
\end{align*}
and for a constant $d_v \geq 4$, we have 
\begin{equation}
\label{eq48}
\mathbf{E}(N_{(a,(d_v-2)a)}^{ETS})  \approx 
\sum_{i=0}^{a-g/2} \sum_{{t_0, t_1,\ldots, t_{i} }\atop {(t_0, t_1,  \ldots, t_{i} \bigtriangledown a)}}\Big[ \dfrac{ (d_v-1)^{a-i}(d_c-1)^{a }}{2(a-i)}  {{a-i}\choose{t_0, t_1,t_2, \ldots, t_{i} }}    \prod_{j=1}^{i} (\frac{ 1}{j+1}{{(d_v-1)(j+1) }\choose{j}} )^{t_j}   \Big]\:,
\end{equation}
where the asymptotic approximation is an upper bound within a multiplicative factor of $(a-g/2+1)^{-a}$ from the exact value.
\end{theo}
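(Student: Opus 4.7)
The plan is to decompose each non-isomorphic ETS structure with finite, non-zero asymptotic multiplicity into a simple cycle $C$ plus trees rooted at the variable nodes of $C$, and then count cycles and tree attachments separately, combining the results through the configuration-model argument behind Theorem~\ref{T1}.

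First, I would invoke Theorem~\ref{C2} and Lemma~\ref{NEWL1} to see that the ETS structures of interest satisfy $|V(\mathcal{S})|=|E(\mathcal{S})|$ and therefore contain exactly one cycle. Removing the edges of this cycle $C$ leaves a forest whose components are rooted trees attached at the variable nodes of $C$; if $C$ has length $\ell=2(a-i)$, then $i$ equals the number of extra variable nodes living in these trees, and the girth bound $\ell\geq g$ produces the summation range $0\leq i\leq a-g/2$. I would then classify the trees themselves: with variable degree $d_v$ and check degrees in $\{1,2\}$, the tree stemming from a cycle vertex and containing $j$ extra variable nodes is a plane binary tree on $j$ internal nodes when $d_v=3$, and a plane $(d_v-1)$-ary tree on $j$ internal nodes when $d_v\geq 4$. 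The number of such trees is given by the Catalan number $\frac{1}{j}\binom{2j}{j-1}$ or the Fuss--Catalan number $\frac{1}{j+1}\binom{(d_v-1)(j+1)}{j}$, respectively, which are exactly the innermost factors in~(\ref{eq47}) and~(\ref{eq48}).

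For fixed $i$ and a fixed type vector $(t_0,\dots,t_i\bigtriangledown a)$ recording how many cycle vertices carry a tree of each size, distributing the $a-i$ cycle vertices among the $i+1$ possible tree sizes yields the multinomial $\binom{a-i}{t_0,t_1,\dots,t_i}$, and filling each slot of size $j\geq 1$ with one of the (Fuss--)Catalan tree shapes supplies the product $\prod_{j\geq 1}(\cdot)^{t_j}$. The asymptotic expected number of copies of a single such ETS is then obtained by the configuration-model computation of Theorem~\ref{T1}: Theorem~\ref{Th45}, specialized to the $(d_v,d_c)$-regular case, supplies the cycle factor $((d_v-1)(d_c-1))^{a-i}/(2(a-i))$, while each of the $i$ degree-$2$ check nodes sitting directly above an extra variable node contributes an additional factor of $(d_c-1)$ from the available free cells in its degree-$d_c$ bin. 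After all linear-in-$n$ terms cancel against $\eta^{|E(\mathcal{S})|}$, these combine into the explicit $(d_v-1)^{a-i}$ and $(d_c-1)^{a}=(d_c-1)^{a-i}(d_c-1)^i$ factors displayed in the theorem, and summing over all admissible $i$ and type vectors yields formulas~(\ref{eq47}) and~(\ref{eq48}).

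The main obstacle should be the $d_v\geq 4$ case, where at each degree-$d_c$ check bin used by a tree edge there are $d_c(d_c-1)$ ordered cell choices and some orderings correspond to isomorphic ETS structures; controlling this over-counting calls for a Specht-ratio--style bound analogous to the one developed in the proof of Theorem~\ref{Th57}, and this is the source of the multiplicative approximation factor $(a-g/2+1)^{-a}$. For $d_v=3$ the tree branching is tame enough that the enumeration can be carried out exactly at each step, which is why~(\ref{eq47}) is stated with ``$\sim$'' whereas~(\ref{eq48}) carries only ``$\approx$''.
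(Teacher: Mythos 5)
Your overall architecture --- a single simple cycle of length $2(a-i)$ plus rooted trees, the girth-driven range $0\le i\le a-g/2$, the multinomial over type vectors $(t_0,\ldots,t_i\bigtriangledown a)$, the cycle factor $((d_v-1)(d_c-1))^{a-i}/(2(a-i))$, and one factor of $d_c-1$ per off-cycle variable node --- matches the paper's proof. But there is a genuine gap in the tree-counting step for $d_v\ge 4$. A basic tree hanging off a cycle vertex $v$ has its root restricted to degree at most $d_v-2$, because two of the $d_v$ edges at $v$ are already consumed by the cycle, while the other variable nodes in the tree have degree at most $d_v$. The exact number of such trees with $j$ non-root variable nodes is therefore the quantity the paper calls $B_{j+1}^{d_v}$, a convolution of $d_v-2$ generalized Catalan numbers, and \emph{not} the Fuss--Catalan number $\frac{1}{j+1}{{(d_v-1)(j+1)}\choose{j}}$, which counts trees whose root may have up to $d_v-1$ children. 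For instance, $B_2^4=2$ while the Fuss--Catalan value is $3$. (For $d_v=3$ the root has a single child slot, the two counts coincide with the ordinary Catalan number $\frac{1}{j}{{2j}\choose{j-1}}$, and your claim is correct there --- which is exactly why (\ref{eq47}) holds with $\sim$.) Treating the Fuss--Catalan number as the exact count for $d_v\ge4$ would turn (\ref{eq48}) into an asymptotic equality, which it is not; it is only an upper bound.

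This error then propagates into your account of the accuracy factor. The factor $(a-g/2+1)^{-a}$ has nothing to do with Specht's ratio or with over-counting of cell orderings in the configuration model: in the biregular ensemble every check node has degree exactly $d_c$, so no Specht-type averaging over check degrees is needed (that machinery enters only in Theorems~\ref{NT3} and \ref{NT8}, where the check degrees adjacent to the trees are random). The factor comes instead from sandwiching the true tree count between Fuss--Catalan quantities: the paper shows $C_{j+1}^{d_v}/(j+1)\le B_{j+1}^{d_v}\le C_{j+1}^{d_v}$ with $C_{j+1}^{d_v}=\frac{1}{j+1}{{(d_v-1)(j+1)}\choose{j}}$, substitutes the upper bound to obtain (\ref{eq48}), and then bounds the accumulated loss $\prod_{j}(j+1)^{-t_j}$ from below by $(a-g/2+1)^{-a}$. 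To repair your argument you need to (i) count the trees with the correct root-degree constraint $d_v-2$, and (ii) derive the two-sided bound relating that count to the closed-form Fuss--Catalan numbers.
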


\begin{proof}{
We first note that based on Theorem~\ref{C2}, the only classes of ETSs with finite non-zero asymptotic expected multiplicity are those with $b=a(d_v-2)$. These classes, which consist of all the structures that contain a single (simple) cycle, are the focus of this theorem. 
In the asymptotic regime, any $(a, a(d_v-2))$ ETS ${\cal S}$ contains a simple cycle $C$ whose neighborhood within any constant radius is a forest with probability one. 
The structure of this forest follows the degree distribution of the ensemble, i.e., any tree rooted at one of the variable nodes of $C$ is connected to $d_v-2$ check nodes and each such check node is connected to $d_c-1$ other variable nodes, which in turn are each connected to $d_v-1$ other check nodes and so on. The ETS ${\cal S}$, in addition to $C$, consists of some trees, each  rooted at one of the variable nodes of $C$. Consider a variable node $v$ of $C$, which is the root of a tree $T_v$ within ${\cal S}$. The tree $T_v$ is a subgraph of the neighborhood tree of $C$ rooted at $v$. We refer to the collection of trees such as $T_v$, which are part of the ETS ${\cal S}$, and each are rooted at one of the variable nodes of $C$, as the {\em forest of} ${\cal S}$. Each tree $T_v$ in this forest, is called a {\em basic tree}. The number of variable nodes other than the root in a basic tree is called the {\em variable-size} of the tree or ``v-size,'' in brief. Variable-size zero for a basic tree $T_v$ implies that variable node $v$ is connected to $d_v-2$ check nodes of degree-$1$ within the ETS. Basic trees have the following properties: (i) the root has degree $d_v-2$, (ii) the degree of check nodes is either two or one, (iii) the degree of all the variable nodes in the tree, except the root, is $d_v$, and (iv) the leaves of the tree are all check nodes (with degree one).

We use the notation $R(j,d_v,d_c)$ to denote the number of possible basic trees with constant v-size $j$ initiated from an arbitrary variable node of an arbitrary simple cycle within a Tanner graph from the ensemble of $(d_v,d_c)$-regular graphs in the asymptotic regime. One can then see that
\begin{align*}\label{F3}
N_{(a,(d_v-2)a)}^{ETS} &\sim\sum_{i=0}^{a-g/2} \sum_{{t_0,t_1,\ldots, t_{i} }\atop {(t_0,t_1,  \ldots, t_{i} \bigtriangledown a)}}\Big[ N_{2(a-i)} {{a-i}\choose{t_0,t_1,t_2, \ldots, t_{i}}}  \times \prod_{j=1}^{i}   R(j,d_v,d_c)^{t_j} \Big]\:.\numberthis
\end{align*}
In (\ref{F3}), the outer summation is to count simple cycles of different length $2(a-i)$, and the inner summation is responsible for counting all the possible ways that the $i$ variable nodes that are not part of the cycle can be partitioned into subsets of different size $j$. The multinomial coefficient describes the number of ways that the $a-i$ variable nodes of the simple cycle can be connected to basic trees of different v-sizes.

To calculate $R(j,d_v,d_c)$, we note that any of the $j$ variable nodes in a basic tree has a check node of degree $2$ as its parent. We thus have
\begin{equation}\label{F01}
R(j,d_v,d_c)=(d_c-1)^{j} \times B_{j+1}^{d_{v}}\:,
\end{equation}
where $B_{j+1}^{d_{v}}$ denotes the number of possible rooted trees with $j+1$ nodes such that in each tree, the root has degree at most $d_v - 2$, and all the other nodes have degree at most $d_v$.
The term  $(d_c-1)^{j}$ in (\ref{F01}) accounts for the fact that at each non-leaf check node of a basic tree, one has $d_c-1$ choices to select the variable node child. 

By replacing (\ref{F01}) in (\ref{F3}), we obtain
\begin{align*}\label{F60}
N_{(a,(d_v-2)a)}^{ETS} &\sim\sum_{i=0}^{a-g/2} \sum_{{t_0,t_1,\ldots, t_{i} }\atop {(t_0,t_1,  \ldots, t_{i} \bigtriangledown a)}}\Big[ N_{2(a-i)} {{a-i}\choose{t_0,t_1,t_2, \ldots, t_{i}}}  \times (d_c-1)^{i} \prod_{j=1}^{i}\Big(   B_{j+1}^{d_{v}}\Big)^{t_j} \Big]\:.\numberthis
\end{align*}
Taking the expected value of the above expression and replacing $\mathbf{E}(N_{2(a-i)})$ in the expected value with (\ref{eqsd}), we have
\begin{align*}\label{F8}
\mathbf{E} (N_{(a,(d_v-2)a)}^{ETS})&\sim\sum_{i=0}^{a-g/2} \sum_{{t_0, t_1,\ldots, t_{i} }\atop {(t_0, t_1,  \ldots, t_{i}\bigtriangledown a)}}\Big[ \dfrac{ (d_v-1)^{a-i}(d_c-1)^{a }}{2(a-i)}  {{a-i}\choose{t_0, t_1,t_2, \ldots, t_{i} }}  \prod_{j=1}^{i} \Big( B_{j+1}^{d_{v}}\Big)^{t_j }  \Big]\:.\numberthis
\end{align*}

By the definition of $B_{j+1}^{d_{v}}$, we have
\begin{equation}\label{F02}
B_{j+1}^{d_{v}}=\sum_{ {k_1,\ldots,k_{d_{v}-2}\geq 0}   \atop {k_1+\cdots+k_{d_{v}-2}=j } }
C_{k_1 }^{d_v} \times C_{k_2 }^{d_v} \times \cdots \times C_{k_{d_{v}-2}}^{d_v},
\end{equation}
where $C_{j}^{d_v}$ is the number of rooted trees with $j$ nodes such that the root has degree at most $d_v-1$ and each other node has degree at most $d_v$. To calculate  $C_{j}^{d_v}$,
we can then use the following recursion
\begin{equation}
C_{j}^{d_{v}}=\sum_{ {k_1,\ldots,k_{d_{v}-1}\geq 0}   \atop {k_1+\cdots+k_{d_{v}-1}=j-1 } }
C_{k_1 }^{d_v} \times C_{k_2 }^{d_v} \times \cdots \times C_{k_{d_{v}-1}}^{d_v}\:, 
\label{eq27}
\end{equation}
with base cases $C_{0}^{d_{v}}=C_{1}^{d_{v}}=1$.
%
%
%
%
The solution to the recursion (\ref{eq27}) is a generalization of Catalan numbers \cite{MR1098222}, and is given by
\begin{equation}\label{F4}
C_{j}^{d_{v}}=\frac{1}{j}{{j(d_v-1)}\choose{j-1}}.
\end{equation}

For the case $d_v=3$, by (\ref{F02}), we  have $B_{j +1}^{3}=C_{j }^{3}$, and thus,
\begin{equation}\label{F5}
B_{j+1}^{3}=\frac{1}{j}{{2j }\choose{j-1}}.
\end{equation}
By replacing $B_{j +1}^{3}$ with (\ref{F5}) in (\ref{F8}), we obtain (\ref{eq47}).

To derive (\ref{eq48}), we first establish an upper and a lower bound on  $B_{j+1}^{d_{v}}$ in terms of $C_{j+1}^{d_{v}}$. For the upper bound, it is clear that
\begin{equation}\label{eq01}
B_{j+1}^{d_{v}} \leq C_{j+1}^{d_{v}}
\end{equation}
For the lower bound, by the definition of $C_{j+1}^{d_{v}}$, we have
\begin{equation}\label{eq05}
C_{j+1}^{d_{v}}= \sum_{i=0}^{j} B_{j+1-i}^{d_{v}} \times C_{i}^{d_{v}}\:.
\end{equation}
On the other hand, for each $i$ in the above summation, the following inequality holds:
\begin{equation}\label{eq06}
 B_{j+1-i}^{d_{v}} \times C_{i}^{d_{v}} \leq  B_{j}^{d_{v}}\:.
\end{equation}  
Combining (\ref{eq06}) with (\ref{eq05}), we obtain
\begin{equation}\label{eq72}
C_{j+1}^{d_{v}} \leq (j+1) \times B_{j}^{d_{v}}\:.
\end{equation}
This together with $B_{j}^{d_{v}} \leq B_{j+1}^{d_{v}}$ results in
\begin{equation} \label{eq009}
\dfrac{C_{j+1}^{d_{v}}}{j+1}\leq B_{j+1}^{d_{v}}\:.
\end{equation}
By replacing $B_{j +1}^{d_v}$ with the upper bound of (\ref{eq01}) in (\ref{F8}), and using (\ref{F4}) for $C_{j+1}^{d_{v}}$,
we obtain the asymptotic upper bound of (\ref{eq48}).

To obtain the asymptotic lower bound that determines the accuracy of Estimate (\ref{eq48}), we apply the lower bound of (\ref{eq009}) on $B_{j+1}^{d_{v}}$ to (\ref{F8}). This results in the following lower bound
\begin{align*}
\sum_{i=0}^{a-g/2} \sum_{{t_0, t_1,\ldots, t_{i} }\atop {(t_0, t_1,  \ldots, t_{i} \bigtriangledown a)}}\Big[ \dfrac{ (d_v-1)^{a-i}(d_c-1)^{a }(a-i)! }{2(a-i)(t_0)!(t_1)!\cdots(t_{i})! }    \prod_{j=1}^{i} (\frac{ 1}{(j+1)^2}{{(d_v-1)(j+1) }\choose{j}} )^{t_j}   \Big]\:.
\end{align*}
This lower bound can be further bounded from below by the following equation:
\begin{align*}\label{eq260}
\sum_{i=0}^{a-g/2} \sum_{{t_0, t_1,\ldots, t_{i} }\atop {(t_0, t_1,  \ldots, t_{i} \bigtriangledown a)}}\Big[ \dfrac{ (d_v-1)^{a-i}(d_c-1)^{a }(a-i)! }{2(a-i)(t_0)!(t_1)!\cdots(t_{i})! }    \prod_{j=1}^{i} (\frac{ 1}{j+1}{{(d_v-1)(j+1) }\choose{j}} )^{t_j}   \Big]  \times \frac{1}{(a-g/2+1)^a}\:.\numberthis
\end{align*}
To obtain (\ref{eq260}), we have used the following sequence of inequalities:
$$
\prod_{j=1}^i \frac{1}{(j+1)^{t_j}} \geq \prod_{j=1}^i \frac{1}{(i+1)^{t_j}} = \frac{1}{(i+1)^{\sum_{j=1}^i t_j}} \geq \frac{1}{(i+1)^{a-i}} \geq \frac{1}{(a-g/2+1)^a}\:.
$$
}\end{proof}

\begin{rem}
We note that (\ref{eq48}) is, in general, less accurate than (\ref{eq47}), in the sense that while (\ref{eq47}) is asymptotically equal to the true expected value $\mathbf{E} (N_{(a,(d_v-2)a)}^{ETS})$, Equation (\ref{eq48}) provides only  an asymptotic upper bound for $\mathbf{E} (N_{(a,(d_v-2)a)}^{ETS})$. It should, however, be noted that in (\ref{F8}), we present an asymptotic value for $\mathbf{E}(N_{(a,(d_v-2)a)}^{ETS})$ even when $d_v\geq 4$. 
To use (\ref{F8}), one needs to calculate the exact value of $B_{j+1}^{d_{v}}$ using (\ref{F02}) and (\ref{F4}). This can be easily performed for the cases where $j$ and $d_v$ are relatively small. In Table \ref{TAB1}, we have listed the values of $B_{j+1}^{d_{v}}$ for some small values of $d_v$ and $j$.  
\end{rem}

\begin{table}[h]
\centering
\caption{Values of $B_{j+1}^{d_{v}}$ for $d_v = 4, 5$, and $j=0, 1, 2, 3$.}
\label{TAB1}
	\begin{tabular}{ |c|c|c|c|c| } 
		\hline
		 &  $B_{1}^{d_{v}}$ & $B_{2}^{d_{v}}$ & $B_{3}^{d_{v}}$ & $B_{4}^{d_{v}}$ \\
		\hline
$d_v=4$	 & 1               &      2          &  7              &    30            \\ 
\hline 
$d_v=5$	& 1               &      3          &  15             &    91            \\
		\hline
		\end{tabular}
\end{table}


\subsection{Variable-regular bipartite graphs}

\begin{theo} \label{NT3}
Consider the random ensemble $\mathcal{G}$ of variable-regular bipartite graphs with variable degree $d_v$ and check node degree distribution $\rho(x)$, corresponding to the set of check node degrees $\{d'_i\}_{i=1}^{n'}$, and let the number of nodes in the graphs tend to infinity.
In $\mathcal{G}$, for fixed values $a$, $d_v$, $d_{c_{\max}}$ and $g$, we have
\begin{align*}\label{eq200}
\mathbf{E}( N_{(a,a(d_v-2))}^{ETS} ) \approx \sum_{i=0}^{a-g/2} \sum_{{t_0, t_1,\ldots, t_{i} }\atop {(t_0, t_1,  \ldots, t_{i} \bigtriangledown a)}}
\Big[&
\dfrac{\displaystyle\Big( \frac{2(d_v-1)}{nd_v}\displaystyle\sum_{i=1}^{n'}{{d_i'}\choose {2}}\Big)^{a-i}}{2(a-i)} {{a-i}\choose{t_0,t_1,t_2, \ldots, t_{i} }} \times\\
&(\overline{d_c}-1)^{i}  \prod_{j=1}^{i} (\frac{ 1}{j+1}{{(d_v-1)(j+1) }\choose{j}} )^{t_j}   \Big]. \numberthis
\end{align*}
where $\overline{d_c}$ is the average check node degree, and the approximation is an asymptotic upper bound within the fixed multiplicative factor of 
$[(a-g/2+1) \times S(h_w)]^{-a} $ from the exact value, with $S(h)$ being Specht's ratio, defined in Theorem~\ref{Th57},  
and $h_w= \frac{d_{c_{\max}} (d_{c_{\max}}-1)}{d_{c_{\min}}(d_{c_{\min}}-1)}$.
\end{theo}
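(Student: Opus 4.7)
The plan is to mirror the proof of Theorem~\ref{Th95} for the biregular case, making only two substantive changes: the cycle-count estimate (\ref{eqsd}) must be replaced by the variable-regular version (\ref{eq23}), and the constant factor $(d_c-1)$ arising from child-selection at each interior check node of a basic tree must be averaged over the check-node degree distribution $\rho(x)$, with the averaging error absorbed into a Specht-ratio factor exactly as in the proof of Theorem~\ref{Th57}. The combinatorial skeleton is otherwise unchanged: in the asymptotic regime, any $(a,a(d_v-2))$ ETS with finite non-zero expected multiplicity consists of a simple cycle $C$ of length $2(a-i)$ for some $0 \leq i \leq a-g/2$, together with $i$ additional variable nodes organized into basic trees rooted at the $a-i$ variable nodes of $C$. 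The partition of these $i$ external variable nodes by the multi-index $(t_0,t_1,\ldots,t_i \bigtriangledown a)$, the multinomial $\binom{a-i}{t_0,\ldots,t_i}$, and the underlying tree-shape counts $B_{j+1}^{d_v}$ are all inherited verbatim from the biregular argument.

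For the cycle factor, I would substitute (\ref{eq23}) for $\mathbf{E}(N_{2(a-i)})$; using $|E|=nd_v$ this produces exactly $\frac{1}{2(a-i)}\bigl(\frac{2(d_v-1)}{nd_v}\sum_{k=1}^{n'}\binom{d'_k}{2}\bigr)^{a-i}$, matching the first factor in (\ref{eq200}). For the basic-tree realizations, each of the $i$ external variable nodes is reached through a unique interior check node, so the number of Tanner-graph realizations of a fixed forest shape contributes one factor of $(d'-1)$ per interior check node, where $d'$ is that node's degree. In the asymptotic regime where $i$ is constant and $n' \to \infty$, these $i$ interior check nodes are effectively sampled distinctly and edgewise from the $n'$ check nodes, and applying the Maclaurin-plus-Specht argument of (\ref{ineq1})--(\ref{eq2w}) with $a_k=d'_k(d'_k-1)$ and subset size $i$ sandwiches the expectation of $\prod(d'-1)$ asymptotically between $S(h_w)^{-i}(\overline{d_c}-1)^i$ from below and $(\overline{d_c}-1)^i$ from above, where $\overline{d_c}-1=\frac{2}{nd_v}\sum_{k=1}^{n'}\binom{d'_k}{2}$ is the edge-perspective mean of $d'-1$.

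Assembling the cycle factor, the forest shape count, and the check-degree factor, and using the upper bound $B_{j+1}^{d_v}\leq C_{j+1}^{d_v}=\frac{1}{j+1}\binom{(d_v-1)(j+1)}{j}$ from (\ref{eq01}) and (\ref{F4}), yields the right-hand side of (\ref{eq200}) as an asymptotic upper bound on $\mathbf{E}(N_{(a,a(d_v-2))}^{ETS})$. For the matching lower bound I would invoke (\ref{eq009}) to obtain $B_{j+1}^{d_v}\geq C_{j+1}^{d_v}/(j+1)$, introducing an extra $\prod_j(j+1)^{-t_j}$; bounding this below by $(a-g/2+1)^{-a}$ via the same telescoping estimate used at the end of the proof of Theorem~\ref{Th95}, and combining with the $S(h_w)^{-i}$ loss from the check-degree averaging and the trivial bound $i\leq a$, produces the claimed accuracy factor $[(a-g/2+1)\,S(h_w)]^{-a}$. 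The main obstacle, as in Theorem~\ref{Th57}, lies in the clean handling of the check-degree product: the averaging has to be performed inside the tree-counting step (rather than the cycle-counting step, where it was treated in Theorem~\ref{Th57}), and one must verify that the $i$ interior check nodes distributed across the different basic trees are distinct in the graph and that the Maclaurin--Specht bound transfers uniformly across all of them, which is precisely what forces the use of $h_w=d_{c_{\max}}(d_{c_{\max}}-1)/[d_{c_{\min}}(d_{c_{\min}}-1)]$ rather than a tighter ratio adapted to individual trees.
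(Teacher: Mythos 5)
Your proposal follows the paper's proof of Theorem~\ref{NT3} almost exactly: the same decomposition of an $(a,a(d_v-2))$ ETS into a simple cycle of length $2(a-i)$ plus basic trees rooted at its variable nodes, the same partition/multinomial bookkeeping inherited from Theorem~\ref{Th95}, the substitution of (\ref{eq23}) for the cycle count, a factor $(\overline{d_c}-1)$ per interior check node, and the sandwich $C_{j+1}^{d_v}/(j+1)\leq B_{j+1}^{d_v}\leq C_{j+1}^{d_v}$ producing the $(a-g/2+1)^{-a}$ loss.

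The one place you diverge is in where the Specht-ratio loss is charged. The paper models the degree of each parent check node inside a basic tree as an independent draw from the edge-perspective distribution, so that $\mathbf{E}(R_k(j,d_v,\rho(x)))=(\overline{d_c}-1)^{j}B_{j+1}^{d_v}$ holds as an exact expectation (Equation (\ref{eq019})) and the tree part incurs no loss; the entire factor $S(h_w)^{-a}$ is instead attributed to the accuracy of the cycle-count formula (\ref{eq23}), which is only an asymptotic \emph{upper} bound, within $S(h_w)^{-(a-i)}$ of the true $\mathbf{E}(N_{2(a-i)})$ (Theorem~1 of \cite{dehghan2016new}), with the worst case taken at $i=0$. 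You do the opposite: you treat the substitution of (\ref{eq23}) as exact and charge $S(h_w)^{-i}$ to a Maclaurin--Specht bound on the elementary symmetric function of the values $d'_k(d'_k-1)$ over the $i$ interior tree check nodes. As written, your lower bound therefore omits the error intrinsic to (\ref{eq23}) itself. If you restore it, the two losses combine to $S(h_w)^{-i}\cdot S(h_w)^{-(a-i)}=S(h_w)^{-a}$ --- one factor per degree-two check node of the structure, of which there are exactly $a$ --- so the claimed accuracy factor still comes out right; but you should either justify treating (\ref{eq23}) as asymptotically exact or explicitly account for both error sources in this way.
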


\begin{proof}
Using similar discussions as those in the proof of Theorem~\ref{Th95}, we obtain the following equation: 
\begin{align*}\label{F3000}
N_{(a,a(d_v-2))}^{ETS} &\sim\sum_{i=0}^{a-g/2} \sum_{{t_0, t_1,\ldots, t_{i} }\atop {(t_0, t_1,  \ldots, t_{i} \bigtriangledown a)}}\Big[ N_{2(a-i)} {{a-i}\choose{t_0, t_1,t_2, \ldots, t_{i} }}  \times \prod_{j=1}^{i} \prod_{k=1}^{t_j}  R_k(j,d_v,\rho(x)) \Big],\numberthis
\end{align*}
in which $R_k(j,d_v,\rho(x))$ denotes the number of possible basic trees of v--size $j$ in the ensemble that are rooted at the $k$th variable node out of $t_j$ variables nodes (of the cycle) whose basic trees all have v-size $j$. 
The terms $R_k(j,d_v,\rho(x))$ are random variables due to the randomness in the degree of check nodes participating in the basic trees. For different values of $k$, these random variables are independent and identically distributed, and have the following expected value:
\begin{equation} \label{eq019}
\mathbf{E}(R_k(j,d_v,\rho(x))) = (\overline{d_c}-1)^{j} \times B_{j+1}^{d_v},
\end{equation}
where $B_{j+1}^{d_v}$ was defined in the proof of Theorem~\ref{Th95}. To derive~(\ref{eq019}), we have used the fact that each check node that is a parent of one of the $j$ variable nodes in a basic tree of v-size $j$ has a degree selected randomly and independently from the distribution $\rho(x)$. Taking the expected value of (\ref{F3000}), replacing (\ref{eq019}) for $\mathbf{E}(R_k(j,d_v,\rho(x)))$ in the expected value, and considering the fact that random variables $R_k(j,d_v,\rho(x)))$ are independent for different values of $k$ and $j$, we obtain
\begin{align*}\label{eq018}
\mathbf{E}(N_{(a,a(d_v-2))}^{ETS} )&\sim\sum_{i=0}^{a-g/2} \sum_{{t_0, t_1,\ldots, t_{i} }\atop {(t_0, t_1,  \ldots, t_{i} \bigtriangledown a)}}\Big[ \mathbf{E}(N_{2(a-i)}) {{a-i}\choose{t_0, t_1,t_2, \ldots, t_{i} }} (\overline{d_c}-1)^{i} \times \prod_{j=1}^{i}  ( B_{j+1}^{d_v}) ^{t_j} \Big]\:.\numberthis
\end{align*} 
By replacing $\mathbf{E}(N_{2(a-i)})$ with (\ref{eq23}), $B_{j+1}^{d_v}$ with the upper bound of (\ref{eq01}), and using (\ref{F4}) for $C_{j+1}^{d_{v}}$ in (\ref{eq018}), we derive (\ref{eq200}). To prove the accuracy claimed in the theorem, we derive an asymptotic lower bound which is equal to the upper bound of (\ref{eq200}) multiplied by $[(a-g/2+1) \times S(h_w)]^{-a}$. The term $(a-g/2+1)^{-a}$ is due to lower bounding $B_{j+1}^{d_v}$ with the exact same steps as those taken in the proof of Theorem~\ref{Th95}. The term $S(h_w)^{-a}$ is due to lower bounding $\mathbf{E}(N_{2(a-i)})$ with $S(h_w)^{-(a-i)}$ times the asymptotic upper bound of (\ref{eq23}), as explained in Theorem~1 of~\cite{dehghan2016new}, and then taking the smallest value of $S(h_w)^{-(a-i)}$, by setting $i=0$.
\end{proof}

\subsection{Irregular bipartite graphs}
In irregular bipartite graphs, for a given value of $a$, the $(a,b)$ ETSs whose asymptotic multiplicity is a non-zero constant on average can have a variety of $b$ values. In Proposition~\ref{NT1}, we estimated the sum of all such ETSs over different values of $b$. In this subsection, we derive asymptotic estimates for the expected number of ETSs in a given $(a,b)$ class. For a given value of $a$, we start from the smallest $b$ value, i.e., $a(d_{v_{\min}} -2)$. This corresponds to ETSs that consist of a simple cycle and some trees connected to the variable nodes of that simple cycle, and all the variable nodes in the ETS have degree $d_{v_{\min}}$. This class is of particular interest since it is well-known that among $(a,b)$ ETSs with the same value of $a$, those with the smallest $b$ value are most harmful.

\begin{theo} \label{NT8}
Consider the random ensemble $\mathcal{G}$ of irregular bipartite graphs with variable and check node degree distributions $\lambda(x) =\sum_{i} \lambda_i x^{i-1}$ and $\rho (x) = \sum_{i} \rho_i x^{i-1}$, respectively, where
$\rho(x)$ corresponds to the set of check node degrees $\{d'_i\}_{i=1}^{n'}$. Let the number of nodes in the graphs of $\mathcal{G}$ tend to infinity.
In $\mathcal{G}$, for fixed values $a$, $d_{v_{\max}}$, $d_{c_{\max}}$ and $g$, we have
\begin{align*}\label{XF5}
\mathbf{E}( N_{(a,a(d_{v_{\min}}-2))}^{ETS}) &\approx \sum_{i=0}^{a-g/2} \sum_{{t_0, t_1,\ldots, t_{ i} }\atop {(t_0, t_1,  \ldots, t_{ i} \bigtriangledown a)}}\Big[ \displaystyle \Big(\frac{2\lambda_q(q-1)}{|E(G)|} \displaystyle\sum_{i=1}^{n'}{{d_i'}\choose {2}}\Big)^{(a-i)} \times \dfrac{1}{2(a-i)}\\
&\times{{a-i}\choose{t_0, t_1,t_2, \ldots, t_{ i} }} (\overline{d_c}-1)^{i} \times (\dfrac{\lambda_q/q}{\sum_{j}\lambda_j/j})^{i} \\
&\times \prod_{j=1}^{ i}  (\frac{ 1}{j+1}{{(q-1)(j+1) }\choose{j}})^{t_j} \Big]\:,\numberthis
\end{align*}
where $q = d_{v_{\min}}$, $|E(G)|$ is the number of edges in the graph $G \in \mathcal{G}$, and the approximation is an asymptotic upper bound within the fixed multiplicative factor of $[(a-g/2+1) \times S(h_w)]^{-a}$ from the exact value, with Specht's ratio $S(h)$ defined in Theorem~\ref{Th57},
and $h_w= \frac{d_{c_{\max}} (d_{c_{\max}}-1)}{d_{c_{\min}}(d_{c_{\min}}-1)}$.
\end{theo}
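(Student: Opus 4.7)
The plan is to adapt the decomposition used in the proofs of Theorem~\ref{Th95} and Theorem~\ref{NT3} to the irregular setting, exploiting the key observation that the choice $b=a(d_{v_{\min}}-2)$ forces every variable node in the ETS to have degree exactly $q=d_{v_{\min}}$. Indeed, since an ETS has only degree-$1$ and degree-$2$ check nodes, counting edges on both sides gives $|E(\mathcal{S})|=\sum d_v = b + 2|N_e(\mathcal{S})|$, and requiring $|V(\mathcal{S})|=|E(\mathcal{S})|$ (the condition for finite non-zero asymptotic multiplicity from Corollary~\ref{R2}) yields $\sum d_v = 2a+b$. For $b=a(q-2)$ this forces $\sum d_v = aq$, which in turn forces every variable node to have degree exactly $q$, so the structure contains a single simple cycle $C$ together with a collection of basic trees of the kind analyzed in Theorem~\ref{Th95}, but now with all participating variable nodes constrained to degree $q$.

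First I would decompose any such ETS into a simple cycle $C$ of length $2(a-i)$ plus a forest of basic trees rooted at the $a-i$ variable nodes of $C$, with the $i$ non-root variable nodes covering the remaining vertices. For each $i$, I partition the $i$ extra variable nodes among the $a-i$ roots via the multinomial coefficient $\binom{a-i}{t_0,t_1,\ldots,t_i}$, with $t_j$ denoting the number of basic trees of v-size $j$; the outer double sum in~(\ref{XF5}) then aggregates over $i$ and over all admissible $(t_0,\ldots,t_i\bigtriangledown a)$. Next I would compute the expected number of simple cycles of length $2(a-i)$ whose variable nodes are all of degree $q$ by invoking Theorem~\ref{Th57} with $\alpha_q=a-i$, all other $\alpha_\ell=0$, and $\beta'=a-i$ (leaving check degrees along the cycle free). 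This produces the leading factor $\bigl(\tfrac{2\lambda_q(q-1)}{|E(G)|}\sum_{i=1}^{n'}\binom{d_i'}{2}\bigr)^{a-i}\cdot \tfrac{1}{2(a-i)}$.

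Then, for each basic tree of v-size $j$, I would condition on its shape and fill in node labels in the configuration model. The number of admissible shapes is $B_{j+1}^{q}$ from the proof of Theorem~\ref{Th95}; each of the $j$ non-root variable nodes in the tree must have degree $q$, which occurs with probability $\tfrac{|U_q|}{n}=\tfrac{\lambda_q/q}{\sum_\ell\lambda_\ell/\ell}$ per selection, accounting for the factor $\bigl(\tfrac{\lambda_q/q}{\sum_\ell\lambda_\ell/\ell}\bigr)^{i}$; and each of the $j$ internal degree-$2$ check nodes has degree drawn from $\rho(x)$, contributing an expected factor $\overline{d_c}-1$ per child-cell choice (exactly as in the derivation of~(\ref{eq019}) within the proof of Theorem~\ref{NT3}), giving $(\overline{d_c}-1)^{i}$ in total. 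Asymptotic independence of these contributions across disjoint constant-sized parts of the configuration, together with the upper bound $B_{j+1}^{q}\le C_{j+1}^{q}=\tfrac{1}{j+1}\binom{(q-1)(j+1)}{j}$ from~(\ref{eq01}) and~(\ref{F4}), assembles exactly the right-hand side of~(\ref{XF5}).

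The main obstacle will be justifying the claimed accuracy of the approximation, which must combine two independent sources of looseness. The factor $(a-g/2+1)^{-a}$ arises from the gap between $B_{j+1}^{q}\le C_{j+1}^{q}$ and the matching lower bound $B_{j+1}^{q}\ge C_{j+1}^{q}/(j+1)$ from~(\ref{eq009}), handled by the uniform estimate $\prod_{j}(j+1)^{-t_j}\ge (a-g/2+1)^{-a}$ exactly as in the closing lines of the proof of Theorem~\ref{Th95}. The factor $S(h_w)^{-a}$ arises from the Specht-ratio slack in Theorem~\ref{Th57} incurred when approximating $\sum_{w_i\in W}d_i'(d_i'-1)$ by its AM--GM surrogate along the $a-i$ free check nodes on the cycle; taking the worst case $i=0$ bounds this loss by $S(h_w)^{-a}$ uniformly in $i$. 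Multiplying the two slacks yields the stated multiplicative error $[(a-g/2+1)\cdot S(h_w)]^{-a}$.
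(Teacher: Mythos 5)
Your proposal is correct and follows essentially the same route as the paper's proof: the same decomposition into a simple cycle of length $2(a-i)$ with basic trees attached, the same invocation of Theorem~\ref{Th57} with $\alpha_q=a-i$ and $\beta'=a-i$ for the cycle count, the same per-tree factors $p_q^{j}(\overline{d_c}-1)^{j}B_{j+1}^{q}$, the same replacement $B_{j+1}^{q}\le C_{j+1}^{q}$, and the same two-part accuracy analysis yielding $[(a-g/2+1)\times S(h_w)]^{-a}$. The only addition is your explicit edge-counting argument that $b=a(q-2)$ forces every variable node to have degree $q$, which the paper states without derivation.
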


\begin{proof}
Using similar discussions as those in the proof of Theorem~\ref{Th95}, we obtain the following equation:
\begin{align*}\label{XF1}
N_{(a,a(d_{v_{\min}}-2))}^{ETS} &\sim \sum_{i=0}^{a-g/2} \sum_{{t_0, t_1,\ldots, t_{i} }\atop {(t_0, t_1,  \ldots, t_{i} \bigtriangledown a)}}\Big[ N_{2(a-i)}' {{a-i}\choose{t_0, t_1,t_2, \ldots, t_{i} }}  \times \prod_{j=1}^{i}  \prod_{k=1}^{t_j} R'_k(j,\lambda(x),\rho(x)) \Big],\numberthis
\end{align*}
where $ N_{2(a-i)}'$ is the number of cycles of length $2(a-i)$ in which the degree of each variable node is $q$, and $R'_k(j,\lambda(x),\rho(x))$ denotes the number of basic trees of v-size $j$, whose roots have degree $q-2$ and whose other variable nodes all have degree $q$, in the ensemble. The index $k$ is used to identify the root of the basic tree out of $t_j$ variables nodes (of the cycle) whose basic trees all have v-size $j$. 

Considering that each variable node in a basic tree has degree $q$ with probability $p_q$ given below:
$$
p_q = \dfrac{\lambda_q/q}{\sum_{j}\lambda_j/j}\:,
$$ 
one can see that 
$$
R'_k(j,\lambda(x),\rho(x)) = p_q^j R_k(j,q,\rho(x))\:,
$$
where $R_k(j,q,\rho(x))$ is defined in the proof of Theorem~\ref{NT3}, with its expected value calculated in (\ref{eq019}). We thus have
\begin{equation} \label{eq090}
\mathbf{E}(R'_k(j,\lambda(x),\rho(x))) = (\overline{d_c}-1)^{j} \times p_q^j \times B_{j+1}^{q}\:.
\end{equation}
Taking the expected value of (\ref{XF1}), and replacing $\mathbf{E}(R'_k(j,\lambda(x),\rho(x)))$ with (\ref{eq090}) in the expected value, we obtain
\begin{align*} \label{eq020}
\mathbf{E}( N_{(a,a(d_{v_{\min}}-2))}^{ETS}) &\sim \sum_{i=0}^{a-g/2} \sum_{{t_0, t_1,\ldots, t_{ i} }\atop {(t_0, t_1,  \ldots, t_{ i} \bigtriangledown a)}}\Big[ \mathbf{E}( N_{2(a-i)}'){{a-i}\choose{t_0, t_1,t_2, \ldots, t_{ i} }} (\overline{d_c}-1)^{i} \times p_q^i \times \prod_{j=1}^{ i}  (B_{j+1}^{q})^{t_j} \Big].\numberthis
\end{align*}
To calculate the expected value $\mathbf{E}( N_{2(a-i)}')$, we use the result of Theorem~\ref{Th57} and obtain
\begin{equation}\label{eq021}
\mathbf{E}( N_{2(a-i)}')\approx \displaystyle (\lambda_q(q-1))^{(a-i)} \displaystyle\Big( \frac{2}{|E(G)|} \sum_{i=1}^{n'}{{d_i'}\choose {2}}\Big)^{a-i}\dfrac{1}{2(a-i)}\:.
\end{equation} 
By replacing $\mathbf{E}(N_{2(a-i)}')$ with (\ref{eq021}), $B_{j+1}^{q}$ with the upper bound of (\ref{eq01}), and using (\ref{F4}) for $C_{j+1}^{q}$ in (\ref{eq020}), we derive (\ref{XF5}).

The proof for the accuracy of the estimate follows the same steps as those used in the proof of Theorem~\ref{NT3} for the accuracy of Estimate (\ref{eq200}).
\end{proof}

\begin{rem}
One can use (\ref{eq020}) along with (\ref{eq021}) and the exact values of $B_{j+1}^{q}$ (see Table~\ref{TAB1}), to obtain a more accurate estimate compared to (\ref{XF5}). 
\label{remsa}
\end{rem}
In an irregular ensemble, for a given value of $a$, ETSs in $(a,b)$ classes with $b$ in the range of  $(d_{v_{\min}} - 2)a \leq b  \leq (d_{v_{\max}}-2)a$ have finite non-zero asymptotic expected multiplicity. One can use an approach similar to that of Theorem~\ref{NT8} to estimate such expected values. The derivations, however, are in general tedious, as the calculations involve ETSs that are in the same class but consist of different combinations of variable nodes with different degrees.

\section{Numerical results}
\label{S5}

In this section, we present some numerical results in relation to the theoretical results presented in previous sections. In particular, we provide the multiplicities of different trapping set structures within randomly constructed LDPC codes
and compare the results to the average values predicted by the asymptotic analysis. We first focus on LETS structures, and use the exhaustive search algorithms of~\cite{hashemi2015new} and~\cite{Y-Arxiv} to find the multiplicity of LETS structures within different classes.

\begin{table}[ht]
\caption{Biregular LDPC Codes from~\cite{Mackay} Used in the First Experiment}
\begin{center}
\scalebox{1}{
\begin{tabular}{ |c||c|c|c|c|c||c|c|c|c|  }
\hline
Code & $\mathcal{C}_1$ & $\mathcal{C}_2$ & $\mathcal{C}_3$ & $\mathcal{C}_4$ & $\mathcal{C}_5$ & $\mathcal{C}_6$ & $\mathcal{C}_7$ & $\mathcal{C}_8$& $\mathcal{C}_9$ \\
\hline
$n$     & 816 &1008  &4000  &20000 & 50000  &4000  & 8000 & 10000 & 20000\\
\hline
$R$     & 0.5 &0.5   & 0.5  & 0.5  &0.5     &0.5   & 0.5  &0.5   & 0.5 \\
\hline
$d_v$   & 3   &3     & 3    &3     &3       &4     & 4    &4     & 4\\
\hline
\end{tabular}
}
\end{center}
\label{Tabel1}
\end{table}

For the first experiment, we consider random biregular $(3,6)$ and $(4,8)$ LDPC codes with different block lengths as listed in Table~\ref{Tabel1}.
The multiplicities of LETS structures in different $(a,b)$ classes with $a \leq 12, b \leq 5$, for the $(3,6)$ codes, and with $a \leq 10, \: b \leq 10$, for the $(4,8)$ codes, are listed in Tables \ref{T2} and~\ref{T3}, respectively.
In both tables, we have also listed the asymptotic average number of LETS structures within each class obtained based on Corollary~\ref{NEWL2}. For $(3,6)$ and $(4,8)$ codes, the asymptotic average value is non-zero only for classes with $b/a=1$ and $b/a=2$, respectively. For these cases, the asymptotic average multiplicity of $(a,a)$ and $(a,2a)$ classes is approximated by (\ref{eqsd}) as $10^a/(2a)$ and $21^a/(2a)$, respectively.

\begin{table}[ht]
\caption{Multiplicities of $(a,b)$ LETSs within the range $a\leq 12$, $b \leq 5$, for $(3,6)$ regular LDPC codes of Table~\ref{Tabel1}, in comparison with the asymptotic expected values of Corollary~\ref{NEWL2}}
\begin{center}
\scalebox{1}{
\begin{tabular}{ |c||c|c|c|c|c||c|  }
\hline
Code & $\mathcal{C}_1$ & $\mathcal{C}_2$ & $\mathcal{C}_3$ & $\mathcal{C}_4$ & $\mathcal{C}_5 $ & Expected value \\
\hline
\multicolumn{6}{|l|}{$(a,a)$} \\
\hline
(3,3) &    132    &   165     &  171    & 161    & 178   & 166.6 \\
\hline
(4,4) &    1491   &  1252     & 1219    & 1260   & 1268  & 1250 \\
\hline
(5,5) &    9169   &  10019    & 9935    & 10046  & 10231 & 10000\\
\hline
\multicolumn{6}{|l|}{$(a,b), b \neq a$} \\
\hline
(4,2) &    3    &   6     &  1    & 0     & 0      &0    \\
\hline
(5,3) &    90   &   100   &  21   & 2     & 1      &0   \\
\hline
(6,2) &    2    &   5     &  0    & 0     & 0      &0  \\
\hline
(6,4) &  2463   &  1885   &  476  & 95    & 52     &0   \\
\hline
(7,3) &  110    &   116   &  10   & 0     & 1      &0   \\
\hline
(7,5) &  33406  &  29736  &  7661 & 1540  & 640    &0   \\
\hline
(8,2) &  1      &   3     &  0    & 0     & 0      &0   \\
\hline
(8,4) &  4199   &  2961   & 183   & 4     & 1      &0   \\
\hline
(9,3) &  195    &   169   &  4    & 0     & 0      &0   \\
\hline
(9,5) & 84378   &  63787  & 4001  & 167   & 21     &0  \\
\hline
(10,2)&  15     &   4     &  0    & 0     & 0      &0 \\
\hline
(10,4) &  7965  &  4869   &  74   & 1     & 0      &0 \\
\hline
(11,3) &  290   &  219   &  1     & 0     & 0      &0  \\
\hline
(11,5) & 211273  &  134236  &2278 & 8     & 2      &0   \\
\hline
(12,2) &  15     &  6       &  0     & 0  & 0      &0 \\
\hline
(12,4) &  17838  &  9041    &  36   & 0   & 0      &0  \\
\hline
\end{tabular}
}
\end{center}
\label{T2}
\end{table}

\begin{table}[ht]
\caption{Multiplicities of $(a,b)$ LETSs within the range $a\leq 10$, $b \leq 10$, for $(4,8)$ regular LDPC codes of Table~\ref{Tabel1}, in comparison with the asymptotic expected values of Corollary~\ref{NEWL2}}
\begin{center}
\scalebox{1}{
\begin{tabular}{ |c||c|c|c|c|| c|  }
\hline
Code & $\mathcal{C}_6$ & $\mathcal{C}_7$   & $\mathcal{C}_8$   & $\mathcal{C}_9$   & Expected value \\
\hline
\multicolumn{5}{|l|}{$(a,2a)$} \\
\hline
(3,6) &    1563     &  1620   &  1598 & 1531  & 1543 \\
\hline
(4,8) &    24269    & 24107   &  24241& 24368 &24310 \\
\hline
(5,10) &    402513   & 406289 & 406754& 407743&408410 \\
\hline
\multicolumn{5}{|l|}{$(a,b), b \neq 2a$} \\
\hline
(4,6) &    91   & 55          & 33   &    17   & 0\\
\hline
(5,6) &    2   & 2            & 1    &    0    & 0\\
\hline
(5,8) &    4640   & 2303      & 1910 &   934   & 0\\
\hline
(6,8) &    588   & 196        & 107  &    22   & 0\\
\hline
(6,10) &    185544   & 96525  & 76621&    37075& 0 \\
\hline
(7,8) &    53   & 10          & 7    &    0    & 0 \\
\hline
(7,10) &    39794   & 10698   & 6649 &   1682  & 0 \\
\hline
(8,8) &    5   & 0            & 0    &    0    & 0\\
\hline
(8,10) &    7006   & 1116     & 567  &    60   & 0 \\
\hline
(9,8) &    1   & 0            & 0    &    0    & 0\\
\hline
(9,10) &    1145   & 101      & 53   &    0    & 0 \\
\hline
(10,10) &    185   & 13       & 4    &    0    & 0 \\
\hline
\end{tabular}
}
\end{center}
\label{T3}
\end{table}

The results of Tables~\ref{T2} and \ref{T3} show that the non-zero expected values provide a good approximation for the multiplicity of LETSs in $(a,a)$ and $(a,2a)$ classes, for $(3,6)$ and $(4,8)$ codes, respectively,
even at relatively short block lengths. The results also show that the multiplicity of other classes decrease with increasing $n$ (and tend asymptotically to zero). The rate of decrease, however, depends
on the class and is faster for some classes than others.

In the second experiment, we consider irregular LDPC codes with degree distributions $\lambda(x) = 0.4286x^2+0.5714x^3$, and $\rho(x) = x^6$, and construct random codes of
block lengths $n=500, 1000, 4000, 10000,$ and $20000$.  All the codes have rate $0.5$ and girth $6$. To obtain the asymptotic expected number of LETS structures for this ensemble, 
we use the results presented in Section~\ref{LETS-IRREG}. For any given $a$, we note that the asymptotic expected multiplicity is only non-zero for $b$ values in the range of 
$a \leq b \leq 2a$. We also note that the degree distribution of this ensemble matches the one discussed in Example~\ref{Ex29}. Using the notations of Example~\ref{Ex29}, for the ensemble under consideration,
we have $i=3$, $j=4$, $k=7$, $\alpha_j = b-a$ and $\alpha_i = 2a-b$. Replacing these in (\ref{eqex}), we obtain
 \begin{equation}
\mathbf{E}(N_{(a,b)}^{LETS}) \approx \displaystyle {{a}\choose{2a-b,b-a}}(2 \times 0.4286)^{2a-b} (3 \times 0.5714)^{b-a} \times \frac{6^{a}}{2a}\:.
\label{eqexx}
\end{equation}
Using (\ref{eqexx}), for $a=3, 4, 5, 6$, we obtain the results listed in the last column of Tables~\ref{Tab1}--\ref{Tab4}, respectively. In each table, the results for different values of $b$ in the range of $3 \leq b \leq 2a$, 
are provided. As explained before, the asymptotic multiplicities are non-zero only for $b \geq a$. In each table, we have also given the actual multiplicities of different $(a,b)$ LETSs for the five random codes. 
The comparison shows a good match between the theoretical asymptotic value and the finite-length numerical results even at relatively short block lengths.


%

\begin{table}[ht]
\caption{Multiplicities of LETSs of size three in the Tanner graphs of random irregular LDPC codes with degree distributions $\lambda(x) = 0.4286x^2+0.5714x^3$, $\rho(x) = x^6$, and with different block lengths in comparison with  the expected values from Equation (\ref{eqexx})}
\begin{center}
\scalebox{1}{
\begin{tabular}{ |c||l|l|l|l|l||  c|  }
\hline
Multiplicity  of LETSs  & \multicolumn{ 5}{  c| }{  Block Length}                     &   Expected values  \\
 \cline{2-6} for $3 \leq b \leq 6$ &        500  & 1000     &  4000   &10000   & 20000      & from Equation (\ref{eqexx})\\
 \hline \hline
$(3,3)$  & 26    & 22      & 18      & 22     &  31         & 22   \\
$(3,4)$  & 145   & 146     & 127     & 148    &  123        & 136  \\
$(3,5)$  &312    & 254     & 281     & 269    &  257        & 272 \\
$(3,6)$  &168    & 176     & 191     & 165    &  189        & 181\\ \hline

\end{tabular}
}
\end{center}
\label{Tab1}
\end{table}

\begin{table}[ht]
\caption{Multiplicities of LETSs of size four in the Tanner graphs of random irregular LDPC codes with degree distributions $\lambda(x) = 0.4286x^2+0.5714x^3$, $\rho(x) = x^6$, and with different block lengths in comparison with  the expected values from Equation (\ref{eqexx})}
\begin{center}
\scalebox{1}{
\begin{tabular}{ |c||l|l|l|l|l||  c|  }
\hline
$(4,b)$ LETS classes  & \multicolumn{ 5}{  c| }{  Block Length}                          &   Expected values  \\
 \cline{2-6} for $3 \leq b \leq 8$ &        500  & 1000     &  4000   &10000   & 20000        & from Equation (\ref{eqexx})\\
 \hline \hline
$(4,3)$  & 5     & 4       & 0       & 1      &  0          & 0   \\
$(4,4)$  & 130   & 102     & 72      & 95     &  93         & 87  \\
$(4,5)$  &841    & 726     & 735     & 731    &  713        & 699 \\
$(4,6)$  &2240   & 2122    & 2122    & 2075   &  2090       & 2098\\
$(4,7)$  &2884   & 2830    & 2743    & 2741   &  2871       & 2797 \\
$(4,8)$  &1273   & 1389    & 1384    & 1412   &  1402       & 1398\\ \hline

\end{tabular}
}
\end{center}
\label{Tab2}
\end{table}

\begin{table}[ht]
\caption{Multiplicities of LETSs of size five in the Tanner graphs of random irregular LDPC codes with degree distributions $\lambda(x) = 0.4286x^2+0.5714x^3$, $\rho(x) = x^6$, and with different block lengths in comparison with  the expected values from Equation (\ref{eqexx})}
\begin{center}
\scalebox{1}{
\begin{tabular}{ |c||l|l|l|l|l||  c|  }
\hline
$(5,b)$ LETS classes  & \multicolumn{ 5}{  c| }{  Block Length}                          &   Expected values  \\
 \cline{2-6} for $3 \leq b \leq 10$ &        500  & 1000     &  4000   &10000   & 20000        & from Equation (\ref{eqexx})\\
 \hline \hline
$(5,3)$   & 12    & 2       & 1       & 0      &  0          & 0   \\
$(5,4)$  & 143   & 58      & 11      & 6      &  1          & 0  \\
$(5,5)$   &1055   & 654     & 436     & 420    &  374        & 359 \\
$(5,6)$  &5549   & 4471    & 3649    & 3638   &  3657       & 3598  \\
$(5,7)$  &16731  & 15904   & 14657   & 14480  &  14412      & 14392\\
$(5,8)$  &29850  & 29089   & 29248   & 28661  &  28693      & 28780\\
$(5,9)$  &27041  & 28434   & 28390   & 28795  &  28909      & 28777\\
$(5,10)$ &9753   & 11108   & 11172   & 11637  &  11584      & 11509\\ \hline

\end{tabular}
}
\end{center}
\label{Tab3}
\end{table}

\begin{table}[ht]
\caption{Multiplicities of LETSs of size six in the Tanner graphs of random irregular LDPC codes with degree distributions $\lambda(x) = 0.4286x^2+0.5714x^3$, $\rho(x) = x^6$, and with different block lengths in comparison with  the expected values from Equation (\ref{eqexx})}
\begin{center}
\scalebox{1}{
\begin{tabular}{ |c||l|l|l|l|l||  c|  }
\hline
$(6,b)$ LETS classes  & \multicolumn{ 5}{  c| }{  Block Length}                          &   Expected values  \\
 \cline{2-6} for $3 \leq b \leq 12$ &        500  & 1000     &  4000   &10000   & 20000        & from Equation (\ref{eqexx})\\
 \hline \hline
$6,2)$  & 1     & 0       & 0       & 0      &  0          & 0   \\
$(6,3)$  & 10    & 0       & 0       & 0      &  0          & 0   \\
$(6,4)$  & 179   & 53      & 9       &  5     &  4          & 0  \\
$(6,5)$  &1771   & 717     & 132     & 75     &  18         & 0 \\
$(6,6)$   &11080  & 5840    & 2474    & 2015   &  1785       & 1542\\
$(6,7)$  &48495  & 32305   & 21896   & 19933  &  19267      & 18507\\
$(6,8)$  &142776 & 116436  & 99340   & 94744  &  93575      & 92527\\
$(6,9)$  &282304 & 265229  & 254057  & 247637 &  247219     & 246710\\
$(6,10)$ &352158 & 364261  & 371165  & 369248 &  368437     & 370022\\
$(6,11)$  &247006 & 276543  & 288721  & 295189 &  295669     & 295983\\
$(6,12)$  &71956  & 89281   & 93021   & 98685  &  98889      & 98649\\ \hline

\end{tabular}
}
\end{center}
\label{Tab4}
\end{table}

As the next experiment, we consider the ensemble of biregular $(3,6)$ LDPC codes and find the asymptotic average multiplicity of $(a,a)$ ETSs for different values of $3 \leq a \leq 5$. For a given $a$, the $(a,a)$ ETS class
is the only class with finite non-zero asymptotic multiplicity for this ensemble. We use (\ref{eq47}) of Theorem~\ref{Th95} for the derivations, and assume that the girth $g=6$.

For $a=3$, since $g=6$, in (\ref{eq47}), we have $a-g/2=0$. The outer summation therefore has only one term corresponding to $i=0$.
This implies that the inner summation has only one term that counts the expected number of cycles of length $6$, which by (\ref{eqsd}) is equal to $10^3/6=166.6$, asymptotically. We thus have
$$
\mathbf{E}(N_{(3,3)}^{ETS}) \sim 166.6\:.
$$

For $a=4$, the outer summation in (\ref{eq47}) has two terms corresponding to $i=0, 1$. For the case of $i=0$,  the inner summation is responsible for counting the expected number of chordless $8$-cycles,
which by (\ref{eqsd}) is equal to $10^4/8= 1250$, asymptotically. For the case of $i=1$, there is only one sequence of $t_0, t_1$ values that satisfies $(t_0, t_1 \bigtriangledown 4)$, and that is $t_0=2$ and $t_1=1$. 
For these values, the term inside the double-summation of (\ref{eq47}) is calculated as:
\begin{equation}\label{MR1}
\dfrac{ 2^{3}(6-1)^{4} }{2(3) }{{3}\choose{2,1}}   (\frac{1}{1}{{2 }\choose{0}})^{1}=2500\:.
\end{equation}
By adding the results for $i=0$ and $i=1$, we obtain
$$
\mathbf{E}(N_{(4,4)}^{ETS}) \sim 3750\:.
$$

For $a=5$, there are three terms in the outer summation involving the following calculations: (a) $i=0$; The inner summation counts the asymptotic expected number of chordless cycles of length $10$, which is $10^5/10=10000$.
(b) $i=1$; In this case, there is only one sequence $t_0, t_1$, that satisfies $(t_0, t_1 \bigtriangledown 5)$, and that is $t_0=3, t_1=1$. For this sequence, we have
\begin{equation}\label{MR2}
\dfrac{ 2^{4}(6-1)^{5} }{2(4) }{{4}\choose{3,1}}   (\frac{1}{1}{{2 }\choose{0}})^{1}=25000\:.
\end{equation}
(c) $i=2$; In this case, we have two possible sequences of $t_0, t_1, t_2$, satisfying $(t_0, t_1, t_2 \bigtriangledown 5)$, and they are $1, 2, 0$, and $2, 0, 1$.
These sequences have the following contributions, respectively:
\begin{equation}\label{MR3}
\dfrac{ 2^{3}(6-1)^{5} }{2(3) }{{3}\choose{1,2,0}}   (\frac{1}{1}{{2 }\choose{0}})^{2}=12500\:,
\end{equation}
\begin{equation}\label{MR4}
\dfrac{ 2^{3}(6-1)^{5} }{2(3) }{{3}\choose{2,0,1}}   (\frac{1}{2}{{4 }\choose{1}})^{1}=25000\:.
\end{equation}
By adding the results of Case (a),  (\ref{MR2}) for Case (b), and (\ref{MR3}) and (\ref{MR4}), for Case (c), we obtain 
$$
\mathbf{E}(N_{(5,5)}^{ETS}) \sim 72500\:.
$$

The asymptotic expected values of (\ref{eq47}), just derived, are listed along with the multiplicity of $(a,a)$ ETSs in Codes ${\cal C}_1-{\cal C}_5$ in Table~\ref{Ta2}. The comparison shows a good match between 
the two values across the range of block lengths. For the sake of completeness, for ${\cal C}_1-{\cal C}_5$, we present the non-zero multiplicities of the other ETS classes within the range $a\leq 7$, $b \leq 5$, in Table \ref{Ta3}. As expected, by increasing the block length the multiplicities decrease (and asymptotically tend to zero).

\begin{table}[ht]
\caption{Multiplicity of $(a,a)$ ETSs in biregular $(3,6)$ LDPC codes in comparison with the asymptotic expected value of (\ref{eq47})}
\begin{center}
\scalebox{1}{
\begin{tabular}{ |c||c|c|c|c|c||c|  }
\hline
$(a,a)$ ETS class & $\mathcal{C}_1$ & $\mathcal{C}_2$ & $\mathcal{C}_3$ & $\mathcal{C}_4$ & $\mathcal{C}_5$ & Expected value of (\ref{eq47}) \\
\hline
$(3,3)$   &  132   &165    &171   &161   & 178   &166.6  \\
\hline
$(4,4)$   & 3459  &3700   &3777  &3675  &3938   &3750  \\
\hline
$(5,5)$     & 67559  &70783  &72468 &71446 &75626  &72500 \\
\hline
\end{tabular}
}
\end{center}
\label{Ta2}
\end{table}

\begin{table}[ht]
\caption{Multiplicity of $(a,b)$ ETSs in biregular $(3,6)$ LDPC codes within the range $a\leq 7$, $b \leq 5$, and the asymptotic expected value from Theorem \ref{C2}}
\begin{center}
\scalebox{1}{
\begin{tabular}{ |c||c|c|c|c|c|c|  }
\hline
$(a,b)$ ETS class       & $\mathcal{C}_1$ & $\mathcal{C}_2$ & $\mathcal{C}_3$ & $\mathcal{C}_4$ & $\mathcal{C}_5$ & Expected Value \\
\hline
$(4,2)$ & 3   &6              &1                &0                & 0                &0   \\
\hline
$(5,3)$ & 120 & 160              &31               &2                & 1                &0   \\
\hline
$(6,2)$ & 2 & 5                &0                &0                & 0                &0   \\
\hline
$(6,4)$ & 4152 & 4080             &916              &125              & 67               &0   \\
\hline
$(7,3)$ & 130 & 163              &10               &0                & 1                &0   \\
\hline
$(7,5)$ &98465 & 97852            &23550            &3886             & 1899            &0   \\
\hline
\end{tabular}
}
\end{center}
\label{Ta3}
\end{table}  

As another example to compare the asymptotic theoretical results of this paper for ETSs with those of finite-length random codes, we consider the $(4,8)$-regular ensemble from which the codes ${\cal C}_6-{\cal C}_9$ in Table~\ref{Tabel1} 
are selected. Based on the results of Theorem~\ref{Th95}, the asymptotic expected multiplicity of $(a,2a)$ ETS classes for this ensemble can be approximated by (\ref{eq48}), or more accurately by (\ref{F8}). (This is the only class of $(a,b)$ ETSs with non-zero asymptotic expected multiplicity for different values of $b$.) In the following, we use (\ref{F8}) to estimate the multiplicity of $(a,2a)$ ETSs for $a=3, 4, 5$.

For $a=3$, $a-g/2=0$, and there is only one term corresponding to $i=0$, and $t_0 = 3$ in the double-summation of (\ref{F8}). This term
counts the asymptotic average number of cycles of length $6$ through (\ref{eqsd}), and is equal to $21^3/6= 1543.5$. We thus have
$$
\mathbf{E}(N_{(3,6)}^{ETS}) \sim 1543.5\:.
$$

For $a=4$, there are two terms in the outer summation: (a) $i=0$; In this case, the only term in the inner summation corresponds to $t_0=4$, and is equal to the asymptotic average number of chordless cycles of length $8$, obtained from (\ref{eqsd}) to be $21^4/8= 24310$. (b) $i=1$; In this case there is only one sequence $t_0=2, t_1=1$ that satisfies $(t_0, t_1 \bigtriangledown 4)$. The contribution of this sequence in (\ref{F8}), considering that $B_2^4=2$ from Table~\ref{TAB1}, is given by 
$$
\dfrac{ (d_v-1)^{3}(d_c-1)^{4} }{2(3) }{{3}\choose{2,1}}   (B_2^4)^{1}= 64827\:.
$$
By adding the results for $i=0$ and $i=1$, we obtain 
$$
\mathbf{E}(N_{(4,8)}^{ETS}) \sim 89137\:.
$$

For $a=5$, there are three terms in the outer summation of (\ref{F8}): (a) $i=0$; For this case, we have the contribution of $21^5/10= 408410$, which counts the asymptotic expectted number of chordless $10$-cycles.
(b) $i=1$; In this case there is only one sequence satisfying $(t_0, t_1 \bigtriangledown 5)$, i.e., $t_0=3$, $t_1=1$, with the following contribution:
\begin{equation}\label{MR2x}
\dfrac{ (d_v-1)^{4}(d_c-1)^{5} }{2(4) }{{4}\choose{3,1}}   (B_2^4))^{1}= 1361367\:.
\end{equation}
(c) $i=2$; In this case, we have two possible sequences $t_0, t_1, t_2$, satisfying $(t_0, t_1, t_2 \bigtriangledown 5)$. They are $1, 2, 0$, and $2, 0, 1$, with contributions:
\begin{equation}\label{MR3x}
\dfrac{ (d_v-1)^{3}(d_c-1)^{5} }{2(3) }{{3}\choose{1,2,0}}  (B_2^4)^{2}= 907578\:,
\end{equation}
and
\begin{equation}\label{MR4x}
\dfrac{ (d_v-1)^{3}(d_c-1)^{5} }{2(3) }{{3}\choose{2,0,1}}   (B_3^4 )^{1}= 1588261.5\:,
\end{equation}
respectively. (Note that from Table~\ref{Tabel1}, we have $B_3^4=7$.) By adding the results of Case (a), (\ref{MR2x}), (\ref{MR3x}) and (\ref{MR4x}), we obtain
$$
\mathbf{E}(N_{(5,10)}^{ETS}) \sim  4265616.5\:.
$$

The theoretical results, just derived, are listed in Table \ref{Ta4} along with multiplicities of ETSs in $(a,2a)$ classes of random codes. One can see a good match for different block lengths.

\begin{table}
\caption{Multiplicity of $(a,2a)$ ETSs in biregular $(4,8)$ LDPC codes in comparison with the asymptotic expected value of (\ref{F8})}
\begin{center}
\scalebox{1}{
\begin{tabular}{ |c||c|c|c|c|c|  }
\hline
$(a,2a)$ ETS class & $\mathcal{C}_5$  & $\mathcal{C}_6$ & $\mathcal{C}_7$ & $\mathcal{C}_8$ & Expected value of (\ref{F8})\\
\hline
$(3,6)$     &1563     &1620             &1598             & 1531            & 1543   \\
\hline
$(4,8)$     &89467    & 91891           & 91186           &88578            & 89137     \\
\hline
$(5,10)$    &4232030  & 4345623         &4325691          &4236433        & 4265616      \\
\hline
\end{tabular}
}
\end{center}
\label{Ta4}
\end{table}

We end this section with deriving approximations for the asymptotic expected multiplicity of $(a,a)$ ETSs for $a=3, 4, 5$, within the irregular ensemble with the following degree distributions: 
 $\lambda(x) = 0.4286x^2+0.5714x^3$, and $\rho(x) = x^6$. This is the same ensemble we discussed earlier in this section for LETS distributions. We use the results of Theorem~\ref{NT8}, and in particular Equations (\ref{eq020}) and (\ref{eq021}) to derive the expected values. (See Remark~\ref{remsa}.) We assume $g=6$, and note that for this ensemble $d_{v_{\min}} = 3$, and thus Equation (\ref{eq020}) can be used to find $\mathbf{E}( N_{(a,a)}^{ETS})$ for different values of $a$. For a given value of $a$, the $(a,a)$ ETS class has the smallest $b$ value among all the classes with non-zero asymptotic expected multiplicity for this ensemble. 
 
For $(3,3)$ ETSs, we have $a-g/2=0$, and the only term in the double-summation of (\ref{eq020}) is $\mathbf{E}(N_{6}')$, which is calculated by (\ref{eq021}) as
$(2 \times 0.4286)^3 \times 6^3 / 6 = 22.7$. We thus have
$$
\mathbf{E}( N_{(3,3)}^{ETS}) \approx 22.7\:.
$$
For $(4,4)$ ETSs, there are two term in the outer summation of (\ref{eq020}): (a) $i=0$; For this case, the inner summation is $\mathbf{E}(N_{8}')$, which can be calculated by (\ref{eq021})
as $(2 \times 0.4286)^4 \times 6^4 / 8 = 87.5$. (b) $i=1$; In this case there is only one sequence $t_0 =2, t_1=1$, that satisfies $(t_0, t_1 \bigtriangledown 4)$. The contribution of this sequence in the inner summation of (\ref{eq020}) is 
$$
\mathbf{E}(N_{6}') {{3}\choose{2,1}} (\overline{d_c}-1)^{1}   \times p_q^1 \times (B_{2}^3)^1\:.
$$
By substituting $\mathbf{E}(N_{6}') = 22.7$, $\overline{d_c}=d_c=7$, $p_q = 0.5$, and $B_{2}^3 = 1$, in the above equation,  we obtain the value $204$. 
By adding the results of Cases (a) and (b), we have
$$
\mathbf{E}( N_{(4,4)}^{ETS}) \approx 291.5\:.
$$
For $(5,5)$ ETSs, the outer summation of (\ref{eq020}) has three terms: (a) $i=0$; For this case, the inner summation is $\mathbf{E}(N_{10}')$, which by (\ref{eq021}) 
has a value $359$. (b) $i=1$; In this case, there is only one sequence $t_0=3, t_1=1$, that satisfies $(t_0, t_1 \bigtriangledown 5)$. The contribution of this sequence is
\begin{equation}
\label{eq2344}
\mathbf{E}(N_{8}' ){{4}\choose{3,1}} (\overline{d_c}-1)^{1} \times p_q^1 \times (B_{2}^3)^1 = 1049\:.
\end{equation}
(c) $i=2$; In this case we have two sequences $t_0, t_1, t_2$, that satisfy $(t_0, t_1, t_2 \bigtriangledown 5)$, and they are $1, 2, 0$, and $2, 0, 1$.
For the first sequence the contribution is
\begin{equation}\label{NR4}
\mathbf{E}(N_{6}') {{3}\choose{1,2,0}} (\overline{d_c}-1)^{2} \times p_q^2 \times (B_{2}^3)^2 \times (B_{3}^3)^0 = 612\:.
\end{equation}
For the second sequence, the contribution is
\begin{equation}\label{NR5}
\mathbf{E}(N_{6}'){{3}\choose{2,0,1}} (\overline{d_c}-1)^{2} \times p_q^2 \times  (B_{2}^3)^0 \times (B_{3}^3)^1 = 1224\:. 
\end{equation}
By adding the result of Case (a) with that of Case (b), i.e., (\ref{eq2344}), and those of Case (c), (\ref{NR4}) and (\ref{NR5}), we obtain
$$
\mathbf{E}( N_{(5,5)}^{ETS}) \approx 3244\:.
$$
The above results along with the multiplicity of corresponding ETS classes in randomly selected LDPC codes of different block length from the ensemble are listed in Table~\ref{Ta5}.

\begin{table}[ht]
\caption{Multiplicities of  $(a,a)$ ETSs in the Tanner graphs of  random irregular LDPC codes with degree distributions $\lambda(x) = 0.4286x^2+0.5714x^3$, and $\rho(x) = x^6$, and with different block lengths in comparison with the expected values from Theorem \ref{NT8}}
\begin{center}
\scalebox{1}{
\begin{tabular}{ |c||l|l|l|l|l||  c|  }
\hline
$(a,a)$ ETS   & \multicolumn{ 5}{  c| }{  Block Length}                      &   Expected values  \\
 \cline{2-6} classes         & 500  & 1000     &  4000   &10000   & 20000      & from Theorem \ref{NT8}\\
 \hline \hline
$(3,3)$                     & 26    & 22      & 18      & 22     &  31         & 22.7   \\
$(4,4)$                     & 341   & 287     & 216     & 272    &  304        & 291.5 \\
$(5,5)$                     &3979   & 3158    & 2340    & 2759   &  2883       & 3244 \\ \hline

\end{tabular}
}
\end{center}
\label{Ta5}
\end{table}
\section{Conclusion}\label{S7}

In this paper, we studied the asymptotic behavior of local structures within the Tanner graph of randomly constructed regular and irregular LDPC codes, as the code's block length tends to infinity.
Examples of such structures are different categories of trapping sets, such as stopping sets, elementary trapping sets, or absorbing sets, that are harmful in the error floor region of LDPC codes.
We derived a simple asymptotic relationship for the expected number of
such structures based on the difference between the number of nodes and the number of edges of the structure.
This was then related to the number of cycles in the structure, where we demonstrated that depending on the structure having, zero, one, or more than one cycle, the asymptotic expected value is infinity, a constant non-zero value, or zero, respectively. This general result was then applied to different categories of trapping sets to derive more specific results on the asymptotic expected values of different structures and classes of structures.
In particular, for the case where the asymptotic expected value of a structure is a non-zero constant, we derived the exact value or approximations for the constant value in terms of the girth and the degree distributions of the ensemble. For these derivations, we used the existing asymptotic results of \cite{dehghan2016new} on the average number of cycles. We also extended the results of \cite{dehghan2016new} to count the number of cycles with specific combination of variable and check node degrees. Moreover, to count the number of elementary trapping sets, we formulated the counting problem recursively, and solved it using a generalization of Catalan numbers. We also demonstrated through numerical results that the asymptotic results derived in this paper are rather accurate even at finite block lengths.

An important aspect of this work was to demonstrate that different (non-isomorphic) structures within the same class of trapping sets can, in general, behave differently, in the asymptotic regime of infinite block length. This was not investigated in previous studies where the focus had been on the asymptotic behavior of the whole class rather than its individual members. In addition, to the best of our knowledge, the results presented here, are the most general in the literature, in the sense that the framework developed in this work is applicable to any local substructure of a Tanner graph. Previous works were limited to specific categories of substructures such as stopping sets, elementary trapping sets or absorbing sets. In fact, the results presented here for leafless elementary trapping sets, which are the vast majority of dominant trapping sets over the AWGN channel and the BSC, are the first of their kind. We also believe that the results presented in this work on the cases where the asymptotic expected multiplicity of a structure is a non-zero constant are the first of their kind, in the sense that, while the fact that the asymptotic average multiplicity of a certain class of structures tends to a constant was reported in the existing literature, the exact value of that constant was not determined. In this work, the exact value or an estimate for such constants is provided as a function of girth and degree distributions of the ensemble.


\begin{thebibliography}{10}
	
	\bibitem{Di}
	C.~Di, D.~Proietti, I.~E. Telatar, T.~J. Richardson, and R.~L. Urbanke,
	``Finite-length analysis of low-density parity-check codes on the binary
	erasure channel,'' {\em IEEE Trans. Inf. Theory}, vol.~48, no.~6,
	pp.~1570--1579, Jun. 2002.
	
	\bibitem{MR2991821}
	M.~Karimi and A.~H. Banihashemi, ``Efficient algorithm for finding dominant
	trapping sets of {LDPC} codes,'' {\em IEEE Trans. Inform. Theory}, vol.~58,
	no.~11, pp.~6942--6958, Nov. 2012.
	
	\bibitem{KB}
	M.~Karimi and A.~H. Banihashemi, ``On characterization of elementary trapping
	sets of variable-regular {LDPC} codes,'' {\em IEEE Trans. Inf. Theory},
	vol.~60, no.~9, pp.~5188--5203, Sept. 2014.
	
	\bibitem{hashemi2015new}
	Y.~Hashemi and A.~H. Banihashemi, ``New characterization and efficient
	exhaustive search algorithm for leafless elementary trapping sets of
	variable-regular {LDPC} codes,'' {\em IEEE Trans. Inform. Theory}, vol.~62,
	no.~12, pp.~6713 -- 6736, Dec. 2016.
	
	\bibitem{Y-Arxiv}
	Y.~Hashemi and A.~H. Banihashemi, ``Characterization and efficient exhaustive
	search algorithm for elementary trapping sets of irregular {LDPC} codes,''
	{\em IEEE Trans. Inf. Theory}, vol.~64, no.~5, pp.~3411--3430, May 2018.
	
	\bibitem{xiao2007estimation}
	H.~Xiao and A.~H. Banihashemi, ``Estimation of bit and frame error rates of
	finite-length low-density parity-check codes on binary symmetric channels,''
	{\em IEEE Trans. Commun.}, vol.~55, no.~12, pp.~2234--2239, Dec. 2007.
	
	\bibitem{dolecek2010analysis}
	L.~Dolecek, Z.~Zhang, V.~Anantharam, M.~J. Wainwright, and B.~Nikolic,
	``Analysis of absorbing sets and fully absorbing sets of array-based {LDPC}
	codes,'' {\em IEEE Trans. Inf. Theory}, vol.~56, no.~1, pp.~181--201, Jan.
	2010.
	
	\bibitem{mcgregor2010hardness}
	A.~McGregor and O.~Milenkovic, ``On the hardness of approximating stopping and
	trapping sets,'' {\em IEEE Trans. Inform. Theory}, vol.~56, no.~4,
	pp.~1640--1650, Apr. 2010.
	
	\bibitem{dehghan2018hardness}
	A.~Dehghan and A.~H. Banihashemi, ``Finding leafless elementary trapping sets
	and elementary absorbing sets of {LDPC} codes is hard,'' in {\em Proc. IEEE
		International Symposium on Information Theory, Colorado, USA, June, 2018}.
	
	\bibitem{krishnan2007computing}
	M.~K. Krishnan and P.~Shankar, ``Computing the stopping distance of a {T}anner
	graph is {NP}-hard,'' {\em IEEE Trans. Inform. Theory}, vol.~53, no.~6,
	pp.~2278--2280, May. 2007.
	
	\bibitem{MR2582870}
	E.~Rosnes and {\O}.~Ytrehus, ``An efficient algorithm to find all small-size
	stopping sets of low-density parity-check matrices,'' {\em IEEE Trans.
		Inform. Theory}, vol.~55, no.~9, pp.~4167--4178, Sep. 2009.
	
	\bibitem{rosnes2009efficient}
	E.~Rosnes and {\O}.~Ytrehus, ``An efficient algorithm to find all small-size
	stopping sets of low-density parity-check matrices,'' {\em IEEE Trans.
		Inform. Theory}, vol.~55, no.~9, pp.~4167--4178, Sep. 2009.
	
	\bibitem{zhang2011efficient}
	X.~Zhang and P.~H. Siegel, ``Efficient algorithms to find all small error-prone
	substructures in {LDPC} codes,'' in {\em Proc. Global Telecommun. Conf.,
		Houston, TX, USA, Dec. 2011}, pp.~1--6.
	
	\bibitem{rosnes2012addendum}
	E.~Rosnes, {\O}.~Ytrehus, M.~A. Ambroze, and M.~Tomlinson, ``Addendum to “an
	efficient algorithm to find all small-size stopping sets of low-density
	parity-check matrices”,'' {\em IEEE Trans. Inform. Theory}, vol.~58, no.~1,
	pp.~164--171, Jan. 2012.
	
	\bibitem{kyung2012finding}
	G.~B. Kyung and C.-C. Wang, ``Finding the exhaustive list of small fully
	absorbing sets and designing the corresponding low error-floor decoder,''
	{\em IEEE Trans. Commun.}, vol.~60, no.~6, pp.~1487--1498, Jun. 2012.
	
	\bibitem{rosnes2014minimum}
	E.~Rosnes, M.~A. Ambroze, and M.~Tomlinson, ``On the minimum/stopping distance
	of array low-density parity-check codes,'' {\em IEEE Trans. Inform. Theory},
	vol.~60, no.~9, pp.~5204--5214, Sep. 2014.
	
	\bibitem{hashemi2015characterization}
	Y.~Hashemi and A.~H. Banihashemi, ``On characterization and efficient
	exhaustive search of elementary trapping sets of variable-regular {LDPC}
	codes,'' {\em IEEE Comm. Lett.}, vol.~19, no.~3, pp.~323--326, Mar. 2015.
	
	\bibitem{BM}
	D.~Burshtein and G.~Miller, ``Asymptotic enumeration methods for analyzing
	{LDPC} codes,'' {\em IEEE Trans. Inf. Theory}, vol.~50, no.~6,
	pp.~1115--1131, Jun. 2004.
	
	\bibitem{orlitsky2005stopping}
	A.~Orlitsky, K.~Viswanathan, and J.~Zhang, ``Stopping set distribution of
	{LDPC} code ensembles,'' {\em IEEE Trans. Inform. Theory}, vol.~51, no.~3,
	pp.~929--953, March. 2005.
	
	\bibitem{MR2292873}
	O.~Milenkovic, E.~Soljanin, and P.~Whiting, ``Asymptotic spectra of trapping
	sets in regular and irregular {LDPC} code ensembles,'' {\em IEEE Trans.
		Inform. Theory}, vol.~53, no.~1, pp.~39--55, Jan. 2007.
	
	\bibitem{MR2810289}
	S.~Abu-Surra, D.~Divsalar, and W.~E. Ryan, ``Enumerators for protograph-based
	ensembles of {LDPC} and generalized {LDPC} codes,'' {\em IEEE Trans. Inform.
		Theory}, vol.~57, no.~2, pp.~858--886, Jan. 2011.
	
	\bibitem{amiri2013asymptotic}
	B.~Amiri, C.-W. Lin, and L.~Dolecek, ``Asymptotic distribution of absorbing
	sets and fully absorbing sets for regular sparse code ensembles,'' {\em IEEE
		Trans. Inform. Theory}, vol.~61, no.~2, pp.~455--464, Feb. 2013.
	
	\bibitem{BS}
	B.~K. Butler and P.~H. Siegel, ``Error floor approximation for {LDPC} codes in
	the {AWGN} channel,'' {\em IEEE Trans. Inf. Theory}, vol.~60, no.~12,
	pp.~7416--7441, Dec. 2014.
	
	\bibitem{dehghan2016new}
	A.~Dehghan and A.~H. Banihashemi, ``On the {T}anner graph cycle distribution of
	random {LDPC}, random protograph-based {LDPC}, and random quasi-cyclic {LDPC}
	code ensembles,'' {\em IEEE Trans. Inf. Theory}, vol.~64, no.~6,
	pp.~4438--4451, June 2018.
	
	\bibitem{MR1098222}
	P.~Hilton and J.~Pedersen, ``Catalan numbers, their generalization, and their
	uses,'' {\em Math. Intelligencer}, vol.~13, no.~2, pp.~64--75, 1991.
	
	\bibitem{MR3015124}
	Z.~Cvetkovski, {\em Inequalities}.
	\newblock Springer, Heidelberg, 2012.
	
	\bibitem{specht1960theorie}
	W.~Specht, ``Zur theorie der elementaren mittel,'' {\em Mathematische
		Zeitschrift}, vol.~74, no.~1, pp.~91--98, 1960.
	
	\bibitem{Mackay}
	D.~J. MacKay, ``Encyclopedia of sparse graph codes,'' accessed on Sep. 28,
	2016. [Online]. Available:
	http://www.inference.phy.cam.ac.uk/mackay/codes/data.html.
	
\end{thebibliography}
\end{document}